\documentclass{biometrika-full}

\usepackage{amsmath}


\usepackage{newtxtext}
\usepackage[subscriptcorrection]{newtxmath}

\graphicspath{{./art/}}





\usepackage{mystyle-bka}
\usepackage{titlesec}
\titlespacing{\paragraph}{%
  0pt}{
  0.2\baselineskip}{
  1em}

\titlespacing*{\section}
  {0pt}{\baselineskip}{\baselineskip}
\titlespacing*{\subsection}
  {0pt}{\baselineskip}{\baselineskip}
\captionsetup{belowskip=-5pt}

\begin{document}

\jname{}
\jyear{}
\jvol{}
\jnum{}
\cyear{}
\accessdate{}

\received{}
\revised{}

\markboth{Du et~al.}{Post-integrated inference}

\title{Assumption-Lean Post-Integrated Inference\\ with Surrogate Control Outcomes}

\author{Jin-Hong Du}
\affil{Department of Statistics and Actuarial Science \& Institute of Data Science,\\ The University of Hong Kong, Hong Kong SAR, China}

\author{\and Kathryn Roeder \and Larry Wasserman}

\affil{Department of Statistics and Data Science, Carnegie Mellon University,\\ Pittsburgh, PA 15213, USA
}

\maketitle

\begin{abstract}
    Data integration methods aim to extract low-dimensional embeddings from high-dimensional outcomes to remove unwanted variations, such as batch effects and unmeasured covariates, across heterogeneous datasets. However, multiple hypothesis testing after integration can be biased due to data-dependent processes.
    We introduce a robust post-integrated inference method that accounts for latent heterogeneity by utilizing control outcomes. 
    Leveraging causal interpretations, we derive nonparametric identifiability of the direct effects using negative control outcomes.
    By utilizing surrogate control outcomes as an extension of negative control outcomes, we develop semiparametric inference on projected direct effect estimands, accounting for hidden mediators, confounders, and moderators.
    These estimands remain statistically meaningful under model misspecifications and with error-prone embeddings.
    We provide bias quantifications and finite-sample linear expansions with uniform concentration bounds. 
    The proposed doubly robust estimators are consistent and efficient under minimal assumptions and potential misspecification, facilitating data-adaptive estimation with machine learning algorithms. 
    Our proposal is evaluated using random forests through simulations and analysis of single-cell CRISPR perturbed datasets, which may contain potential unmeasured confounders.
\end{abstract}

\begin{keywords}
Batch correction;
Confounder adjustment;
Data integration; 
Hypothesis testing;
Latent embedding;
Model-free inference.
\end{keywords}


\section{Introduction}    
In the big data era, integrating information from multiple heterogeneous sources has become increasingly crucial for achieving larger sample sizes and more diverse study populations. 
    The applications of data integration are in a variety of fields, including but not limited to causal inference on heterogeneous populations \citep{shi2023data}, survey sampling \citep{yang2020doubly}, health policy \citep{paddock2024statistical}, retrospective psychometrics \citep{howe2023retrospective}, and multi-omics biological science \citep{du2022robust}.
    Data integration methods have been proposed to mitigate the unwanted effects of heterogeneous datasets and unmeasured covariates, recovering the common variation across datasets.
    However, a critical and often overlooked question is whether reliable statistical inference can be made from integrated data.
    Directly performing statistical inference on integrated outcomes and covariates of interest fails to account for the complex correlation structures introduced by the data integration process, often leading to improper analyses that incorrectly assume the corrected data points are independent \citep{li2023overcoming}.

While data integration is widely utilized in various fields, our paper focuses on a challenging scenario involving high-dimensional outcomes. 
Particularly in the context of genomics, experimental constraints often necessitate the collection of data in multiple batches \citep{luo2018batch,luecken2022benchmarking}.
Batch correction and data integration methods are commonly used in genomics to recover the \emph{low-dimensional embeddings} or manifolds of each observation from the \emph{high-dimensional outcomes}.
The naive approach uses a batch indicator as a covariate in a regression model for inference, which may not be sufficient for adjusting for batch effects and unmeasured covariates \citep{li2023overcoming}.
Instead, two-step methods are commonly employed in practice as a separate data preprocessing step to produce integrated data, which can then be utilized for downstream inference.
For instance, design-based methods, such as Combat \citep{johnson2007adjusting} and BUS \citep{luo2018batch}, incorporate the batch or unknown subtype indicator into hierarchical Bayesian models, providing location and scale corrections.
Additionally, design-free methods, including RUV \citep{gagnon2012using} and SVA \citep{leek2012sva}, directly estimate the latent confounding factors, allowing users to utilize the estimated latent variables as additional covariates for downstream inference.
These methods apply to samples that share the same underlying biological variability, which is our focus in this paper; see \Cref{fig:overview} for an illustration.

\begin{figure}[!t]
    \centering
    \includegraphics[width=0.7\linewidth]{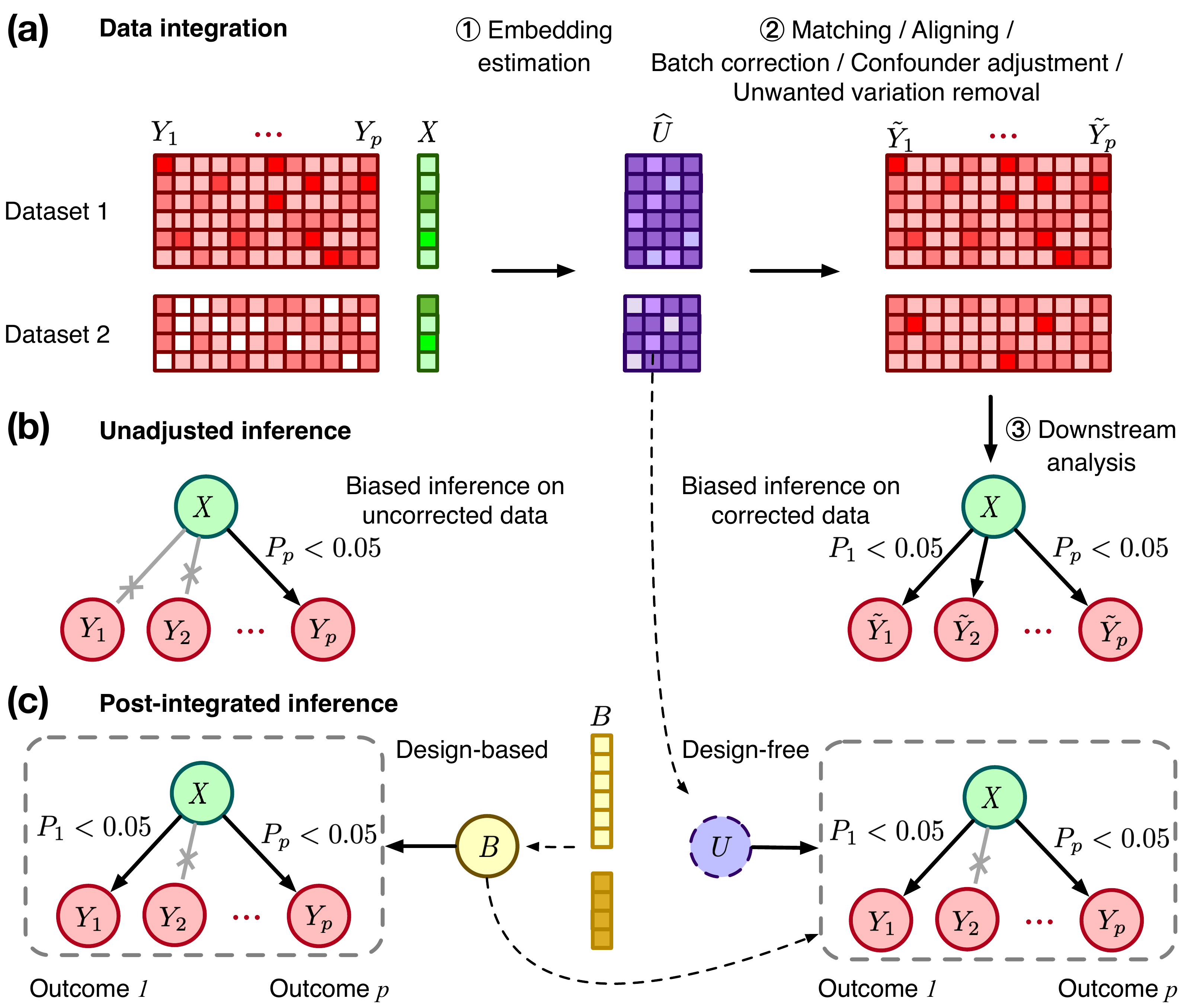}
    \caption{
    Overview of the post-integrated inference problem.    
    \textbf{(a)} Data integration utilizes multiple outcomes $Y=(Y_1,\ldots,Y_p)^{\top}$ and covariate $X$ of interest to estimate the unaligned embeddings $\hat{U}$ (e.g., top principal components of $Y$), and provides integrated outcomes $\tilde{Y}=(\tilde{Y}_1,\ldots,\tilde{Y}_p)^{\top}$ for downstream analysis.
    \textbf{(b)} Inference on the direct associations between $Y_j$'s and $X$, and those between $\tilde{Y}_j$'s and $X$ may be biased because of batch effects and observational dependency induced by data integration processes, respectively.
    \textbf{(c)} Post-integrated inference includes two strategies: the design-based approach that includes a batch indicator through a statistical model and the design-free approach that first estimates the latent embedding $\hat{U}$ and then treats it as extra covariates as a proxy of $U$ for downstream inference (the batch indicator can also be used as an observed confounder), where the latter is our focus.}
    \label{fig:overview}
\end{figure}

Despite different procedures and output formats, nearly all batch correction methods utilize information from multiple outcomes to estimate and align the underlying embeddings of observations. This approach is closely related to unmeasured confounder adjustment, particularly when each observation is viewed as a single dataset. Over the past decades, researchers have explored various methods to address unmeasured confounders in statistical analysis. In the presence of multiple outcomes, deconfounding techniques primarily employ two strategies: incorporating known negative control outcomes or leveraging sparsity assumptions \citep{wang2017confounder,zhou2024promises}. Additionally, a line of research on proximal causal inference uses both negative control outcomes and/or exposures for deconfounding \citep{miao2018identifying}; see a review of related work in \Cref{app:related-work}. 
This paper focuses specifically on the surrogate control approach in the context of multiple outcomes, where the goal is to directly estimate and adjust for latent factors that may confound the relationships between treatment outcomes.

Mathematically, a high-dimensional outcome vector $Y \in \RR^p$ is often related to a covariate vector $X\in\RR^d$ and an unobserved low-dimensional latent vector $U \in \RR^r$.
Here, $X$ includes variables such as disease status or treatment, and $U$, frequently referred to as the embedding vector, captures both the batch effects and the unmeasured covariates.
Both of them serve as a compact representation of the outcome $Y$, with the dimensionality of the outcome space being significantly larger than that of the covariate and latent space, i.e., $p \gg d$ and $p \gg r$.
Differences in how data are collected across datasets can result in shifts or distortions in the distribution of the unobserved variable $U$, and can potentially affect the distribution of $X$ as well.
Our primary interest lies in the direct associations or causal relationships between the outcome $Y_j$ and the covariate $X$ for $j=1,\ldots,p$, after adjusting for the difference induced by unwanted variation $U$.
When $X$ and $U$ are independent, the problem would be trivial because the direct effects can be estimated by regressing $Y_j$'s on $X$.
However, when $X$ and $U$ are dependent, the direct regression approach targets the total effects and provides a biased estimate of the direct effects.
Hence, proper data integration methods need to estimate $U$ for outcome alignment from different sources and for multiple hypothesis testing.

A common strategy to perform inference on the direct effects of $X$ on $Y$ is to employ two-step procedures, which practitioners widely favor. 
These methods first estimate the latent embedding $\hat{U}$ from high-dimensional (negative control) outcomes and then treat it as an additional covariate for downstream inference.
However, this sequential approach propagates estimation error: the uncertainty from the first step is typically ignored in the second, potentially invalidating the final statistical conclusions.
Specifically, the estimation of latent embeddings $U$ and the subsequent statistical inference are both contingent on the assumptions made by their respective models. 
If either model is misspecified, the final inference results can be significantly biased. 
For instance, varying choices of hyperparameters, such as the latent dimension, can affect the accuracy of the first-stage estimation.
It is, therefore, critical to understand whether such approaches are effective in more general settings and how to address the limitations of existing post-integrated inference methods under possible misspecification.

In this paper, we investigate the validity of statistical inference on high-dimensional outcomes when unobserved confounding effects are estimated from surrogate control outcomes, thereby providing theoretical guarantees for two-step post-integrated inference. 
Further, we aim to provide a framework that not only ensures effective batch correction but also maintains the integrity and reliability of statistical inference under minimal assumptions about the data-generating processes and prediction models.
This will allow researchers to retain the statistical power of their analyses while providing greater confidence in the validity of their inferences from integrated data.

To demonstrate the challenges in post-integrated inference, we analyze high-throughput single-cell CRISPR data from a study on gene perturbations related to autism spectrum disorder and their effects on neuronal differentiation (\Cref{sec:real-data}). 
In this example, one cell can be viewed as a single dataset, where the heterogeneity among cells may not be fully explained by observed covariates.
Our analysis focuses on testing nonlinear associations between 4163 genes and \emph{PTEN} perturbation after accounting for covariates in neural development and unwanted variations from heterogeneous observations.

\begin{figure}[!t]
    \centering
    \includegraphics[width=0.7\linewidth]{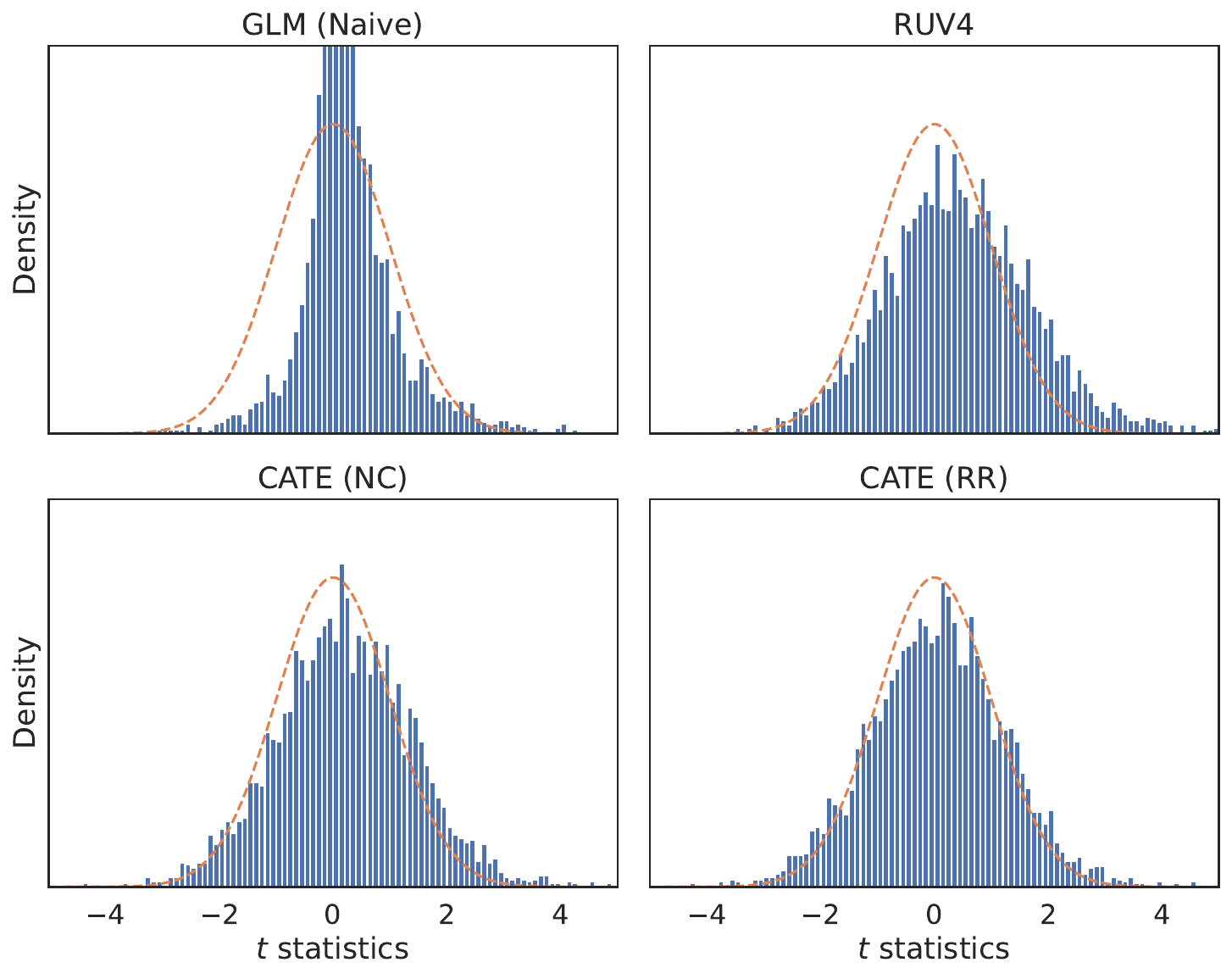}
    \caption{Histogram of $t$-statistics of \emph{PTEN} perturbation on 8320 cells and 4163 genes by four different confounder adjustment methods.
    The orange dashed curves represent the density of the standard normal distribution.
    See \Cref{sec:real-data} for more details about the methods and experiment setting.}
    \label{fig:stat-other}
\end{figure}

We would expect that different well-calibrated methods, when applied to the \emph{same} data, would all have sufficient power to detect a small proportion of stronger signals on the remaining outcomes $Y_{\cC^c}$.
\Cref{fig:stat-other} illustrates t-statistic distributions from four different methods. The unadjusted inference method yields overly conservative test statistic distributions compared to the expected $\cN(0,1)$ distribution. While batch correction and confounder adjustment methods produce distributions closer to the standard normal, some show anti-conservative tendencies. 
Importantly, only about half of the significant tests (p-values$ <0.05$) are consistent across the three confounder adjustment methods, raising concerns about their reliability.
This inconsistency stems from varied model assumptions and algorithms tailored to specific data models, which may be misspecified for sparse single-cell data or due to inaccurate estimation of the number of latent factors.
This unreliability motivates us to develop a robust statistical framework that utilizes embeddings from existing data integration methods to mitigate misspecification issues, thereby ensuring valid statistical inference and enhancing current post-integration inference methodologies.

Our work makes three main contributions. 
First, we establish nonparametric identification conditions using negative control outcomes; building on this, we propose an assumption-lean post-integrated inference (\PII) framework that relaxes standard requirements by leveraging surrogate control outcomes to address hidden mediators/confounders/moderators and batch-driven heterogeneity.
Second, we quantify the statistical error induced by using estimated embeddings in target estimands, providing bias characterization under general conditions and deterministic finite-sample bounds for linear and partial linear settings.
Third, we develop efficient semiparametric inference procedures for estimands involving estimated covariates under both linear and nonlinear link functions, enabling valid asymptotic inference for multiple treatments and high-dimensional outcomes.

\begin{figure}[!t]
    \centering
    \begin{tikzpicture}[scale=0.15]
    \tikzstyle{every node}+=[inner sep=0pt]
    \node at (15,-15) {(a)};
    \draw [black] (37.1,-33.2) circle (3);
    \draw (37.1,-33.2) node {$Y$};
    \draw [black] (21.4,-18.2) circle (3);
    \draw (21.4,-18.2) node {$X$};
    \draw [black,dashed] (21.4,-33.2) circle (3);
    \draw (21.4,-33.2) node {$U$};
    \draw [black] (23.57,-20.27) -- (34.93,-31.13);
    \fill [black] (34.93,-31.13) -- (34.7,-30.21) -- (34.01,-30.94);
    \draw [black] (24.4,-33.2) -- (34.1,-33.2);
    \fill [black] (34.1,-33.2) -- (33.3,-32.7) -- (33.3,-33.7);
    \draw [black] (21.4,-30.2) -- (21.4,-21.2);
    \fill [black] (21.4,-30.2) -- (20.9,-29.4) -- (21.9,-29.4);
    \end{tikzpicture}        
    \hspace{1cm}
    \begin{tikzpicture}[scale=0.15]
    \tikzstyle{every node}+=[inner sep=0pt]
    \node at (15,-15) {(b)};
    \draw [black] (37.1,-33.2) circle (3);
    \draw (37.1,-33.2) node {$Y$};
    \draw [black] (21.4,-18.2) circle (3);
    \draw (21.4,-18.2) node {$X$};
    \draw [black,dashed] (21.4,-33.2) circle (3);
    \draw (21.4,-33.2) node {$U$};
    \draw [black] (23.57,-20.27) -- (34.93,-31.13);
    \fill [black] (34.93,-31.13) -- (34.7,-30.21) -- (34.01,-30.94);
    \draw [black] (24.4,-33.2) -- (34.1,-33.2);
    \fill [black] (34.1,-33.2) -- (33.3,-32.7) -- (33.3,-33.7);
    \draw [black] (21.4,-30.2) -- (21.4,-21.2);
    \fill [black] (21.4,-21.2) -- (20.9,-22) -- (21.9,-22);
    \end{tikzpicture}
    \hspace{1cm}
    \begin{tikzpicture}[scale=0.15]
    \tikzstyle{every node}+=[inner sep=0pt]
    \node at (15,-15) {(c)};
    \draw [black] (37.1,-33.2) circle (3);
    \draw (37.1,-33.2) node {$Y$};
    \draw [black] (21.4,-18.2) circle (3);
    \draw (21.4,-18.2) node {$X$};
    \draw [black,dashed] (21.4,-33.2) circle (3);
    \draw (21.4,-33.2) node {$U$};
    \draw [black] (23.57,-20.27) -- (34.93,-31.13);
    \fill [black] (34.93,-31.13) -- (34.7,-30.21) -- (34.01,-30.94);
    \draw [black] (24.4,-33.2) -- (34.1,-33.2);
    \fill [black] (34.1,-33.2) -- (33.3,-32.7) -- (33.3,-33.7);    
    \end{tikzpicture}

    \vspace{3mm}
    
        \begin{tikzpicture}[scale=0.15]
        \tikzstyle{every node}+=[inner sep=0pt]
        \node at (10,-13) {(d)};
        \draw [black] (17.1,-14.5) circle (3);
        \draw (17.1,-14.5) node {$X$};
        \draw [black,dashed] (17.1,-31.5) circle (3);
        \draw (17.1,-31.5) node {$U$};
        \draw [black] (34.2,-31.5) circle (3);
        \draw (34.2,-31.5) node {$Y_{\cC}$};
        \draw [black] (34.2,-14.5) circle (3);
        \draw (34.2,-14.5) node {$Y_{\cC^c}$};
        \draw [black] (20.1,-14.5) -- (31.2,-14.5);
        \fill [black] (31.2,-14.5) -- (30.4,-14) -- (30.4,-15);
        \draw [black] (19.23,-29.38) -- (32.07,-16.62);
        \fill [black] (32.07,-16.62) -- (31.15,-16.82) -- (31.86,-17.53);
        \draw [black] (20.1,-31.5) -- (31.2,-31.5);
        \fill [black] (31.2,-31.5) -- (30.4,-31) -- (30.4,-32);
        \draw [black] (17.1,-28.5) -- (17.1,-17.5);
        \fill [black] (17.1,-17.5) -- (16.6,-18.3) -- (17.6,-18.3);
        \end{tikzpicture}
        \hspace{1cm}
        \begin{tikzpicture}[scale=0.15]
        \tikzstyle{every node}+=[inner sep=0pt]
        \node at (10,-13) {(e)};
        \draw [black] (17.1,-14.5) circle (3);
        \draw (17.1,-14.5) node {$X$};
        \draw [black,dashed] (17.1,-31.5) circle (3);
        \draw (17.1,-31.5) node {$U$};
        \draw [black] (34.2,-31.5) circle (3);
        \draw (34.2,-31.5) node {$Y_{\cC}$};
        \draw [black] (34.2,-14.5) circle (3);
        \draw (34.2,-14.5) node {$Y_{\cC^c}$};
        \draw [black] (20.1,-14.5) -- (31.2,-14.5);
        \fill [black] (31.2,-14.5) -- (30.4,-14) -- (30.4,-15);
        \draw [black] (19.23,-29.38) -- (32.07,-16.62);
        \fill [black] (32.07,-16.62) -- (31.15,-16.82) -- (31.86,-17.53);
        \draw [black] (20.1,-31.5) -- (31.2,-31.5);
        \fill [black] (31.2,-31.5) -- (30.4,-31) -- (30.4,-32);
        \draw [black,dashed] (17.1,-28.5) -- (17.1,-17.5);
        \fill [black] (17.1,-17.5) -- (16.6,-18.3) -- (17.6,-18.3);
        \fill [black] (17.1,-28.5) -- (16.6,-27.3) -- (17.6,-27.3);
        \end{tikzpicture}
    \caption{
    Post-integrated inference where the latent embedding $U$ is (a) a mediator that contributes to the indirect effect from treatment $X$ to outcome $Y$;
    (b) a confounder that affects both $X$ and $Y$; 
    and (c) a moderator that affects $Y$ but is not on the causal pathway from $X$ to $Y$.
    An embedding function $Y_{\cC}\mapsto U$ can be estimated when $Y_{\cC}\indep (Y_{\cC^c},X) \mid U$ and $Y_{\cC}\not\indep U$ if $Y_{\cC}$ are the (d) negative control outcomes when $U$ causally cause $X$ and (e) surrogate control outcomes when the causal order betwen $U$ and $X$ is arbitrary.
    }\label{fig:batch-correction}
\end{figure}

\section{Post-Integrated inference}\label{sec:PII}

    \subsection{Nonparametric identification with negative control outcomes}\label{subsec:nonpara}

    In this section, we consider post-integrated inference with negative control outcomes.
    Similar to the causal inference analysis with observational data \citep{imbens2015causal,kennedy2022semiparametric}, we consider the case when the latent variable $U$ is a confounder as in \Cref{fig:batch-correction}(b).
    Let $\cX\subseteq\RR^d$ and $\cU\subseteq\RR^r $ be the support of $X$ and $U$, respectively.
    We use $f$ to denote a generic (conditional) probability density or mass function and require causal assumptions on the observational data $(X,U,Y)$ and counterfactual outcome $Y(x)$ when $X$ is interpreted as treatment.
    
    \begin{assumption}\label{asm:causal}
        \hypertarget{asm:consistency}{(i)} (Consistency) When $X=x$, $Y=Y(x)$.
        \hypertarget{asm:positivity}{(ii)} (Positivity) $f(x\mid u) >0$ for all $u\in \cU$.
        \hypertarget{asm:nuc}{(iii)} (Latent ignorability) $X\indep Y(x)\mid U$ for all $x\in\cX$.
    \end{assumption}

\Cref{asm:causal}\hyperlink{asm:consistency}{(i)} requires that no interference among the subjects, meaning that a subject's outcome is affected by its treatment but not by how others are treated.
    \Cref{asm:causal}\hyperlink{asm:positivity}{(ii)} suggests that $X=x$ can be observed at any confounding levels of $U$ with a positive probability.
    \Cref{asm:causal}\hyperlink{asm:nuc}{(iii)} ensures that the treatment assignment is fully determined by the confounder $U$.
    These assumptions are required to estimate the counterfactual distribution of $Y(x)$ with observed variables $(X,U,Y)$ by the g-formula $ f_{Y(x)}(y) = \int f(y \mid u,x) f(u) \rd u$.
    In our problem, because $U$ is not observed, all information contained in the observed data is captured by $f(y, x)$, and one has to solve for $f(y, x, u)$ or, equivalently $f(u \mid y, x)$ from the integral equation:
    \begin{align}
        f(y, x) &= \int f(y, x, u) \rd u . \label{eq:fyx}
    \end{align}
    In general, the joint distribution $f(y, x, u)$ cannot be uniquely determined.    
    With an auxiliary variable $Z$, the approach by \citet[Theorem 1]{miao2023identifying} identifies the treatment effect from any admissible distribution under exclusion restriction, equivalence, and completeness assumptions. Here, a joint distribution $\tf (y, x, u)$ is \emph{admissible} if it conforms to the observed data distribution $f (y, x)$, that is, $f (y, x) = \int\tf (y, x, u)\rd u$.
    In particular, we leverage negative control outcomes, which can be viewed as a non-differentiable proxy for the unmeasured confounder $U$. 
    Formally, let $\cC\subset \{1,\ldots,p\}$ be the index set of known negative control outcomes and let $\cC^c = \{1,\ldots,p\} \setminus \cC$ be its complement. This partitions the full outcome vector into the negative controls $Y_{\cC} = (Y_j)_{j\in\cC}$ and the primary outcomes $Y_{\cC^c} = (Y_j)_{j\in\cC^c}$. 
    The key property of these controls is their conditional independence from the treatment and primary outcomes, given the confounder: $(Y_{\cC^c},X) \indep Y_{\cC}\mid U$.

    With this structure, the approach by \citet{miao2023identifying} applies directly to our problem by setting their auxiliary variable $Z=Y_{\cC}$. 
    However, by specializing to this negative control setting, we can extend their results to identify the counterfactual distributions under weaker assumptions, which we detail next.
    To present our first result on identification with negative control outcomes, we let $f (y, x, u ; \alpha)$ denote a model for joint distribution indexed by a possibly infinite-dimensional parameter $\alpha$, and conditional and marginal distributions are defined analogously. We require \Cref{asm:neg-outcomes}.
    \begin{assumption}\label{asm:neg-outcomes}
        The following hold for a set of control outcomes $\cC\subset\{1,\ldots,p\}$ and for any $\alpha$:  
        \hypertarget{asm:indep}{(i)} (Negative control outcomes) $(Y_{\cC^c},X)\indep Y_{\cC}\mid U$.
        \hypertarget{asm:equiv}{(ii)} (Equivalence) Any $\tf(y_{\cC}, u )$  that solves $f(y_{\cC};\alpha)=\int \tf(y_{\cC}, u ; \alpha) \rd u$ can be written as $\tf(y_{\cC}, u ) = f (y_{\cC}, v^{-1}(u); \alpha )$ for some invertible but not necessarily known function $v$.
        \hypertarget{asm:comp}{(iii)} (Completeness) 
        For all $u\in\cU$, $f(u)>0$; for any square-integrable function $g$, $\EE [ g(U) \mid Y_{\cC},X=x; \alpha] = 0$ almost surely if and only if $g(U) = 0$ almost surely.
    \end{assumption}

    The causal diagram under \Cref{asm:neg-outcomes}\hyperlink{asm:indep}{(i)} is \Cref{fig:batch-correction}(d).
    \Cref{asm:neg-outcomes}\hyperlink{asm:equiv}{(ii)} is a high-level assumption stating that at any level of covariates, the joint distribution of control outcomes and confounders lies in a class where each model is identified upon a one-to-one transformation of $U$.    
    In contrast to \citet[Assumption 2 (ii)]{miao2023identifying} that concern the joint distribution of $(X,U, Y_{\cC})$, \Cref{asm:neg-outcomes}\hyperlink{asm:equiv}{(ii)} only requires equivalence on the joint distribution of $(U,Y_{\cC})$; though we also require an extra completeness assumption on $U$ in \Cref{asm:neg-outcomes}\hyperlink{asm:comp}{(iii)} to recover an equivalent distribution of $(X,U)$.
    The completeness property plays a pivotal role in statistics \citep{lehmann2012completeness}.
    Intuitively, it precludes the degeneration of the (conditional) distributions on their supports, which guarantees the uniqueness of the solution to certain linear integral equations.    
    At different levels of $X$, \Cref{asm:neg-outcomes}\hyperlink{asm:equiv}{(ii)} requires that any infinitesimal variability in $U$ is accompanied by variability in $Y_{\cC}$, which implicitly requires the dimension of $Y_{\cC}$ to be larger than the one of $U$.
    The completeness is viewed as a regularity condition, and more detailed discussions can be found in \citet[Appendix 2]{miao2023identifying}.

    \begin{remark}[A practical interpretation of \Cref{asm:neg-outcomes}]
    To provide intuition for \Cref{asm:neg-outcomes}, consider the single-cell CRISPR experiment from \Cref{sec:real-data}. 
    The treatment ($X$) is a target gene knockdown, the latent state ($U$) captures cell-specific properties like developmental stage, the negative controls ($Y_{\cC}$) are housekeeping genes, and the primary outcomes ($Y_{\cC^c}$) are genes in the target's known pathway.

    \Cref{asm:neg-outcomes}\hyperlink{asm:indep}{(i)} requires that the housekeeping genes ($Y_{\cC}$) are not directly regulated by the treatment ($X$), but do depend on the cell state ($U$). 
    This is plausible for genes maintaining core cellular functions. 
    A violation could occur if the perturbation were toxic enough to disrupt these core functions.
    \Cref{asm:neg-outcomes}\hyperlink{asm:equiv}{(ii)}-\hyperlink{asm:comp}{(iii)} are technical conditions ensuring the housekeeping genes ($Y_{\cC}$) are informative enough to identify the low-dimensional cell state ($U$). 
    This holds if different cell states correspond to distinct expression signatures in the controls. A violation could occur if the controls respond to the cell state in a redundant manner, for instance, if their joint expression signature cannot distinguish metabolic stress from the G1 phase.
    \end{remark}
    
    Building upon the approach by \citet{miao2023identifying}, we propose a modified identification approach.

    \begin{theorem}[Nonparametric identification]\label{thm:iden-neg-outcomes}
        Under \Cref{asm:causal,asm:neg-outcomes}, for any admissible distribution $\tf(y_{\cC}, u)$ that solves $f(y_{\cC}) = \int \tf(y_{\cC}, u)\rd u$ and let $\tf(u) = \int \tf(y_{\cC}, u) \rd y_{\cC}$, there exist a unique solution $\tf(x\mid u)$ to the equation
        \begin{align}
            f(x) &= \int  \tf(x\mid u) \tf(u)\rd u.\label{eq:fyx_proxy}
        \end{align}
        Let $\tf(y_{\cC}, u \mid x) = \tf(y_{\cC}, u)\tf(x\mid u)/f(x)$, then there exists a unique solution $\tf (y_{\cC^c} \mid x, u)$ to the equation
        \begin{align}
            f(y\mid x) &= \int \tf(y_{\cC^c} \mid x,u)\tf(y_{\cC}, u \mid x) \rd u,\label{eq:fy_x_proxy}
        \end{align}
        Further, the potential outcome distribution is identified by
        \[f_{Y(x)}(y) = \int \tf(y_{\cC^c} \mid u,x)\tf(y_{\cC}, u) \rd u. \]
    \end{theorem}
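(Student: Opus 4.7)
The plan is to follow the constructive identification strategy of \citet{miao2023identifying}, specialised to the negative control structure so that the weaker equivalence \Cref{asm:neg-outcomes}\hyperlink{asm:equiv}{(ii)} can be combined with the completeness \Cref{asm:neg-outcomes}\hyperlink{asm:comp}{(iii)} to pin down the remaining conditionals. By \hyperlink{asm:equiv}{(ii)}, any admissible $\tf(y_{\cC}, u)$ takes the form $f(y_{\cC}, v^{-1}(u))$ for an unknown invertible $v$, so $\tf(u) = \int \tf(y_{\cC}, u)\rd y_{\cC}$ is the pushforward of $f(u)$ under $v$. I would identify $\tf(x\mid u)$ and $\tf(y_{\cC^c}\mid x, u)$ sequentially as the $v$-transformed versions of $f(x\mid v^{-1}(u))$ and $f(y_{\cC^c}\mid x, v^{-1}(u))$, and then assemble the $g$-formula.

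For the first step, I would use the negative control property \hyperlink{asm:indep}{(i)}, namely $X\indep Y_{\cC}\mid U$, to factor the observable joint as
\[ f(y_{\cC}, x) = \int f(y_{\cC}\mid u)\, f(x\mid u)\, f(u)\rd u = \int \tf(y_{\cC}\mid u)\, \tf(x\mid u)\, \tf(u)\rd u \]
after the change of variables $u = v(u')$, where $\tf(x\mid u) := f(x\mid v^{-1}(u))$. Marginalising $y_{\cC}$ then recovers \eqref{eq:fyx_proxy}, so $\tf(x\mid u)$ constructed this way is a solution. Uniqueness follows from completeness: any two solutions of the $(y_{\cC}, x)$-level equation differ by an $h(u)$ with $\EE[h(U)\mid Y_{\cC}] = 0$ almost surely, which \hyperlink{asm:comp}{(iii)} forces to be identically zero.

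Given $\tf(x\mid u)$, I set $\tf(y_{\cC}, u\mid x) = \tf(y_{\cC}, u)\tf(x\mid u)/f(x)$, which is well defined by positivity \hyperlink{asm:positivity}{(ii)}. The negative control relation also yields the factorisation $f(y\mid x, u) = f(y_{\cC^c}\mid x, u)\, f(y_{\cC}\mid u)$, so the $v$-transformed candidate $\tf(y_{\cC^c}\mid x, u) := f(y_{\cC^c}\mid x, v^{-1}(u))$ solves \eqref{eq:fy_x_proxy}; the same completeness argument, now conditioned on $X=x$, delivers uniqueness via \hyperlink{asm:comp}{(iii)}. Finally, I would invoke consistency \hyperlink{asm:consistency}{(i)} and latent ignorability \hyperlink{asm:nuc}{(iii)} to obtain the $g$-formula $f_{Y(x)}(y) = \int f(y\mid u, x)\, f(u)\rd u$, substitute the factorisation of $f(y\mid u, x)$ and $f(y_{\cC}\mid u)\, f(u) = f(y_{\cC}, u)$, and rewrite the integrand in the $v$-transformed coordinates to conclude $f_{Y(x)}(y) = \int \tf(y_{\cC^c}\mid u, x)\,\tf(y_{\cC}, u)\rd u$, an expression that does not depend on the specific choice of $v$.

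The main obstacle is bookkeeping the unknown transformation $v$ (and its Jacobian) consistently through each change-of-variables step and verifying that the completeness condition \hyperlink{asm:comp}{(iii)} — stated conditionally on $(Y_{\cC}, X)$ — is still strong enough to deliver uniqueness for $\tf(x\mid u)$, where the natural operator is posed on $\tilde{U}\mid Y_{\cC}$ with $X$ integrated out. This is the technical point that separates the argument from \citet{miao2023identifying}, who use an equivalence assumption directly on the joint distribution of $(X, U, Y_{\cC})$ and therefore sidestep the need for completeness at this stage.
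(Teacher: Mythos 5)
Your proposal follows essentially the same route as the paper's proof: existence of the two conditionals via the $v$-transformed candidates $f(x\mid v^{-1}(u))$ and $f(y_{\cC^c}\mid x, v^{-1}(u))$, uniqueness via the completeness condition, and the $g$-formula derivation applied to $(Y,X,v(U))$ to assemble the counterfactual distribution. The first-stage completeness subtlety you flag is genuine but is handled no more carefully in the paper itself, which asserts that $\int(\tf_1(x\mid u)-\tf_2(x\mid u))\tf(u)\,\rd u=0$ forces the difference to vanish ``because of the completeness of $\tf(u)$''---a step that, as you observe, is only justified once \eqref{eq:fyx_proxy} is read jointly with $y_{\cC}$ so that the conditional completeness in \Cref{asm:neg-outcomes}\hyperlink{asm:comp}{(iii)} applies.
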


    \Cref{thm:iden-neg-outcomes} suggests that if the joint distribution of $(Y_{\cC},U)$ can be estimated up to inverse transformation, then one can recover the joint distribution of potential outcome $Y(x)$.
    Based on \Cref{thm:iden-neg-outcomes}, an operational strategy is given in two steps.
    The first step is to derive $\tf(y_{\cC},u)$, which retrieves a proxy of $U$ using the information from multiple control outcomes $Y_{\cC}$.
    Given $\tf(y_{\cC},u)$, the conditional treatment distribution $\tf(x\mid u)$ and the condition outcome distribution can be obtained by solving integral equations \eqref{eq:fyx_proxy} and \eqref{eq:fy_x_proxy}.
    Even though $\tf(y_{\cC},u)$ might not be unique, the estimated conditional distributions $\tf(x\mid u)$ and $\tf (y_{\cC^c} \mid x, u)$ are guaranteed to be unique for any given $\tf(y_{\cC},u)$.
    Motivated by the nonparametric identification condition presented in \Cref{thm:iden-neg-outcomes}, we will provide a detailed description of the deconfounding strategy for recovering the true main effect under more relaxed assumptions in the next subsection.

    The deconfounding strategy given in \Cref{thm:iden-neg-outcomes} is similar to previous negative control outcome approaches \citep{wang2017confounder,zhou2024promises} under parametric modeling assumptions.
    There are two related nonparametric causal inference frameworks, as described below.

    (i) \emph{Auxiliary variables framework \citep[Section 3]{miao2023identifying}}:
    Our nonparametric identification strategy builds upon the auxiliary variables framework; however, we extend their framework to identify counterfactual distributions under weaker assumptions in \Cref{subsec:nonpara}. 
    More specifically, Theorem 1 of \citet{miao2023identifying} aims to recover the joint distribution of three variables $(Z,X,U)$, where $Z$ is an auxiliary variable that satisfies the exclusion restriction condition $Z\indep Y_{\cC^c}\mid (X,U)$.
    When $Z$ is the negative control outcome $Y_{\cC}$, we are able to factorize the joint distribution into two conditional distributions of $X\mid U$ and $Y_{\cC}\mid U$.
    This property allows us to derive nonparametric identification with weaker assumptions in \Cref{thm:iden-neg-outcomes}.
    
    (ii) \emph{Proximal inference framework \citep{miao2018identifying,miao2024confounding}}:
    Our identification result in \Cref{thm:iden-neg-outcomes} is related to, but distinct from, proximal causal inference.
    Proximal inference provides nonparametric identification by solving an integral equation under the existence of a bridge function linking negative control outcomes and negative control exposures.
    In contrast, our approach leverages variables that simultaneously serve as negative control outcomes and negative control exposures.
    In exchange for this stronger requirement, \Cref{thm:iden-neg-outcomes} identifies the distribution of counterfactuals, whereas proximal inference typically targets the expected counterfactuals.

    In the setting of \Cref{fig:batch-correction}(d), the multivariate surrogate controls $Y_{\cC}$ satisfy both roles because there are no direct causal arrows between $Y_{\cC}$ and $Y_{\cC^c}$ and no arrow $X\to U$.
    While one could partition $Y_{\cC}$ into two disjoint subsets to play the respective roles of negative control outcomes and exposures for applying proximal inference, our approach avoids this partitioning in such applications.
    We acknowledge that in other settings it may be easier to find separate negative control outcomes and exposures (as required by proximal inference) than to find variables satisfying both roles simultaneously.

    \begin{remark}[Deconfounding with multiple treatments]
        In addition to negative control outcomes, one can also deconfound using information from null treatments \citep[Section 4]{miao2023identifying}.
        When there is a single outcome, and the information of confounders solely comes from multiple (null) treatments, we can marginalize the unknown conditional distribution $f(u \mid y, x)$ over the response $y$ to obtain $f(u \mid x) = \int f(u\mid y,x) f(y\mid x) \rd y$.
        This suggests a two-stage procedure as in Section 4 of \cite{miao2023identifying}, for successively identifying solutions $f(u,x)$ and $f(y \mid u, x)$ from two integral equations: $f(x) = \int f(u, x) \rd u$ and $f(y \mid x)  = \int f(y \mid u, x) f(u \mid x) \rd u $.
        The information used to estimate the confounders in their setting is from multiple null treatments, rather than multiple outcomes.
        For this reason, they require strong assumptions to distinguish the set of confounded treatments associated with confounders.
    \end{remark}

\subsection{Assumption-Lean semiparametric inference}\label{subsec:semipara}

    The nonparametric identification results aim to reveal the counterfactual distributions from confounded observational data, which is useful for designing general deconfounding strategies, yet remains impractical.    
    Below, we provide the semiparametric tools to adjust confounding effects in practice and perform valid and efficient inference, even if the strong causal assumptions are not perfectly met.
    A leading example of semiparametric regression models is the partially linear regression \citep{robinson1988root,hardle2000partially}:
    \begin{align}
        \EE[Y \mid X, U] &=\beta^{\top} X + h(U), \label{eq:partial-linear-model}
    \end{align}
    where $Y$ is a high-dimensional vector of responses, $X$ is a low-dimensional vector of covariates (including the treatment of interest), $U\in\RR^r$ is a low-dimensional latent vector, i.e., an unmeasured confounder, $\beta\in\RR^{d\times p}$ is the coefficient to be estimated, and $h:\RR^r \rightarrow \RR^p$ is an unknown function.
    Over the past few decades, considerable attention has been given to estimating and testing partially linear models.

    When $U$ is known, the coefficient $\beta$ can be obtained with the double residual methodology \citep{robinson1988root}, by noting that
    \[
        \EE[Y\mid X,U] - \EE[Y \mid U] = \beta^{\top} (X - \EE[X\mid U]),
    \]
    More specifically, the double residual methodology proceeds in two steps: (1) regressing $Y$ on $U$ to obtain the residual $Y- \hat{\EE}[Y\mid U]$, and regress $X$ on $U$ to obtain the residual $X-\hat{\EE}[X\mid U]$; and (2) regressing the residual $Y-\hat{\EE}[Y\mid U]$ on the residual $X-\hat{\EE}[X\mid U]$. 
    Here, the notation $\hat{\EE}$ denotes the estimated regression function.
    The resulting regression coefficient is an estimator of $\beta$.
    Intuitively, this procedure removes the confounding effect of $U$ by taking the residuals, so that the final regression only captures the relationship between $X$ and $Y$ conditional on $U$, which is $\beta$ under the partial linear model assumption.
    In the special case with binary treatments, the resulting estimator is called the E-estimator \citep{robins1992estimating}.

    Even when the model \eqref{eq:partial-linear-model} is misspecified, the estimator from the two-step procedure is directly informative about the conditional association between $X$ and $U$.
     Under mild moment assumptions on the conditional covariance matrix of $X$ given $U$, it returns a meaningful estimand 
    \begin{align}
        {\beta} &= \EE[\Cov(X\mid {U})]^{-1}\EE[\Cov(X, \EE[Y \mid X,U] \mid {U})] \notag\\
        &= \EE[\Cov(X\mid {U})]^{-1}\EE[\Cov(X, Y \mid {U})], \label{eq:beta}
    \end{align}
    which itself does not crucially rely on the restrictions imposed by the outcome model \eqref{eq:partial-linear-model}.
    Furthermore, this also allows us to relax the causal relationship as detailed in \Cref{rem:causal-rela} and the strict requirement for negative controls as defined in \Cref{asm:neg-outcomes}\hyperlink{asm:indep}{(i)}. 
    We can now proceed using the more general \emph{surrogate control outcomes}, which are defined under the weaker causal assumption that no restrictions are placed on the causal order between $U$ and $X$, as demonstrated in \Cref{fig:batch-correction}(e).

    \begin{remark}[Relexation of causal relationship]\label{rem:causal-rela}
        Under the causal setting in \Cref{subsec:nonpara}, $X$ is the only shared parent of $U$ and $Y$.
        If $U$ is a moderator as in \Cref{fig:batch-correction}, adjusting for $U$ can also help to reduce the variance.
        If $U$ is a mediator, estimand \eqref{eq:beta} coincides with the controlled direct effect under the partial linear model \eqref{eq:partial-linear-model}.
        If $U$ is a confounder, it is necessary to adjust for $U$ to have a proper interpretation of the main effect of $X$ on $Y$.
        However, in general, we will not be certain whether $U$ is a confounder or not, even if $U$ were not missing.
        In particular, each entry of $U$ can either be a confounder, a mediator, or a moderator (as in \Cref{fig:batch-correction}(e)).        
        When targeting the estimand \eqref{eq:beta}, we do not need to impose specific causal assumptions.
        In contrast, \eqref{eq:beta} allows us to relax the relationship between $U$ and $X$, as long as the variability of $X$ given $U$ persists.
    \end{remark}

   Compared to causal frameworks in the previous subsection, a key practical advantage of the above strategy is its reduced reliance on strong causal assumptions that the causal diagram has to be correctly specified. 
    The statistical estimation and inference procedures targeting the projected direct effect \eqref{eq:beta} are model-free and assumption-lean, providing meaningful results even if the underlying causal model is partially misspecified.
   Because $U$ is unmeasured, we rely on the strategy offered by \Cref{thm:iden-neg-outcomes} to estimate and perform inference with surrogate control outcomes. Our deconfounding procedure is summarized in \Cref{alg:deconfounder} for general link functions.
   Below, we describe the main steps of the procedure with an identity link as a special case.

    (i) \emph{Reduction}:
    Suppose that $\cC\subseteq\{1,\ldots,p\}$ is the set of surrogate control outcomes such that $\beta_{\cC} = 0$.
    In the first step, we aim to estimate $U$ from the surrogate control outcomes $Y_{\cC}$ independently of $X$.
    To distinguish from the previous causal setting, we call $U$ the embedding of $Y_{\cC}$.
    This typically involves learning some (nonlinear) embedding map $f_e:\RR^{|\cC|}\rightarrow\RR^r$ with $Y_{\cC}\mapsto U$.

    One can use the same set of data to learn the embedding function $\hf_e$ and obtain the transformed embedding $\hat{U}= \hf_e(Y_{\cC})$.
    For example, perform the principal component analysis and use the first few principal components as the estimation embedding $\hat{U}$.
    In a more general scenario, we can also borrow extra datasets to estimate the embedding function.
    For genomic studies, many single-cell atlases of healthy cells can be used to estimate it, which helps to improve the estimation of latent embedding and is commonly used in practice for transfer learning \citep{seurat}.

    \begin{remark}[surrogate control genes]\label{rem:ncg}
        For genomic studies, housekeeping genes can serve as surrogate control outcomes.
        Furthermore, although most genes are measured, typically only the top thousands of highly variable genes are used for subsequent differential expression testing.
        It is believed that the remaining genes with low expression exhibit similar behavior under different experimental conditions.
        As we demonstrate later in \Cref{sec:real-data}, we can ideally utilize these extra genes as surrogate control outcomes to improve statistical inference.
        Of course, there is a chance that some of the genes with low expression are indeed affected by the conditions; our framework would still provide reasonable interpretability as well as robustness against such misspecification of the surrogate controls.
    \end{remark}

    \begin{algorithm}[!t]\small
    \caption{Post-Integrated inference (\PII) with surrogate control outcomes}
        \label{alg:deconfounder}
        \begin{algorithmic}[1]
        \REQUIRE
        A data set $\cD$ that contains $N$ i.i.d. samples of $(X,Y)\in\RR^{d}\times\RR^p$,
        a set of control genes $\cC\subset\{1,\ldots,p\}$,
        and a user-specified link function $g$.
        \smallskip

        \STATE Split sample  $\cD=\cD_0\cup\cD_1$ with $|\cD_0|=m,|\cD_1|=n$ and $N=m+n$; otherwise set $\cD=\cD_0=\cD_1$ and $N=m=n$.        
 
        \STATE \textbf{Estimation of the embedding functional:} 
        Based on samples in $\cD_0$, obtain an estimate $\hf_e:\RR^{|\cC|}\rightarrow\RR^r$ for the embedding map $f_e:Y_{\cC}\mapsto U$.
    
        \STATE \textbf{Extract estimated latent embeddings:} Compute $\hat{U} = \hf_e(Y_{\cC})$ on $\cD_1$.
        
        \STATE \textbf{Semiparametric inference of the main effect estimand:} 
        Use \Cref{alg:semi} to estimate
        \begin{align*}
            \tbeta_{\cdot j} &= \EE[\Cov(X \mid \hat{U} )]^{-1} \EE[\Cov(X, g(\EE[Y_j \mid X,\hat{U}] \mid \hat{U}))]\quad (j\in\cC^c),
        \end{align*}
        and the empirical variance.
        Construct the confidence interval or compute p-values according to the asymptotic distribution of $\tbeta$.
        
        \ENSURE Return the confidence intervals or p-values.
        \end{algorithmic}
    \end{algorithm}

    (ii) \emph{Estimation}:
    In the second stage, recall that our target estimand is $\beta$ in \eqref{eq:beta}.
    Because $U$ is unobserved, the best we can do is to use $\hat{U}$ as the estimated embedding and focus on the estimand:
    \begin{align}
        \tbeta_{\cdot j} &= \EE[\Cov(X \mid \hat{U} )]^{-1} \EE[\Cov(X, Y_j \mid \hat{U})]\quad (j\in\cC^c). \label{eq:beta-Uhat}
    \end{align}
    This estimand quantifies the conditional associations of $X$ and $Y$ given $\hat{U}$.
    One would typically restrict the estimation of main effects to the complement set of control genes $\cC^c$, while for notational simplicity, we simply set $\tbeta_{\cdot \cC} = 0_{d\times |\cC|}$ and present the main effect matrix $\tbeta\in\RR^{d\times p}$ in its entirety.
    Note that for $j\in\cC$, one always has $\beta_{\cdot j} = 0_d$, because $\EE[Y_{j}\mid X,U] = \EE[Y_j \mid U]$ does not depend on $X$ and the conditional covariance between $X$ and $\EE[Y_{j}\mid X,U]$ is always zero.

    (ii) \emph{Inference}:
    In the last step, to provide uncertainty quantification, we rely on the efficient influence function for $\tbeta$, similar to E-estimator \citep{chernozhukov2018double} and two-stage least squares estimators \citep{robins1992estimating,vansteelandt2022assumption}.
    The details of semiparametric inference will be given later in \Cref{subsec:dr-semi} and \Cref{subsec:dr-semi-nonlinear} for linear and nonlinear link functions, respectively.

    \begin{remark}[Assumption-lean and model-free inference]\label{rem:model-free}
         The above procedure is minimally dependent on assumptions regarding the data-generating process. It operates independently of any underlying data model, making it truly model-free.
         To compute an estimate of \eqref{eq:beta-Uhat}, arbitrary nonparametric methods can be employed to estimate the nuisance regression function. Inference can then be performed using the efficient influence function within the semiparametric framework \citep{vansteelandt2022assumption}.
         As we will see in the next section, this approach only requires mild moment conditions on the true regression function and consistency assumptions on the nuisance function estimation.
    \end{remark}

    The procedure is straightforward to understand. However, caution is warranted for nuisance regression functions and variance estimation \citep{vansteelandt2022assumption}.
    To understand the exact conditions under which this method is effective, a more sophisticated analysis is required to quantify the bias using estimated latent embeddings. 
    Additionally, theoretical guarantees of valid inference need to take into account the presence of multivariate covariates and multiple outcomes. 
    The next section serves these purposes.

\section{Statistical properties with estimated embeddings}\label{sec:theory}
    \subsection{Bias of main effects}\label{subsec:bias}

    Before presenting our analysis of the estimation errors, we introduce several technical assumptions.
    To begin with, we consider a common probability space $(\Omega,\sF,\PP)$ and let $\hat{U}_m$ explicitly indicate the dependency of $\hat{U}$ on $m\in\NN$, which is the sample size used to estimate the embedding functional $\hf_e$.
    In general, $\hat{U}_m$ can have different dimensions than $U$; to ease our theoretical analysis, we will treat the latent dimension $r$ as known so that $\hat{U}_m\in\RR^r$.
    As we will see later, such a requirement can be weakened under certain working models.
    Let $\{\sF_m\}_{m\in\NN}$ be a filtration generated by $\{\hat{U}_m\}_{m\in\NN}$ such that $\sF_m=\sigma(\hat{U}_m)$ and $\sF_1\subseteq\sF_{2}\subseteq\cdots$, and define the sub-$\sigma$-field $\sF_{\infty}=\sigma(\cup_m\sF_m) \subseteq \sF$.
    We require the following assumption.

    \begin{assumption}[Latent embedding estimation]\label{asm:latent}
        There exists a $\sF_{\infty}$-measurable and invertible function $v$ such that $\hat{U}_m \to v(U)$ almost surely.
        Further, $\ell_m = \|\hat{U}_m - v(U)\|_{\Lp{2}} <\infty$.
    \end{assumption}

    In many scenarios, when we have prior information on the embedding function $f_e$, both the number of latent dimensions and the embedding can be consistently estimated.
    For example, consistent estimation of the number of latent variables has been well established under factor models \citep{bai2002determining} and under mixture models \citep{chen2012inference}.
    Generally, a rate of $\ell_m = \Op( m ^{-\frac{1}{2}})$ can be obtained for factor analysis when there are sufficiently many surrogate control outcomes such that $|\cC|>m$ \citep{bai2012statistical}.
    For mixture models, this reduces the need to estimate the cluster membership because one can treat the one-hot vector of cluster memberships as the embedding and the cluster centers as the loading, akin to factor analysis.
    When $f_e$ is estimated nonparametrically by $\hf_e$, the estimated embedding $\hat{U}_m$ can be viewed as nonparametrically generated covariates.
    In this context, \Cref{asm:latent} only requires the (conditional) $L_2$-norm of the estimation error $\hat{f}_e-f_e$ decays to zero in probability to ensure meaningful and accurate estimation of $U$, which is weaker than Assumption 2 of \citet{mammen2012nonparametric} that requires the (conditional) $L_{\infty}$-norm of $\hat{f}_e-f_e$ is $\op(1)$.
    Finally, we also note that one can utilize additional data sources to obtain a more accurate estimate of $\hf_e$ with a larger sample size $m$.
    In many applications, such as single-cell data analysis, the embedding function can be derived from previous studies, ensuring that $m$ is sufficiently large.

    The following \Cref{asm:moment} imposes a boundedness condition on the population quantities, and \Cref{asm:Lipshitz-cond-mean} imposes a smoothness assumption on the regression function.

    \begin{assumption}[Regularity conditions] \label{asm:moment}
        There exists constants $\bar{\sigma}\geq \sigma>0$ and $M>0$ such that $\sigma I_d \preceq \EE[\Cov(X\mid U)] \preceq \bar{\sigma} I_d$, $\sigma I_d \preceq \EE[\Cov(X\mid \hat{U}_m)]$, $\|\beta\|_{2,\infty} \leq M$, $\|X\|_{\Lp{2}}\leq M,\max_{j\in \cC^c}\|Y_j\|_{\Lp{2}}\leq~M$.
    \end{assumption}

    \begin{assumption}[Lipschitzness of regression functions]\label{asm:Lipshitz-cond-mean}
        The regression functions satisfy Lipschitz conditions:
        \begin{align*}
            \|\EE[X\mid U=u_1] - \EE[X\mid U=u_2]\| &\leq L_X\|u_1 - u_2\|\\
            \|\EE[Y_j\mid X, U=u_1] - \EE[Y_j\mid X, U=u_2]\| &\leq L_{Y}\|u_1 - u_2\|\quad (\forall\ j\in\cC^c),
        \end{align*}
        almost surely for all $u_1,u_2\in \cU$ and some constants $L_X$ and $L_{Y}$.
    \end{assumption}

    \Cref{asm:Lipshitz-cond-mean} imposes certain smoothness restrictions on the conditional expectation.    
    In certain applications, the Lipschitz condition holds for many continuous multivariate distributions.    
    For example, suppose $W$ and $V$ are jointly normally distributed with
    \begin{align*}
        \begin{pmatrix}
            W\\
            V
        \end{pmatrix} \sim 
        \cN\left( 
            \begin{pmatrix}
                \mu_W\\ \mu_V
            \end{pmatrix},
            \begin{pmatrix}
            \Sigma_W & \Sigma_{WV}\\
            \Sigma_{WV}^{\top} & \Sigma_V
        \end{pmatrix}   
        \right).
    \end{align*}
    Then $h(v)=\EE[W\mid V=v] = \mu_W + \Sigma_{WV}\Sigma_V^{-1} (v-\mu_V)$ is $L$-Lipschitz in $\ell_2$-norm, with $L= \|\Sigma_{WV}\Sigma_V^{-1}\|$.
    Other examples of such a regression function include the posterior mean of the exponential and Poisson distributions under their conjugate prior, as in Bayesian inference.
    Similar conditions have been employed for nonparametric regression with generated covariates; see, for example, Assumption 4 in \citet{mammen2012nonparametric}.
    In particular, \citet{mammen2012nonparametric} require differentiability and Lipschitz condition in $\ell_{\infty}$ of the condition expectation, which is much stronger than \Cref{asm:Lipshitz-cond-mean}.

    Consider two population coefficients $\beta$ and $\tbeta$ as defined in \eqref{eq:beta} and \eqref{eq:beta-Uhat}, respectively.
    We next quantify the difference between the two in \Cref{thm:err-bound-beta-U}.

    \begin{theorem}[Bias for two-stage regression with estimated covariates]\label{thm:err-bound-beta-U}
        Under \Cref{asm:latent,asm:moment,asm:Lipshitz-cond-mean}, when $\|\EE[X\mid \hat{U}] - \EE[X\mid U]\|_{\Lp{2}} < \sigma/(2M)$, it holds that
        \begin{align*}
            \max_{j\in \cC^c}\|\tbeta_{\cdot j} - \beta_{\cdot j}\|&\lesssim 
            \left(\| X \|_{\Lp{2}}(L_X^{\frac{1}{2}}+L_{Y}^{\frac{1}{2}})  +  \max_{j\in\cC^c}\| Y_j \|_{\Lp{2}} L_{Y}^{\frac{1}{2}} \right)\ell_m .
        \end{align*}
    \end{theorem}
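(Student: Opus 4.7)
The plan is to reduce $\tbeta_{\cdot j}-\beta_{\cdot j}$ to two $\Lp{2}$ distances between conditional means and then absorb these via a resolvent-type sandwich bound. Set $A=\EE[\Cov(X\mid U)]$, $\tilde A = \EE[\Cov(X\mid \hat U)]$, $b_j = \EE[\Cov(X,Y_j\mid U)]$, $\tilde b_j = \EE[\Cov(X,Y_j\mid \hat U)]$, and write
\[
    \tbeta_{\cdot j} - \beta_{\cdot j}
    = \tilde A^{-1}(\tilde b_j - b_j) + \tilde A^{-1}(A - \tilde A)A^{-1}b_j,
\]
so that it suffices to control $\|\tilde A^{-1}\|$, $\|\tilde b_j - b_j\|$, and $\|A - \tilde A\|_{op}$ separately. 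For invertibility of $\tilde A$, I would expand $A - \tilde A = \EE[\tilde m_X(\hat U)\tilde m_X(\hat U)^{\top} - m_X(U)m_X(U)^{\top}]$ with $m_X(u)=\EE[X\mid U=u]$ and $\tilde m_X(w)=\EE[X\mid \hat U=w]$. Splitting the difference of outer products and applying Cauchy-Schwarz with $\|X\|_{\Lp{2}}\le M$ yields $\|A - \tilde A\|_{op}\lesssim \|X\|_{\Lp{2}}\Delta_X$, where $\Delta_X := \|\tilde m_X(\hat U) - m_X(U)\|_{\Lp{2}}$. The hypothesis $\Delta_X<\sigma/(2M)$ then enforces $\|A - \tilde A\|_{op}<\sigma$, and combining with $A\succeq \sigma I_d$ via a Neumann-series expansion delivers $\|\tilde A^{-1}\|\le 2/\sigma$.

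The same telescoping applied to $\tilde b_j - b_j = \EE[m_X(U)m_{Y_j}(U) - \tilde m_X(\hat U)\tilde m_{Y_j}(\hat U)]$, with $m_{Y_j}(u) = \EE[Y_j\mid U=u]$ and $\tilde m_{Y_j}(w) = \EE[Y_j\mid \hat U=w]$, gives
\[
    \|\tilde b_j - b_j\|\;\lesssim\;\|X\|_{\Lp{2}}\Delta_{Y_j} + \|Y_j\|_{\Lp{2}}\Delta_X,
\]
where $\Delta_{Y_j} := \|\tilde m_{Y_j}(\hat U) - m_{Y_j}(U)\|_{\Lp{2}}$. Substituting these bounds, together with $\|A^{-1}\|\le 1/\sigma$, $\|\beta_{\cdot j}\|\le M$, and the moment bounds of \Cref{asm:moment}, into the resolvent decomposition controls $\|\tbeta_{\cdot j}-\beta_{\cdot j}\|$ by a constant multiple of $\|X\|_{\Lp{2}}\Delta_{Y_j} + (\|X\|_{\Lp{2}}+\|Y_j\|_{\Lp{2}})\Delta_X$, reducing the theorem to bounding $\Delta_X$ and $\Delta_{Y_j}$ by appropriate multiples of $\ell_m$.

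This final conversion is the main obstacle. I would use the variational characterization of conditional expectation, which gives $\Delta_X\le \|g(\hat U) - m_X(U)\|_{\Lp{2}}$ for every measurable $g$; the natural candidate is $g = m_X\circ v^{-1}$, available because \Cref{asm:latent} furnishes an invertible, $\sF_\infty$-measurable $v$ with $\|\hat U - v(U)\|_{\Lp{2}} = \ell_m$, whence $g(\hat U) - m_X(U) = m_X(v^{-1}(\hat U)) - m_X(v^{-1}(v(U)))$. Invoking the Lipschitzness of $m_X$ from \Cref{asm:Lipshitz-cond-mean} then reduces the bound to an $\Lp{2}$ discrepancy between $v^{-1}(\hat U)$ and $U$. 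The delicate point is that \Cref{asm:latent} does not directly provide Lipschitz control of $v^{-1}$, so a direct chaining of Lipschitz constants is not available; I would therefore combine the Lipschitz estimate with the uniform moment bound $\|m_X(U)\|_{\Lp{2}}\le \|X\|_{\Lp{2}}$ through an elementary interpolation (pairing the Lipschitz inequality $\|m_X(u_1)-m_X(u_2)\|^2\le L_X\|u_1-u_2\|(\|m_X(u_1)\|+\|m_X(u_2)\|)$ with Jensen applied to $\EE[\|\hat U-v(U)\|]$) to extract the rate $\Delta_X\lesssim L_X^{1/2}\ell_m$. The same template applied to $Y_j$, after first marginalizing the $X$-variable out of $\EE[Y_j\mid X,U=\cdot]$ to transfer Lipschitzness from $\EE[Y_j\mid X,U]$ to $m_{Y_j}$, yields $\Delta_{Y_j}\lesssim L_Y^{1/2}\ell_m$. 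Substituting these rates into the sandwich bound from the previous paragraph and taking a maximum over $j\in\cC^c$ delivers the claim.
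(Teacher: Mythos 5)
Your overall architecture coincides with the paper's: the resolvent identity $\tbeta_{\cdot j}-\beta_{\cdot j}=\tilde A^{-1}(\tilde b_j-b_j)+\tilde A^{-1}(A-\tilde A)A^{-1}b_j$ is exactly what the paper's backward-error lemma (\Cref{lem:err-perturb}) delivers, and your telescoping-plus-Cauchy--Schwarz bounds on $\|A-\tilde A\|$ and $\|\tilde b_j-b_j\|$ reproduce Part (1) of the paper's proof. The divergence, and the gap, lies in the final step, where the conditional-mean discrepancies $\Delta_X$ and $\Delta_{Y_j}$ must be converted into multiples of $\ell_m$.

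Three problems arise there. First, your interpolation $\|m_X(u_1)-m_X(u_2)\|^2\le L_X\|u_1-u_2\|(\|m_X(u_1)\|+\|m_X(u_2)\|)$, followed by Cauchy--Schwarz, yields $\Delta_X^2\lesssim L_X\|X\|_{\Lp{2}}\,\EE\|v^{-1}(\hat U)-U\|$ and hence $\Delta_X\lesssim (L_X\|X\|_{\Lp{2}})^{1/2}\ell_m^{1/2}$ --- a square-root rate in $\ell_m$, not the linear rate the theorem asserts --- and it still leaves you with $\|v^{-1}(\hat U)-U\|$ rather than $\ell_m=\|\hat U-v(U)\|_{\Lp{2}}$, so it does not actually dispose of the $v^{-1}$ issue you correctly identified (an issue the paper itself treats cavalierly). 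The paper's route is different: \Cref{lem:err-bound-cond-exp} writes $\EE[X\mid\hat U]=\EE[\EE[X\mid v(U)]\mid\hat U]$ (a tower/martingale identity that requires $X\indep\hat U\mid U$ --- the same conditional independence your variational-characterization inequality $\Delta_X\le\|g(\hat U)-m_X(U)\|_{\Lp{2}}$ silently needs, since the raw projection property only controls $\|X-\EE[X\mid\hat U]\|_{\Lp{2}}$), applies conditional Jensen to get $\Delta_X\le 2\|h(\hat U)-h(v(U))\|_{\Lp{2}}$ with $h=m_X\circ v^{-1}$, and then invokes the Lipschitz composition lemma to obtain a bound linear in $\ell_m$. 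Second, your treatment of the outcome term uses $m_{Y_j}(u)=\EE[Y_j\mid U=u]$ and proposes to transfer Lipschitzness by marginalizing $X$ out of $\EE[Y_j\mid X,U=\cdot\,]$; this transfer fails in general because $\EE[Y_j\mid U=u]=\int\EE[Y_j\mid X=x,U=u]f(x\mid u)\,\rd x$ and \Cref{asm:Lipshitz-cond-mean} places no control on how the conditional law $f(x\mid u)$ varies with $u$. The paper avoids this by keeping $X$ in the conditioning set throughout, writing the covariance as $\EE[\Cov(X,\EE[Y_j\mid X,U]\mid U)]$ and bounding $\|\EE[Y_j\mid X,\hat U]-\EE[Y_j\mid X,U]\|_{\Lp{2}}$, to which the stated Lipschitz assumption applies directly. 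Until these steps are repaired, the proposal does not establish the claimed $O(\ell_m)$ bound.
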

    
    \Cref{thm:err-bound-beta-U} suggests that the upper bound of estimation error using estimated embeddings is related to the second moments of $X$ and $Y$, as well as the accuracy of latent embedding estimation.
    This deterministic result only concerns the population quantities.
    Given i.i.d. samples of $(X,U,Y)$, the corresponding estimator of $\beta_{\cdot j}$ based on finite samples is
    \begin{align}
        b_{\cdot j} &= (\PP_n\{ (X - \hat{\EE}[X\mid {U}] )^{\otimes 2} \} )^{-1} \PP_n\{ (X - \hat{\EE}[X\mid {U}] )(Y_j - \hat{\EE}[Y_j\mid U] ) \} ,\label{eq:b-U}
    \end{align}
    where $A^{\otimes2} = AA^{\top}$ denotes Gram matrix of $A^{\top}$, and $\hat{\EE}[X\mid U]$ and $\hat{\EE}[Y\mid U]$ are the estimated nuisance functions.
    Because $U$ is unobserved, we treat $\hat{U}$ as the truth and estimate $\tbeta_{\cdot j}$ with:
    \begin{align}
        \tilde{b}_{\cdot j} &= (\PP_n\{ (X - \hat{\EE}[X\mid \hat{U}] )^{\otimes 2} \} )^{-1} \PP_n\{ (X - \hat{\EE}[X\mid \hat{U}] )(Y_j - \hat{\EE}[Y_j\mid \hat{U}] ) \}. \label{eq:b-Uhat}
    \end{align}

    As an example, we consider a special case when the regression functions are linear models.
    To distinguish from previous notations, we denote the latent embedding matrix $\bU\in\RR^{n\times r}$ and its estimate $\hat{\bU}\in\RR^{n\times \hat{r}}$, where the latter may have a different dimension $\hat{r}$ than the truth $r$.
    \Cref{lem:est-err-U_hat} below shows that we are still able to quantify the empirical estimation error of the main effects in terms of the estimation error of linear projection matrices in finite samples.

    \begin{lemma}[Empirical bias with estimated embeddings under linear nuisance estimators]\label{lem:est-err-U_hat}        
        Define $S = \PP_n\{ (X - {\EE}[X\mid {U}] )^{\otimes 2} \}$, $\tilde{S} = \PP_n\{ (X - {\EE}[X\mid \hat{U}] )^{\otimes 2} \}$, and $\Gamma = \diag(\PP_n\{YY^{\top}\})$.
        Assume $S$ and $\tilde{S}$ have full rank, and $\kappa(S)\|P_{\hat{\bU}}^{\perp}-P_{\bU}^{\perp}\|<1$, where for any matrix $A\in\RR^{n\times p}$,
        $P_{A}=A(A^{\top}A)^{-1}A^{\top}$ denotes the projection matrix and $\kappa(A) = \|A\|\|A^{-1}\|$ denotes the condition number of matrix $A$.
        When the nuisance estimators $\hat{\EE}[X\mid {U}]$, $\hat{\EE}[X\mid \hat{U}]$, and $\hat{\EE}[Y\mid U]$ are linear functions, it holds that
        \begin{align*}
            \max_{j\in\cC^c}\|\tilde{b}_{\cdot j} - b_{\cdot j}\|&\leq \left(\|b\|_{2,\infty}  + \|S\|_{\oper}^{-\frac{1}{2}}\|\Gamma\|_{\infty}\right) \frac{\kappa(S) \|P_{\hat{\bU}}^{\perp}-P_{\bU}^{\perp}\|}{1 - \kappa(S)  \|P_{\hat{\bU}}^{\perp}-P_{\bU}^{\perp}\|}, 
        \end{align*}
        where $\|A\|_{2,\infty} = \max_{j\in\{1,\ldots,p\}}\|A_{\cdot j}\|$ is the maximum column euclidean norm for $A\in\RR^{n\times p}$.
    \end{lemma}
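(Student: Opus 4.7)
The plan is to exploit the linearity of the nuisance estimators to recast the problem entirely in terms of orthogonal projections onto the column spans of $\bU$ and $\hat{\bU}$, and then apply a matrix-perturbation argument controlled by $\Delta := P_{\hat{\bU}}^{\perp}-P_{\bU}^{\perp}$.

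First I would observe that, under linear nuisance estimation, the in-sample residuals $X-\hat{\EE}[X\mid U]$ and $Y_j-\hat{\EE}[Y_j\mid U]$ evaluated on $\cD_1$ coincide with $P_{\bU}^{\perp}\bX$ and $P_{\bU}^{\perp}\bY_j$, and similarly with $\hat{\bU}$ in place of $\bU$. Consequently $S = n^{-1}\bX^{\top}P_{\bU}^{\perp}\bX$, $\tilde S = n^{-1}\bX^{\top}P_{\hat{\bU}}^{\perp}\bX$, and setting $T_j = n^{-1}\bX^{\top}P_{\bU}^{\perp}\bY_j$, $\tilde T_j = n^{-1}\bX^{\top}P_{\hat{\bU}}^{\perp}\bY_j$, one has $b_{\cdot j} = S^{-1}T_j$ and $\tilde b_{\cdot j}=\tilde S^{-1}\tilde T_j$. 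The key algebraic identities
\[
\tilde S-S = n^{-1}\bX^{\top}\Delta\bX,\qquad \tilde T_j-T_j = n^{-1}\bX^{\top}\Delta\bY_j
\]
then show that every source of bias factors through $\Delta$. This is essentially a Frisch--Waugh--Lovell observation and lets me work entirely with orthogonal projections.

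Second, using $Sb_{\cdot j}=T_j$, I would split the difference as
\[
\tilde b_{\cdot j}-b_{\cdot j} \;=\; \tilde S^{-1}(\tilde T_j-T_j)\;-\;\tilde S^{-1}(\tilde S-S)\,b_{\cdot j}.
\]
For the second summand, the algebraic identity $\tilde S^{-1}(\tilde S-S) = I-(I+S^{-1}(\tilde S-S))^{-1}$ combined with the Neumann expansion gives
\[
\|\tilde S^{-1}(\tilde S-S)\|_{\oper}\;\leq\;\frac{\|S^{-1}(\tilde S-S)\|_{\oper}}{1-\|S^{-1}(\tilde S-S)\|_{\oper}},
\]
and then upper bounding $\|S^{-1}(\tilde S-S)\|_{\oper}\leq \kappa(S)\|\Delta\|_{\oper}$ together with $\|b_{\cdot j}\|\leq \|b\|_{2,\infty}$ accounts for the $\|b\|_{2,\infty}$ contribution of the stated bound. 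For the first summand, the same Neumann argument yields $\|\tilde S^{-1}\|_{\oper}\leq \|S^{-1}\|_{\oper}/(1-\kappa(S)\|\Delta\|_{\oper})$; bounding $\|\tilde T_j-T_j\|$ by pairing $\bX$ with $\bY_j$ through $\Delta$ and using $\|\bY_j\|^2\leq n\,\Gamma_{jj}\leq n\|\Gamma\|_{\infty}$, then carefully matching powers of $\|S\|_{\oper}$ produces the $\|S\|_{\oper}^{-1/2}\|\Gamma\|_{\infty}$ piece. Taking the maximum over $j\in\cC^c$ finishes the argument.

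The main obstacle is the tight perturbation estimate $\|S^{-1}(\tilde S-S)\|_{\oper}\lesssim \kappa(S)\|\Delta\|_{\oper}$: the naive bound $\|\tilde S-S\|_{\oper}\leq n^{-1}\|\bX\|_{\oper}^2\|\Delta\|_{\oper}$ involves $\|\PP_n\{XX^{\top}\}\|_{\oper}$ rather than $\|S\|_{\oper}$, and these coincide only when $\bU$ does not reduce the variability of $\bX$. The fix leans on the classical subspace-angle identity $\|P_{\hat{\bU}}^{\perp}P_{\bU}\|_{\oper}=\|\Delta\|_{\oper}$ and the decomposition $\bX^{\top}\Delta\bX = \bR^{\top}\Delta\bR + \text{cross terms involving }P_{\bU}\bX$, with $\bR := P_{\bU}^{\perp}\bX$ and $\|\bR\|_{\oper}^2=n\|S\|_{\oper}$, so that the cross terms can be absorbed into factors scaling with $\sqrt{\|S\|_{\oper}}\|\Delta\|$ to leading order in $\|\Delta\|$. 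Careful bookkeeping of these subspace-angle identities, together with an analogous tighter treatment of $\tilde T_j-T_j$, is where the main technical care is required.
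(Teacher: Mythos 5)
Your overall route is the same as the paper's: use linearity of the nuisance fits to reduce to $b_{\cdot j}=S^{-1}\bX^{\top}P_{\bU}^{\perp}\bY_{\cdot j}/n$ and $\tilde{b}_{\cdot j}=\tilde{S}^{-1}\bX^{\top}P_{\hat{\bU}}^{\perp}\bY_{\cdot j}/n$, then treat $(\tilde{S},\tilde{T}_j)$ as a perturbation of $(S,T_j)$ controlled entirely by $\Delta=P_{\hat{\bU}}^{\perp}-P_{\bU}^{\perp}$. Your splitting $\tilde{b}_{\cdot j}-b_{\cdot j}=\tilde{S}^{-1}(\tilde{T}_j-T_j)-\tilde{S}^{-1}(\tilde{S}-S)b_{\cdot j}$ together with the Neumann expansion is an unrolled version of the backward-error bound the paper invokes as a black box (\Cref{lem:err-perturb}); both yield the same $\kappa(S)\|\Delta\|/(1-\kappa(S)\|\Delta\|)$ factor, and the treatment of $\tilde{T}_j-T_j$ via $\|\bY_{\cdot j}\|^2\le n\|\Gamma\|_{\infty}$ matches the paper's. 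Up to this point the two arguments are essentially identical.

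The one place you genuinely depart is the step you yourself flag as the crux: converting $\|\bX^{\top}\Delta\bX/n\|\le\|\bX^{\top}\bX/n\|\,\|\Delta\|$ into a bound involving $\|S\|_{\oper}$. The paper disposes of this by writing $\|\bX^{\top}\bX/n\|=\|S\|_{\oper}$ and $\|\bX\|_{\oper}\le\sqrt{n\|S\|_{\oper}}$, which holds with equality only when $P_{\bU}\bX=0$; in general $\bX^{\top}\bX/n\succeq S$. Your proposed subspace-angle repair does not close this gap: writing $\bX=P_{\bU}\bX+\bR$ and $\Delta=P_{\bU}-P_{\hat{\bU}}$, the cross term is $\bX^{\top}P_{\bU}\Delta P_{\bU}^{\perp}\bX=-\bX^{\top}P_{\bU}P_{\hat{\bU}}P_{\bU}^{\perp}\bX$, with norm at most $\|P_{\bU}\bX\|\,\|\Delta\|\,\|\bR\|$, and the remaining term $\bX^{\top}P_{\bU}\Delta P_{\bU}\bX$ has norm at most $\|P_{\bU}\bX\|^{2}\|\Delta\|^{2}$; both carry the factor $\|P_{\bU}\bX\|$, which is not controlled by $\sqrt{n\|S\|_{\oper}}=\|\bR\|$ and is in fact large precisely in the heavily confounded regime the lemma is meant for. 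So the claim that the cross terms "absorb into $\sqrt{\|S\|_{\oper}}\|\Delta\|$ to leading order" would require either an additional assumption bounding $\|P_{\bU}\bX\|$ relative to $\|\bR\|$, or replacing $\|S\|_{\oper}$ by $\|\bX^{\top}\bX/n\|$ in the constant — the latter being what the displayed inequalities actually deliver.
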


    To the best of our knowledge, 
    \Cref{lem:est-err-U_hat} provides the first deterministic, finite-sample result that quantifies the estimation error for the main effects when substituting the true latent variables with estimated embeddings.
    The result also applies to the partial linear model \eqref{eq:partial-linear-model} when it is assumed.
    Compared to \Cref{thm:err-bound-beta-U}, \Cref{lem:est-err-U_hat} suggests that the rate condition of $\hat{\bU}$ can be weakened to the rate condition of the linear projection $P_{\hat{\bU}}^{\perp}$.
    The conclusion of \Cref{lem:est-err-U_hat} is fully deterministic, and its proof relies on the backward error analysis in numerical linear algebra \citep{trefethen2022numerical}.
    The dimension of the estimated embedding is allowed to differ from the truth, as long as the column space of $\hat{U}$ captures essential information of the column space of $U$.
    Analogously, it is possible to relax \Cref{asm:latent} to varying latent dimension settings for \Cref{thm:err-bound-beta-U} under general data models.
    In this regard, one can consider a decomposition of $\lim_m\hat{U}_m = T + A$, where $T$ and $A$ are a sufficient statistic and an ancillary statistic, respectively, as when $U$ is viewed as a parameter.
    We leave such an extension as future work.

    \subsection{Doubly robust semiparametric inference}\label{subsec:dr-semi}

    In the previous section, we showed that the target estimands $\tbeta$ and $\beta$ are similar whenever $\hat{U}$ is consistent to $U$ up to any invertible transformation. 
    Based on the estimated embedding $\hat{U}$, our target of estimation and inference becomes $\tbeta$ as defined in \eqref{eq:beta-Uhat}.
    To consider potential nonparametric models for the nuisance functions, in what follows, we require the estimated nuisance functions $\hat{\EE}[X\mid U]$ and $\hat{\EE}[Y\mid U]$ to be computed from independent samples of $\PP_n$.
    The required independence is standard in recent developments of double machine learning and causal inference \citep{vansteelandt2022assumption,kennedy2022semiparametric} where sample splitting and cross-fitting can be used to fulfill this requirement, though one can also restrict to Donsker classes to avoid sample splitting.

    Before we inspect the estimation error of $\tilde{b}$ to the target estimand $\tbeta$, we introduce one extra assumption on the moments and consistency of nuisance estimation.

    \begin{assumption}[Bounded moments and consistency]\label{asm:nuisance}
        There exists $\delta\in(0,1]$, $M>0$, such that 
        \[
            \|X - \EE[X\mid \hat{U}]\|_{\Lp{2(1+\delta^{-1})}} 
            \vee
            \|X - \hat{\EE}[X\mid \hat{U}]\|_{\Lp{2(1+\delta^{-1})}}
            \vee
            \|Y - {\EE}[Y\mid \hat{U}]\|_{\Lp{2(1+\delta^{-1})}} < M,
        \]
        \[
            \|\EE[X\mid \hat{U}] - \hat{\EE}[X\mid \hat{U}]\|_{\Lp{2(1+\delta)}}, \| \|{\EE}[Y\mid \hat{U} ] - \hat{\EE}[Y\mid \hat{U}]\|_{\infty}\|_{\Lp{2(1+\delta)}}=\op(1).
        \]
    \end{assumption}

    Let $O=(X,\hat{U},Y)\in\RR^d\times\RR^r\times\RR^p$ denote the observation when the estimated embedding function $\hf_e$ is treated as fixed.
    The following theorem demonstrates the linear expansion of the estimator $\tilde{b}$ and provides an error bound for the residual term with high probability.

    \begin{theorem}[Linear expansion]\label{thm:DR-linear}
        Consider the above inferential procedure, suppose \Cref{asm:moment,asm:nuisance} hold  and two nuisance functions $\hat{\EE}[X\mid \hat{U}]$ and $\hat{\EE}[Y\mid \hat{U}]$ are estimated from independent samples of $\PP_n$.
        Then, the estimator $\tilde{b}$ admits a linear expansion:
        \begin{align*}
            \sqrt{n}(\tilde{b} - \tbeta) &= \sqrt{n}\tilde{\Sigma}^{-1}(\PP_n-\PP)\{\tilde{\varphi}(O;{\PP})\} + \xi,
        \end{align*}
        where $\tilde{\Sigma} = \EE[\Cov(X\mid \hat{U})]$ and $\tilde{\varphi}$ is the influence function of $\tSigma\tbeta$ defined as
        \begin{align}
            \tilde{\varphi}(O;\PP) &= (X - {\EE}[X\mid \hat{U}]) ( (Y - \EE[Y \mid X]) - \tbeta^{\top}(X - {\EE}[X\mid \hat{U}]) )^{\top} .\label{eq:tvarphi-linear}
        \end{align}
        For any $\epsilon>0$, there exists a constant $C=C(\epsilon,\sigma,M,L)$, such that with probability at least $1- \epsilon$, the remainder term $\xi$ satisfies that 
        \begin{align*}
            \|\xi\|_{2,\infty}
            &\leq 
             C \{ \|(\PP_n-\PP)\{(X-\EE[X\mid \hat{U}])^{\otimes 2}\}\|_{\oper} \\
             &\qquad + 
              \| {\EE}[X\mid \hat{U}] - \hat{\EE}[X\mid \hat{U}]\|_{\Lp{2(1+\delta)}} + \| \|\EE[Y\mid \hat{U}]-  \hat{\EE}[Y\mid \hat{U}] \|_{\infty}\|_{\Lp{2(1+\delta)}} \} \\
            &\qquad + C\sqrt{n}\{
            \|\EE[X\mid \hat{U}] - \hat{\EE}[X\mid \hat{U}]\|_{\Lp{2}}^2 + ML\|{\EE}[Y\mid \hat{U}] - \hat{\EE}[Y\mid \hat{U}]\|_{\Lp{2},\infty}^2 \notag\\
            &\qquad + \| \EE[Y\mid \hat{U}]-  \hat{\EE}[Y\mid \hat{U}]\|_{\Lp{2},\infty}  \|\EE[X\mid \hat{U}] - \hat{\EE}[X\mid \hat{U}]\|_{\Lp{2}} \}.
        \end{align*}
    \end{theorem}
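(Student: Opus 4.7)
The approach is to decompose $\tilde b - \tbeta$ into a leading ``oracle score'' term that will match the claimed influence function, plus a remainder that splits naturally into a matrix-inverse perturbation, empirical-process deviations of linear nuisance-correction terms, and product-of-nuisance-error biases. Sample splitting makes the estimated nuisances $\hat\eta := \hat\EE[X\mid\hat U]$ and $\hat\mu := \hat\EE[Y\mid\hat U]$ conditionally $\sigma(\hat U)$-measurable given the nuisance-training sample, which is the key lever that kills first-order nuisance cross-bias via tower conditioning. Writing $\tilde S := \PP_n\{(X-\hat\eta)^{\otimes 2}\}$ and $\tilde T := \PP_n\{(X-\hat\eta)(Y-\hat\mu)^\top\}$ so that $\tilde b = \tilde S^{-1}\tilde T$, I would use the resolvent identity $\tilde S^{-1} - \tilde\Sigma^{-1} = -\tilde S^{-1}(\tilde S - \tilde\Sigma)\tilde\Sigma^{-1}$ to swap $\tilde S^{-1}$ for $\tilde\Sigma^{-1}$ in the leading term. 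Since $\tilde\Sigma \succeq \sigma I_d$ by \Cref{asm:moment} and $\tilde S - \tilde\Sigma$ decomposes into the explicit fluctuation $\PP_n\{(X-\eta)^{\otimes 2}\} - \tilde\Sigma$ (the term that appears in the stated remainder) plus perturbations linear and quadratic in $\Delta_\eta := \hat\eta - \eta$, the matrix $\tilde S$ is invertible with $\|\tilde S^{-1}\|_{\oper}\le 2/\sigma$ on a high-probability event.

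\textbf{Nuisance expansion.} Writing $X-\hat\eta = (X-\eta) - \Delta_\eta$ and $Y-\hat\mu = (Y-\mu) - \Delta_\mu$, expand
\[
(X-\hat\eta)\bigl(Y-\hat\mu - \tbeta^\top(X-\hat\eta)\bigr)^\top = \tilde\psi^{\ast}(O) + L(O) + Q(O),
\]
where $\tilde\psi^{\ast}(O) := (X-\eta)\bigl((Y-\mu) - \tbeta^\top(X-\eta)\bigr)^\top$ is mean-zero under $\PP$, $L$ collects the terms linear in $(\Delta_\eta, \Delta_\mu)$ paired with an oracle factor $(X-\eta)$ or $(Y-\mu)$, and $Q$ collects the quadratic pieces $\Delta_\eta\Delta_\mu^\top$ and $\Delta_\eta^{\otimes 2}\tbeta$. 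Sample-splitting tower conditioning yields $\PP L = 0$, so $\PP_n L = (\PP_n - \PP)L$; conditional Chebyshev combined with Hölder at the exponent pair $(1+\delta, 1+\delta^{-1})$ — using the moment bounds in \Cref{asm:nuisance} — bounds $\|(\PP_n - \PP)L\|_{2,\infty}$ by $n^{-1/2}$ times the first (non-$\sqrt n$) bracket of the stated remainder. The quadratic bias $\PP Q$ is bounded entrywise via Cauchy--Schwarz by $\|\Delta_\eta\|_{\Lp{2}}^2 \|\tbeta\|_{2,\infty} + \|\Delta_\eta\|_{\Lp{2}} \bigl\| \|\Delta_\mu\|_\infty \bigr\|_{\Lp{2}}$, which matches the $\sqrt n$-multiplied terms in the stated remainder after the $\sqrt n$ normalization.

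\textbf{Matching $\tilde\varphi$ and finishing.} Since $\PP\tilde\psi^{\ast} = 0$, the oracle contribution equals $\tilde\Sigma^{-1}(\PP_n - \PP)\tilde\psi^{\ast}$. The stated $\tilde\varphi$ differs from $\tilde\psi^{\ast}$ only by $(X-\eta)(\mu - \EE[Y\mid X])^\top$, whose $\PP$-mean — identified by tower conditioning on $X$ — is a deterministic constant absorbed into the empirical recentering inside $(\PP_n - \PP)\tilde\varphi$, so the leading term can equivalently be written in the stated form. Passing $\|\cdot\|_{2,\infty}$ through the bounds relies on the uniform-in-$j$ control $\bigl\|\|\Delta_\mu\|_\infty\bigr\|_{\Lp{2(1+\delta)}}$ in \Cref{asm:nuisance}; converting the moment bounds into the $1-\epsilon$ high-probability statement uses Markov's inequality. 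The constant $C$ absorbs $\sigma, M, L$ (through boundedness of $\tilde\Sigma^{-1}$, second moments, and the Lipschitz constant from \Cref{asm:Lipshitz-cond-mean} that feeds the $ML$ coefficient on the quadratic $\mu$-error term) and the probability level $\epsilon$.

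\textbf{Main obstacle.} The trickiest step will be the uniform-in-$j$ empirical-process control of $L(O)$: with only $\Lp{2(1+\delta)}$-integrable nuisance errors and a $p$-dimensional outcome $Y$, Donsker-class arguments are unavailable. The argument must proceed coordinatewise over $j\in\cC^c$ via Hölder and Markov, exploiting the $\|\cdot\|_\infty$ nuisance control in \Cref{asm:nuisance} to keep the bound dimension-free in $p$, while simultaneously securing the high-probability invertibility of $\tilde S$ through a separate concentration bound so that the matrix perturbation in Step 1 does not inflate the remainder. Careful bookkeeping is needed to place all pieces on the same high-probability event so that a single union bound at level $\epsilon$ suffices.
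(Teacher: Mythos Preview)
Your approach is correct and essentially matches the paper's: the paper derives \Cref{thm:DR-linear} by specializing the nonlinear result \Cref{thm:DR} (via \Cref{lem:dr-est}) to the identity link, and that lemma's proof is precisely the three-term DR decomposition you outline --- a CLT piece, an empirical-process piece handled by conditional Chebyshev plus H\"older at the exponent pair $(1+\delta,1+\delta^{-1})$, and a second-order bias piece killed by tower conditioning under sample splitting.

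Two small points where you are force-fitting the stated bound rather than trusting your own argument. First, your reconciliation of $\tilde\psi^\ast$ with the displayed $\tilde\varphi$ via a ``deterministic constant absorbed into the recentering'' does not work: the difference $(X-\eta)(\EE[Y\mid\hat U]-\EE[Y\mid X])^\top$ is a genuine mean-zero random variable, not a constant, so $(\PP_n-\PP)\tilde\psi^\ast \ne (\PP_n-\PP)\tilde\varphi$ in general. In fact, specializing the general influence function (\Cref{lem:eff-inf}) to $g=\mathrm{id}$ gives $\eta(O)=Y-\EE[Y\mid\hat U]$, so your $\tilde\psi^\ast$ \emph{is} the correct leading term; the $\EE[Y\mid X]$ in the theorem's display appears to be a typo for $\EE[Y\mid\hat U]$. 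Second, the $ML\|\Delta_\mu\|_{\Lp{2},\infty}^2$ term in the stated remainder is an artifact of the general nonlinear case (it comes from the second-order Taylor remainder of $g$, which vanishes identically for the identity link) and is not produced by \Cref{asm:Lipshitz-cond-mean}; your quadratic bias $\PP Q$ correctly contains only the $\|\Delta_\eta\|_{\Lp{2}}^2$ and $\|\Delta_\eta\|_{\Lp{2}}\|\Delta_\mu\|_{\Lp{2},\infty}$ pieces, and the extra term in the statement is simply redundant here.
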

    
    \Cref{thm:DR-linear} provides a non-asymptotic uniform error bound for the residual terms over multiple outcomes.
    With the law of large numbers and the consistency in \Cref{asm:nuisance}, we know that the first term of the upper bound is $\op(1)$.
    On the other hand, the secondary term is also negligible under specific rate conditions on the estimation errors of nuisances. 
    Considering an asymptotic regime when viewing $m$ and $p$ as sequences indexed by $n$ and $n,m,p\rightarrow\infty$, the above result suggests the asymptotic normality, as presented in the following corollary.

    \begin{corollary}[Doubly robust inference with estimated emebeddings]\label{cor:inference}
        Under conditions in \Cref{thm:DR-linear}, if further, the estimation error rates of nuisance functions satisfy that
        $\|\EE[X\mid \hat{U}] - \hat{\EE}[X\mid \hat{U}]\|_{\Lp{2}}^2=\op(n^{-\frac{1}{2}})$, $\|{\EE}[Y\mid \hat{U}] - \hat{\EE}[Y\mid \hat{U}]\|_{\Lp{2},\infty}^2 =\op(n^{-\frac{1}{2}})$, $\| \EE[Y\mid \hat{U}] - \hat{\EE}[Y\mid \hat{U}] \|_{\Lp{2},\infty}  \|\EE[X\mid \hat{U}] - \hat{\EE}[X\mid \hat{U}]\|_{\Lp{2}} =\op(n^{-\frac{1}{2}})$, then the estimator $\tilde{b}$ is asymptotically normal:
        \[\sqrt{n}(\tilde{b}_{\cdot j} - \tbeta_{\cdot j}) \to \cN_d (0, \tSigma^{-1}\VV\{\tilde{\varphi}_{\cdot j}(O;{\PP})\}\tSigma^{-1})\text{ in distribution}\quad (j=1,\ldots,p).\]        
        Furthermore, if the conditions of \Cref{thm:err-bound-beta-U} hold with $\ell_m=o(n^{-\frac{1}{2}})$, then we have
        \begin{align*}
            \sqrt{n} (\tilde{b}_{\cdot j} - {\beta}_{\cdot j}) \to \cN_d (0, \tSigma^{-1}\VV\{\tilde{\varphi}_{\cdot j}(O;{\PP})\}\tSigma^{-1})\text{ in distribution}\quad (j=1,\ldots,p).
        \end{align*}
    \end{corollary}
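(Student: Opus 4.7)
The plan is to start from the linear expansion already delivered by \Cref{thm:DR-linear},
\[
\sqrt{n}(\tilde{b}_{\cdot j} - \tbeta_{\cdot j}) \;=\; \sqrt{n}\,\tSigma^{-1}(\PP_n-\PP)\{\tilde{\varphi}_{\cdot j}(O;\PP)\} \;+\; \xi_{\cdot j},
\]
and then carry out two independent tasks: (i) show that $\xi_{\cdot j}=\op(1)$ for each $j$ under the extra rate conditions, and (ii) apply a triangular-array CLT to the leading empirical-process term. The second statement will then be obtained from the first by a bias-debiasing argument using \Cref{thm:err-bound-beta-U} and Slutsky's theorem.

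For step (i), I would simply plug the stated rate conditions into the high-probability bound for $\xi$ in \Cref{thm:DR-linear}. The three "first-order" contributions on the right-hand side are controlled as follows: the empirical-process term $\|(\PP_n-\PP)\{(X-\EE[X\mid\hat U])^{\otimes 2}\}\|_{\oper}$ is $\Op(n^{-1/2})$ since, conditionally on $\hat f_e$, it is the deviation of a sample average of bounded-moment matrices (\Cref{asm:moment,asm:nuisance}), hence $\op(1)$; the remaining two $L_{2(1+\delta)}$ norms are $\op(1)$ by the consistency in \Cref{asm:nuisance}. For the three "second-order" terms multiplied by $\sqrt n$, the assumed rates $\|\EE[X\mid\hat U]-\hat\EE[X\mid\hat U]\|_{\Lp2}^2=\op(n^{-1/2})$, $\|\EE[Y\mid\hat U]-\hat\EE[Y\mid\hat U]\|_{\Lp2,\infty}^2=\op(n^{-1/2})$, and the cross product being $\op(n^{-1/2})$ ensure that each of these contributions is $\op(1)$ after multiplication by $\sqrt n$. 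Since the bound in \Cref{thm:DR-linear} is uniform in $j$ (a $\|\cdot\|_{2,\infty}$ bound), we obtain $\max_{j}\|\xi_{\cdot j}\|=\op(1)$, which is stronger than needed.

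For step (ii), conditionally on the auxiliary sample $\cD_0$ used to fit $\hat f_e$, the observations $O_1,\dots,O_n$ are i.i.d. and $\tilde\varphi_{\cdot j}(O;\PP)$ has a finite second moment by Cauchy--Schwarz combined with \Cref{asm:moment,asm:nuisance} (expanding $\tilde\varphi_{\cdot j}$ using $(X-\EE[X\mid\hat U])$ and $(Y_j-\EE[Y_j\mid X])-\tbeta_{\cdot j}^\top(X-\EE[X\mid\hat U])$ and using $\|\tbeta\|_{2,\infty}\lesssim M^2/\sigma$ from \Cref{asm:moment}). This yields, by the multivariate Lindeberg--Feller CLT for triangular arrays indexed by $n$ (the Lindeberg condition follows from uniform integrability of $\|\tilde\varphi_{\cdot j}\|^2$ granted by the $L^{2(1+\delta^{-1})}$ moment bounds), $\sqrt n(\PP_n-\PP)\{\tilde\varphi_{\cdot j}(O;\PP)\}\Rightarrow \cN_d(0,\VV\{\tilde\varphi_{\cdot j}(O;\PP)\})$. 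Combining with the uniform positive-definiteness $\tSigma\succeq \sigma I_d$ from \Cref{asm:moment} and the continuous mapping theorem for $A\mapsto A^{-1}$, Slutsky's lemma gives the first claim $\sqrt n(\tilde b_{\cdot j}-\tbeta_{\cdot j})\to \cN_d(0,\tSigma^{-1}\VV\{\tilde\varphi_{\cdot j}\}\tSigma^{-1})$.

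For the second conclusion I would decompose $\sqrt n(\tilde b_{\cdot j}-\beta_{\cdot j})=\sqrt n(\tilde b_{\cdot j}-\tbeta_{\cdot j})+\sqrt n(\tbeta_{\cdot j}-\beta_{\cdot j})$. \Cref{thm:err-bound-beta-U} provides $\|\tbeta_{\cdot j}-\beta_{\cdot j}\|\lesssim \ell_m$ uniformly over $j$ once $\|\EE[X\mid\hat U]-\EE[X\mid U]\|_{\Lp2}<\sigma/(2M)$, which eventually holds by \Cref{asm:latent} since $\hat U_m\to v(U)$ in $L^2$ and conditional expectations are $L^2$-contractive. Under $\ell_m=o(n^{-1/2})$, the bias term is $o(1)$, so another application of Slutsky's theorem delivers the second claim. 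The most delicate point I anticipate is handling the triangular-array aspect cleanly: the influence function $\tilde\varphi$ depends on $\hat f_e$, hence on $m=m_n$, so one must verify the Lindeberg condition uniformly in $n$ (using the $2(1+\delta^{-1})$-moment domination from \Cref{asm:nuisance}) and ensure that $\tSigma=\tSigma_n$ inverses stay uniformly bounded, which is precisely guaranteed by the lower bound $\tSigma\succeq \sigma I_d$ in \Cref{asm:moment}.
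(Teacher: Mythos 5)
Your proposal is correct and follows essentially the same route as the paper: the paper also starts from the linear expansion of \Cref{thm:DR-linear} (obtained as the identity-link special case of \Cref{thm:DR}), kills the remainder $\xi$ by plugging in the stated rate conditions, verifies the Lindeberg condition for the triangular array via the $L_{2(1+\delta^{-1})}$ moment bounds (its \Cref{lem:lindeberg}), and obtains the second claim by adding the $\sqrt{n}(\tbeta-\beta)$ bias term controlled through \Cref{thm:err-bound-beta-U} with $\ell_m = o(n^{-1/2})$. The only cosmetic difference is that the paper channels everything through the nonlinear-link theorem rather than arguing directly from the linear case.
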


    In the presence of the estimated embedding $\hat{U}_m$, the influence function $\tilde{\varphi}$ implicitly depends on the sample size $m$.
    Therefore, establishing the asymptotic normality requires verification of the Lindeberg condition for a triangular array of random variables.
    In \Cref{cor:inference}, the rate of estimation for the two nuisance functions may be slower than the parametric rate $n^{-\frac{1}{2}}$, as long as each individual estimation rate is faster than $n^{-\frac{1}{4}}$. This flexibility enables us to employ more versatile machine learning algorithms for nuisance function estimation while maintaining the validity of our inference.
    Furthermore, \Cref{cor:inference} suggests that efficient influence regarding the true main effect $\beta$ is possible when the rate of consistently estimating the embedding is $\ell_m=\op(n^{-\frac{1}{2}})$.
    As discussed above, under factor models, one has $\ell_m = \Op( m^{-\frac{1}{2}})$, this requires $n = o(m)$, i.e., the factor loadings need to be estimated from more observations than those used for the estimation and inference of $\tilde{b}$.

    Based on \Cref{cor:inference}, the data-adaptive procedure to obtain the confidence intervals and p-values is given in \Cref{alg:semi-linear}.
    To fulfill the independence assumptions, one can use cross-fitting to ensure that different samples are used for step \ref{step1} and step \ref{step2}.
    When this holds, the following proposition shows that overall Type-I error control can be controlled at the desired level.
    In \Cref{prop:simul-inference}, when the unit vector $v$ is chosen to be the basis vector, it reduces to testing whether a specific covariate has zero association with individual outcomes.
    
    \begin{proposition}[Multiple linear hypothesis testing]\label{prop:simul-inference}
        Let $t_j = \sqrt{n}\VV_n\{\tilde{\varphi}_{\cdot j}(O;\hat{\PP})\}^{\frac{1}{2}}\hSigma^{-1}$ $(\tilde{b}_{\cdot j} - \tbeta_{\cdot j})$ be the standardized vector.
        For any unit vector $v\in\RR^d$, consider the hypothesis $\cH_{0j}:v^{\top}\beta_{\cdot j} =0$.
        Let $\cN_p=\{j\mid v^{\top}\beta_{\cdot j}=0,j=1,\ldots,p\}$ be the true null hypotheses.
        Under the assumptions of \Cref{cor:inference}, as $m,n,p,|\cN_p|\rightarrow\infty$ such that $\ell_m=o(n^{-1/2})$, it holds that $|\cN_p|^{-1}\sum_{j\in\cN_p} \ind\{|v^{\top}t_j|>z_{\frac{\alpha}{2}}\} \to \alpha$ in probability.
    \end{proposition}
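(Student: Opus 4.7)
The plan is to deduce the empirical convergence from a marginal central limit theorem combined with a variance bound for the sum of rejection indicators. Writing $A_j=\ind\{|v^{\top}t_j|>z_{\alpha/2}\}$ for $j\in\cN_p$, I would split the task into (i) showing that $\EE[|\cN_p|^{-1}\sum_{j\in\cN_p}A_j]\to\alpha$ and (ii) showing that $\VV\{|\cN_p|^{-1}\sum_{j\in\cN_p}A_j\}\to 0$, after which Chebyshev's inequality delivers the stated convergence in probability.

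For step (i), fix any $j\in\cN_p$. Corollary~\ref{cor:inference}, together with the bias bound of Theorem~\ref{thm:err-bound-beta-U} under $\ell_m=o(n^{-1/2})$, gives $\sqrt{n}(\tilde{b}_{\cdot j}-\beta_{\cdot j})\to\cN_d(0,\tSigma^{-1}\VV\{\tilde{\varphi}_{\cdot j}(O;\PP)\}\tSigma^{-1})$ in distribution. The empirical covariance $\hSigma$ converges in probability to $\tSigma$, and the sandwich variance $\VV_n\{\tilde{\varphi}_{\cdot j}(O;\hat{\PP})\}$ converges in probability to $\VV\{\tilde{\varphi}_{\cdot j}(O;\PP)\}$, both by the law of large numbers using the $L_{2(1+\delta^{-1})}$ moment bounds and the nuisance consistency in Assumption~\ref{asm:nuisance}. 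Slutsky's theorem then yields $v^{\top}t_j\to\cN(0,1)$ in distribution under $v^{\top}\beta_{\cdot j}=0$, so $\EE[A_j]\to\alpha$. To upgrade to $\sup_{j\in\cN_p}|\EE[A_j]-\alpha|\to 0$, I would leverage the $\|\cdot\|_{2,\infty}$-norm bound on the remainder $\xi$ in Theorem~\ref{thm:DR-linear} together with a Berry--Esseen estimate on the leading empirical-process term; a Cesaro average then gives the convergence of the mean.

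For step (ii), expand $\VV\{|\cN_p|^{-1}\sum_jA_j\}=|\cN_p|^{-2}\sum_{j,k\in\cN_p}\Cov(A_j,A_k)$. The linear representation $\sqrt{n}v^{\top}(\tilde{b}_{\cdot j}-\tbeta_{\cdot j})=\sqrt{n}(\PP_n-\PP)\{v^{\top}\tSigma^{-1}\tilde{\varphi}_{\cdot j}\}+\op(1)$ from Theorem~\ref{thm:DR-linear} implies, via the Cramer--Wold device, that any pair $(v^{\top}t_j,v^{\top}t_k)$ is jointly asymptotically bivariate normal with correlation $\rho_{jk}$ proportional to $\Cov(v^{\top}\tSigma^{-1}\tilde{\varphi}_{\cdot j},v^{\top}\tSigma^{-1}\tilde{\varphi}_{\cdot k})$. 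A standard bivariate Gaussian tail calculation then gives $|\Cov(A_j,A_k)|=O(|\rho_{jk}|)$ in the limit, reducing the problem to showing that the averaged off-diagonal dependence $|\cN_p|^{-2}\sum_{j,k}|\rho_{jk}|$ vanishes.

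This last reduction is the main obstacle: since every $t_j$ is built from the same $n$ observations and shares the common covariate-residual factor $X-\EE[X\mid\hat{U}]$, the statistics are inherently dependent, and the only source of decoupling comes from the outcome-residual factors $Y_j-\EE[Y_j\mid X]$. I would attack this through the outcome Gram matrix $G$ with $G_{jk}=\EE[(Y_j-\EE[Y_j\mid X])(Y_k-\EE[Y_k\mid X])]$, whose diagonal is uniformly bounded by Assumption~\ref{asm:moment}. A Cauchy--Schwarz bound expressing the averaged absolute off-diagonal mass of $G$ in terms of its operator and Frobenius norms should give $|\cN_p|^{-2}\sum_{j,k}|\rho_{jk}|=o(1)$ provided the effective rank of $G$ grows more slowly than $|\cN_p|$, an implicit weak-dependence condition across outcomes that constitutes the technical crux of the argument.
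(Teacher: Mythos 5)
Your step (i) is essentially the paper's: a marginal CLT for each $v^{\top}t_j$ via the linear expansion of \Cref{thm:DR-linear}, uniform control of the remainder through $\|\xi\|_{2,\infty}=\op(1)$ and the consistency of $\hSigma$ and $\VV_n\{\tilde\varphi_{\cdot j}(O;\hat\PP)\}$, an $\epsilon$-perturbation of the threshold, and a Cesaro average to get $\EE[\varrho]\to\alpha$. The divergence, and the genuine gap, is in step (ii). The paper does not bound pairwise correlations of the test statistics at all: it observes that, conditional on the $(X,\hat U)$'s, the leading terms $\vartheta_j$ and $\vartheta_k$ are \emph{independent} across outcomes (the only remaining randomness in $\tilde\varphi_{\cdot j}(O;\PP)$ given $(X,\hat U)$ is $Y_j$, and the cross-outcome dependence is assumed to be absorbed by the latent embedding), so the joint rejection probability factorizes as $\EE[\PP(|\vartheta_j|>z_{\alpha/2}-\epsilon\mid X,\hat U)\,\PP(|\vartheta_k|>z_{\alpha/2}-\epsilon\mid X,\hat U)]\to 4(1-\Phi(z_{\alpha/2}-\epsilon))^2$, giving $\EE[\varrho^2]\to\alpha^2$ and hence $\VV(\varrho)\to0$ directly.

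Your route, by contrast, reduces the problem to showing $|\cN_p|^{-2}\sum_{j,k}|\rho_{jk}|=o(1)$ for the unconditional correlations of the influence functions, and you candidly leave this unproven, proposing instead an auxiliary weak-dependence condition on the Gram matrix $G_{jk}=\EE[(Y_j-\EE[Y_j\mid X])(Y_k-\EE[Y_k\mid X])]$. Two problems. First, that condition is not among the hypotheses of the proposition, so even if it sufficed, the proof would not establish the stated result. Second, and more seriously, it would typically \emph{fail} in exactly the settings the paper targets: residualizing $Y_j$ on $X$ alone leaves the common latent component $h_j(U)$ in place, so $G$ generically has a dominant low-rank block with operator norm of order $p$, and the averaged absolute off-diagonal mass $|\cN_p|^{-2}\sum_{j,k}|G_{jk}|$ does not vanish. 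The decoupling you are missing is precisely the conditioning on $\hat U$: once the latent embedding is conditioned on, the cross-outcome dependence disappears (under the model's implicit conditional-independence structure), and no correlation bookkeeping is needed. Without that step your argument cannot be closed under the stated assumptions.
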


    \begin{remark}[Multiple testing]\label{rmk:mul-test}
        The condition $ \| \|{\EE}[Y\mid \hat{U} ] - \hat{\EE}[Y\mid \hat{U}]\|_{\infty}\|_{\Lp{2(1+\delta)}}=\op(1)$ in \Cref{asm:nuisance} controls the envelope of the regression function estimation errors.
        This is useful when the number of outcomes $p$ grows with the sample size $n$, when multiple testing procedures based on multiplier bootstrap can be applied to control both the family-wise error rate and the false discovery rate \citep{du2024causal}.
        Alternatively, one can apply the Benjamini–Hochberg procedure for multiple testing corrections.
    \end{remark}

    \begin{algorithm}[t]\small
        \caption{Semiparametric inference for main effects}
        \label{alg:semi-linear}
        \begin{algorithmic}[1]
        \REQUIRE Reponses $Y$, covariate $X$, and estimated latent embedding $\hat{U}$.
        
        \STATE\label{step1} Use machine learning methods to obtain nuisance estimates $\hat{\EE}[Y\mid \hat{U}]$ and~$\hat{\EE}[X\mid\hat{U}]$.

        \STATE\label{step2} Fit a linear regression of $Y - \hat{\EE}[Y\mid \hat{U}] \sim X - \hat{\EE}[X\mid \hat{U}]$ without an intercept to obtain an estimate $\tilde{b}$ as defined in \eqref{eq:b-Uhat} of $\tbeta$ as defined in \eqref{eq:b-Uhat}.

        \STATE Estimate the variance of $\tilde{b}_{\cdot j}$ by $\hat{S}_j/n$ based on \Cref{thm:DR-linear}, where $\hat{S}_j = \hSigma^{-1}\VV_n\{\tilde{\varphi}_{\cdot j}(O;\hat{\PP})\}\hSigma^{-1} $ and $\hSigma=\PP_n\{ (X - \hat{\EE}[X \mid \hat{U}])^{\otimes 2}\}$. 

        \ENSURE Confidence intervals and p-values based on asymptotic null distribution $\tilde{b}_{\cdot j} \overset{\cdot}{\sim}\cN_d (\tbeta_{\cdot j}, \frac{\hat{S}_j}{n})$. 
        \end{algorithmic}
    \end{algorithm}

\section{Simulation}\label{sec:simu}

    We generate the data from generalized partial linear models.
    The covariate $X\in\RR$ is sampled from $\cN(0,1)$; 
    the latent variable $U = X\alpha + \epsilon\in\RR^r$ is a linear function of $X$, where $r=10$, $\alpha_{1j}\sim\Unif(-1,1)$ and $\epsilon_j\sim\cN(0,\sigma^2_\epsilon)$ independently for $j\in[r]$; 
    and the response is generated from generalized linear models with a Logistic link $\mathrm{logit} (\EE[Y \mid X,U]) = X\beta + U\eta$, where $\beta_{1j}\sim 2\times \mathrm{Bernoulli}(0.2)$ and $\sqrt{r}\cdot\eta_{ij}\sim\Unif(-1,1)$ independently for $i\in[r]$ and $j\in\{1,\ldots,p\}$.
    We generate $p=1000$ outcomes and use 500 null outcomes as the surrogate outcomes.
    
    We evaluate four methods: (1) \GLM ($X$): naive generalized linear models that use Logistic regression that only uses observed covariate $X$ to predict $Y$; 
    (2) \GLM ($X,U$): oracle Logistic regression that uses both observed covariate $X$ and latent variable $U$ to predict $Y$;
    (3) \PII ($X,U$): the proposed post-integrated inference method that uses observed covariate $X$ and latent embedding $U$ to predict $Y$; 
    and (4) \PII ($X,\hat{U}$): the proposal method that uses the first $r$ PCs of the outcome matrix are selected as $\hat{U}$.

    For \PII, we use the random forest to estimate the nuisance functions $\EE[X\mid U]$, $\EE[Y\mid X,U]$, and $\EE[g(\EE[Y\mid X,U])\mid U]$ and apply extrapolated cross-validation (\ECV) \citep{du2024extrapolated} to select the hyperparameter that minimizes the estimated mean squared error.
    \ECV allows us to use a smaller number of trees for estimating the out-of-sample prediction errors based on out-of-bag observations and extrapolate the risk estimation up to a larger number of trees consistently without sample splitting.
    In our experiment, we use $25$ trees to perform ECV and the hyperparameters we consider include: `max\_depth' in $\{1,3,5\}$ for the depth of each tree, `max\_samples' in $\{0.25,0.5,0.75,1\}$ for bootstrap samples and the number of trees in $\{1,\ldots,50\}$.

    \begin{figure}[!t]
        \centering
        \includegraphics[width=\linewidth]{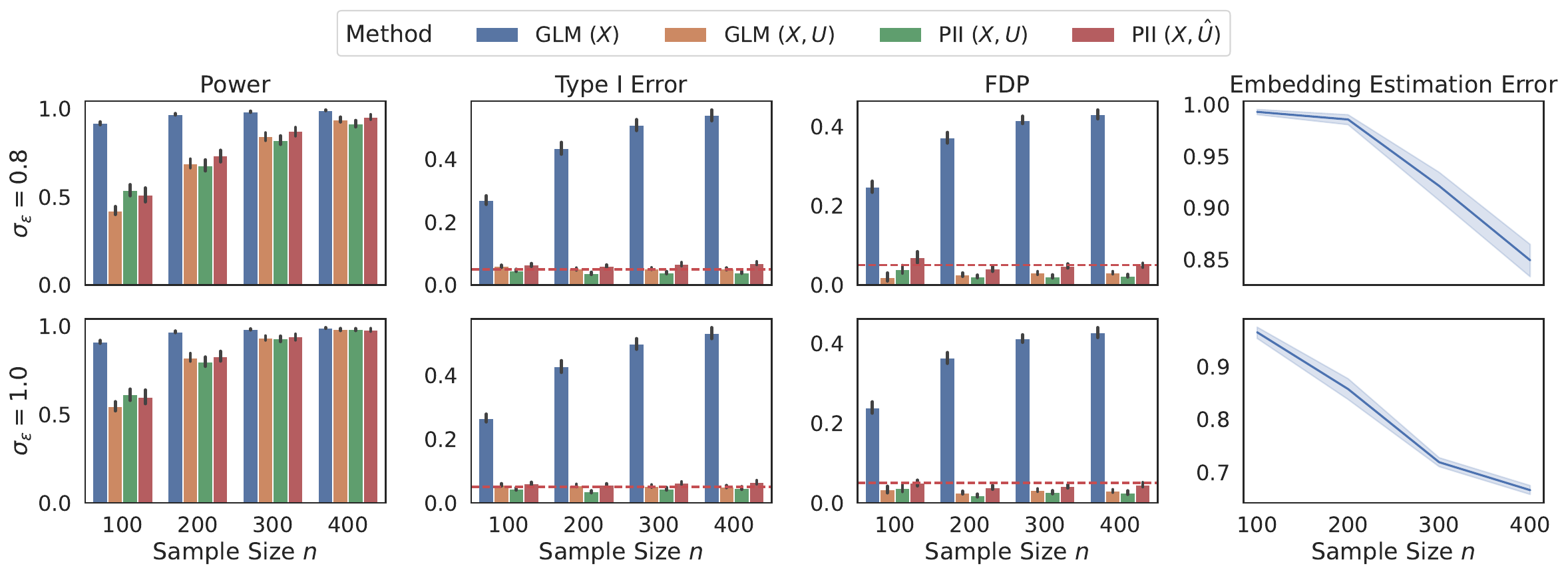}
        \caption{
        Simulation results with $500$ surrogate control outcomes out of a total of $1000$ outcomes.
        For \PII, the nuisance functions are estimated using random forests.
        The data model is the Logistic regression model.
        The first and second rows have noise levels $\sigma_{\epsilon}=0.8$ and $\sigma_{\epsilon}=1$, respectively,  for the latent variables.
        }
        \label{fig:simu-1}
    \end{figure}

    To compare the performance of different methods, the power, type-I error, and false discovery proportion (FDP) for hypothesis testing are analyzed.
    For both the type-I error and power, we set the significance level to be 0.05.
    For FDP, we use the Benjamini-Hochberg procedure with FDR controlled at 0.05.
    As shown in the first two columns of \Cref{fig:simu-1}, the \GLMnaive regression method fails to control the inflated type-I error, resulting in numerous false positives. Furthermore, as the sample size increases, this method becomes even more anti-conservative.
    Conversely, the \GLMoracle regression method exhibits tight control over type-I error, as expected. When the latent embedding $U$ is known, we observe that \PII also effectively controls type-I error. Additionally, under certain conditions, \PII provides greater power than the \GLMoracle. 
    This may be attributed to \PII's ability to address the effect of collinearity between $X$ and $U$ on the nonlinear outcome models through a two-step procedure, whereas \GLMoracle does not, leading to conservative results; see \Cref{app:ex-simu} for more discussions.

    When the latent embedding $U$ is unknown, we evaluate the performance of the estimated $\hat{U}$.
    As shown in the third panel of \Cref{fig:simu-1}, the error of embedding projection matrix $\|P_{\hat{\bU}} - P_{\bU}\|_{\oper}$ decreases rapidly as the sample size $n$ increases. 
    When $U$ can be well approximated, \PII experiences a slightly inflated type-I error because it targets the modified main effect $\tbeta$ instead of the true effect $\beta$. 
    However, the statistical error remains reasonable, the FDP is controlled at the desired level, and \PII achieves greater power compared to the oracle \GLM in many cases.
    Lastly, \PII exhibits greater power when the conditional variation of $X$ given $U$ is large (i.e., $\VV(\epsilon)$ is relatively larger than the linear projected signal strength $\|\gamma\|$).
    One could potentially use the ratio of these two quantities as a metric to quantify the level of confounding.

    Finally, we assessed the robustness of \PII to misspecified surrogate controls. We simulated two scenarios: one in which the control set was contaminated with non-null genes, and another in which controls were selected empirically based on outcome variability. As shown in \Cref{fig:mis}, \PII demonstrates strong robustness in both settings, maintaining excellent error control with minimal loss of power. This underscores the method's practical reliability when the set of surrogate controls is imperfectly defined.

\section{Application on single-cell CRISPR data analysis}\label{sec:real-data}

    In a recent study by \citet{lalli2020high}, the molecular mechanisms of genes associated with neurodevelopmental disorders, particularly Autism Spectrum Disorder (ASD), were investigated using a modified CRISPR-Cas9 system. Experiments focused on 13 ASD-linked gene knockdowns in Lund Human Mesencephalic neural progenitor cells, with gene expression changes assessed through single-cell RNA sequencing. The progression of neuronal differentiation was estimated using a pseudotime trajectory (\Cref{fig:pseudotime}), which revealed that certain genetic perturbations impact this progression.

    While CRISPR screens are experiments, they exhibit a crucial observational nature at the single-cell level. The treatment, the presence of a specific guide RNA in a cell, is only identified post hoc from sequencing. 
    Therefore, factors that correlate with both the guide’s presence and the outcome, such as cell size, cycle stage, and microenvironmental heterogeneity, can act as confounders or mediators; see \Cref{rmk:mediator} for further discussion.
    To address these challenges, we use 4000 lowly variable genes as surrogate control outcomes for adjustment, focusing on 4163 highly variable genes for differential expression analysis on 8320 cells. 
    The data preprocessing procedure is detailed in \Cref{app:ex-real-data}.

    \begin{figure}[!t]
        \centering
        \includegraphics[width=0.7\linewidth]{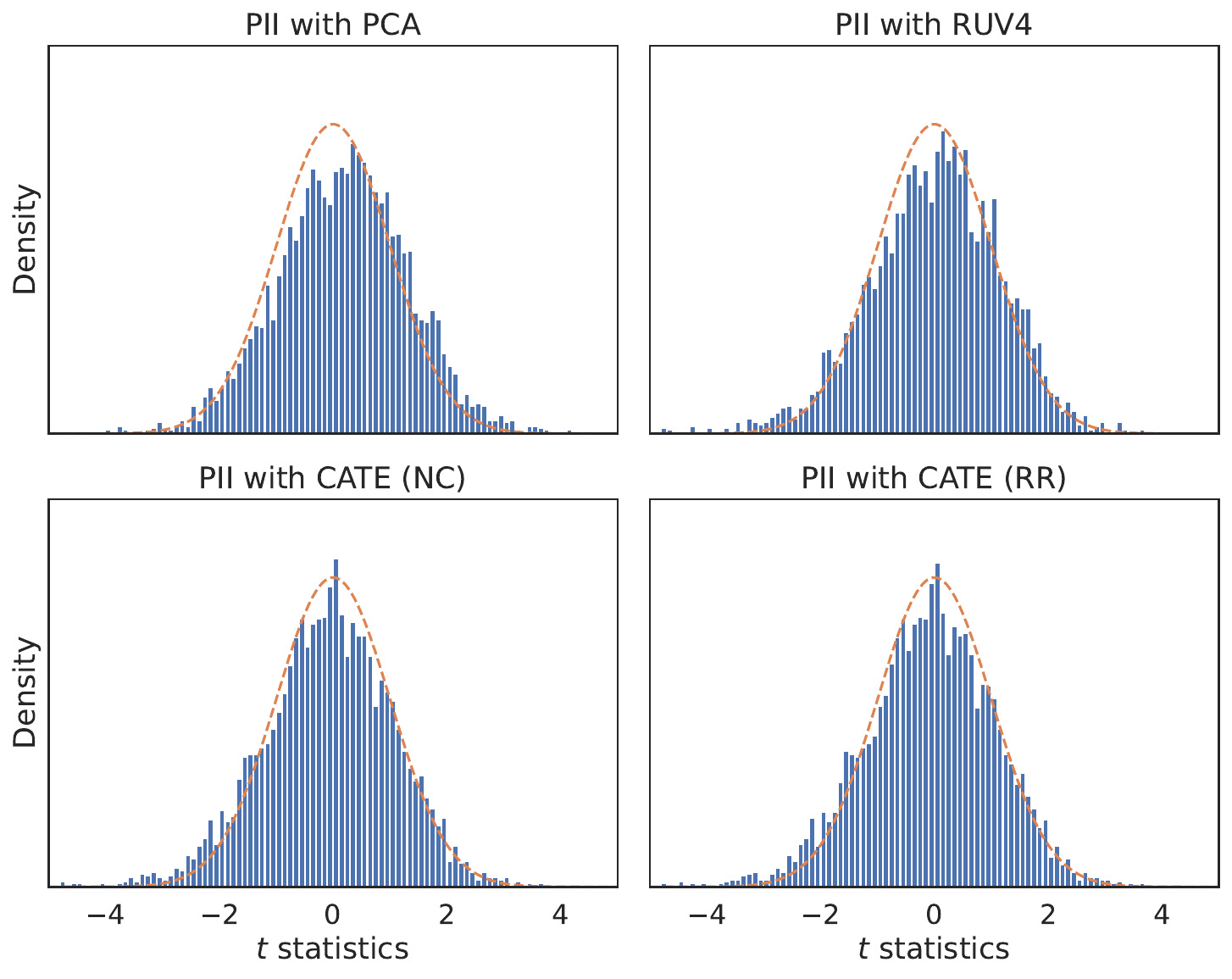}
        \caption{Histogram of $t$-statistics of \emph{PTEN} perturbation by different methods.
        PCA with 50 components, \RUV, \CATENC, and \CATERR.}
        \label{fig:stat-PII}
    \end{figure}

    We compare the proposed method with four methods for hypothesis testing:
    (1) \GLM: Score tests based on generalized linear models with Negative Binomial likelihood and log link function.
    The covariance matrix is estimated using the HC3-type robust estimator. 
    This method does not adjust for potential confounding effects.
    (2) \RUV: A heuristic method proposed by \citet{gagnon2012using} that uses principal components on the residual matrix of regressing the surrogate control outcomes on the covariate of interest to estimate the latent embeddings.
    Based on heuristic calculations, the authors claim that the RUV-4 estimator has approximately the oracle variance.
    (3) \CATENC: The deconfounding method \CATE proposed by \citet{wang2017confounder} with surrogate controls, which uses maximum likelihood estimation to estimate the latent embedding.
     Under simplified Gaussian linear models, they show that their estimator has asymptotic type I error control when the number of surrogate controls is large.
    (4) \CATERR: A variant of \CATE method \citep{wang2017confounder} with robust regression, which is similar to the heuristic algorithm LEAPP \citep{sun2012multiple} and utilizes the sparsity of effects to estimate the latent embeddings.

    For \PII, we first estimate the cell embeddings $\hat{U}$ using one of four methods (PCA, \RUV, \CATENC, or \CATERR) on the full dataset. We do not use sample splitting for this step, as our goal is to utilize all available information to construct the most accurate embeddings possible. Consequently, our target estimand is $\tbeta$. 
    The nuisance functions, including $\hat{\EE}[X \mid \hat{U}]$ and $\hat{\EE}[Y \mid \hat{U}]$, are then estimated using random forests, following the same procedure detailed in the previous simulation section.
    The first three methods utilize surrogate control to estimate the embedding, whereas the last is only valid under the assumption of sparsity in the effects.
    Before running PCA, we follow the preprocessing procedure in single-cell data analysis to adjust the library size of each cell to $10^4$, add one pseudo-count, and take the logarithm.
    We then select the top 50 principal components as the estimated embeddings.
    For the last three embedding estimation methods, we supply all 13086 genes as input, specify the set of surrogate control genes when applicable, and set the number of factors to 10, a value commonly used by researchers based on empirical evidence.
    Our sensitivity analyses show similar results with different numbers of factors or surrogate control outcomes in \Cref{app:ex-real-data}.

    \begin{figure}[!t]
        \centering
        \includegraphics[width=.7\textwidth]{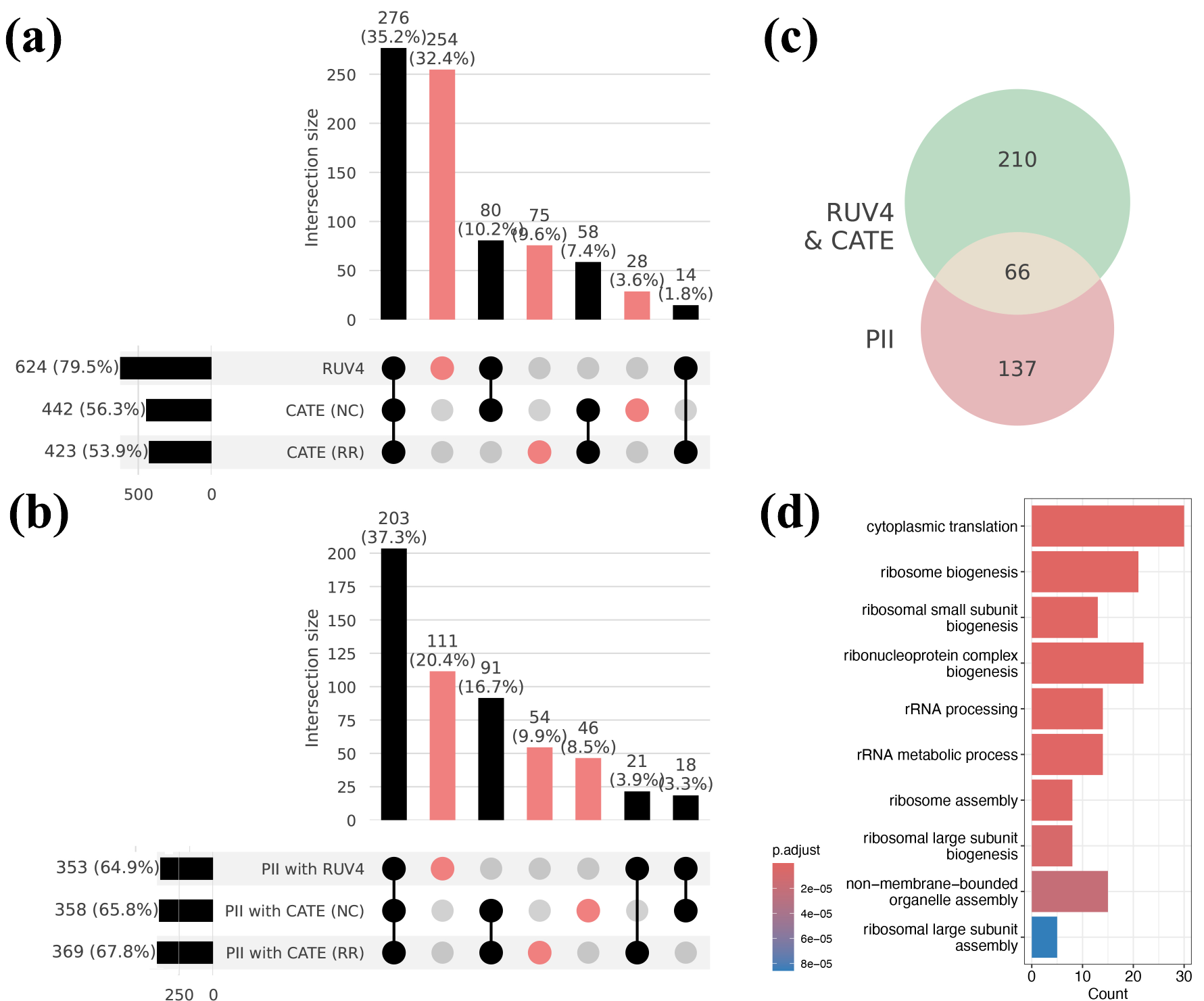}
        \caption{        
        Summary of significant genes (p-values$<0.05$) associated with \emph{PTEN} perturbation by different confounder adjustment methods.
        (a) Upset plot of discoveries by three methods: \RUV, \CATENC, and \CATERR, as in \Cref{fig:stat-all}.
        (b) Upset plot of discoveries by \PII with embedding estimated by three methods: \RUV, \CATENC, and \CATERR, as in \Cref{fig:stat-PII}.
        (c) The Venn plot of two sets of discoveries.
        One set includes 276 common discoveries by \RUV, \CATENC, and \CATERR, while the other includes 203 common discoveries by \PII with the same estimated embeddings by the three methods.
        (d) Gene ontology analysis of 137 distinct discoveries by \PII.
        }
        \label{fig:upset}
    \end{figure}

    The study by \citet{lalli2020high} suggests that specific perturbations influence changes in gene expression along pseudotime, potentially affecting the rate of development. 
    Using pseudotime as a covariate allows us to examine if perturbations explain effects beyond developmental changes; see \Cref{app:ex-real-data} for the extended background of the data. 
    Biologically, we expect more signals on pseudotime states (\Cref{fig:stat-pt}) than on perturbation conditions.

    We focus on the target gene \emph{PTEN}, which is crucial in neural development and differentiation and influences other genes in a cascading manner when it is knocked down \citep{lalli2020high}. 
    Examining the empirical distribution of test statistics for perturbation conditions using \GLM reveals conservative results for genes like \emph{CTNND2}, \emph{MECP2}, and \emph{MYT1L} (\Cref{fig:stat-all}). 
    This suggests that \GLM without adjusting for hidden confounders leads to biased hypothesis testing.    
    \PII corrects these biases (\Cref{fig:stat-other}, \Cref{fig:stat-PII}), and even PCA-based simple embedding estimation effectively calibrates test distributions. Comparing methods \RUV, \CATENC, \CATERR with \PII, we see \PII reduces distinct discoveries from 45.6\% to 38.8\% (\Cref{fig:upset}(a)), indicating more coherent outcomes with \PII.
     
    To further assess the biological significance of the discoveries, we examine 276 and 203 common discoveries (\Cref{fig:upset}(a) and \Cref{fig:upset}(b)). Significant discrepancies are noted (\Cref{fig:upset}(c)). 
    The associated gene ontology terms on biological processes using \texttt{clusterProfiler} package with default false discovery control threshold \citep{yu2012clusterprofiler} reveal that, unlike the 210 genes from \RUV, \CATENC, and \CATERR, which had no associated GO terms, the 137 genes unique to \PII align with ribosome-related processes (\Cref{fig:upset}(d)), supporting studies on \emph{PTEN}'s impact on these processes \citep{liang2017ptenbeta,cheung2023neuropathological}.

\section{Discussion}\label{sec:discussion}
    Extending beyond the CRISPR analysis demonstrated in the previous section, our method is broadly applicable to purely observational single-cell data. 
    For instance, the method is well-suited to assess the effects of a discrete exposure, like cell type or a developmental branching event, or a continuous exposure, such as the expression of a particular gene, while accounting for unwanted variations. Importantly, these estimates can be interpreted as causal effects when the necessary identifying assumptions are satisfied.

    A potential concern of the proposed method is whether the estimated embeddings might act as colliders, especially if $\hat{U}$ is influenced by both $X$ and $Y_{\cC^c}$. However, our fundamental assumption is that $Y_{\cC}$ is driven by a low-dimensional embedding $U$ but not the covariate $X$, which inherently mitigates the risk of $\hat{U}$ becoming a collider. If this foundational assumption does not hold, the direct effect estimand \eqref{eq:beta} might not align with researchers' interests, necessitating the use of domain knowledge to identify and investigate alternative target estimands.

    Our paper focuses on design-free deconfounding with surrogate control outcomes, though other strategies exist (detailed comparison in \Cref{app:comp}). While our framework allows for flexible machine learning algorithms, it introduces computational complexity, especially with increasing outcomes and hyperparameter tuning. For practical applications, specialized models like variational autoencoders for joint outcome function fitting \citep{du2022robust,moon2024augmented} and efficient cross-validation methods can be beneficial.
    
    Further extensions involve incorporating interaction effects \citep{vansteelandt2022assumption}, developing tests for nonparametric confounding \citep{miao2018identifying}. 
    Explorations into settings with high-dimensional latent embeddings and covariates \citep{miao2023identifying, zeng2024causal} could also be of interest. 
    In simulation and real data analysis, we use misspecified surrogate control outcomes, which can be viewed as one variant of the synthetic control approaches \citep{abadie2010synthetic}.
    Providing theoretical guarantees for the valid construction of surrogate control outcomes from data remains an area of practical interest.

\section*{Supplement}    
Related work, comparison with related deconfounding strategies, proofs of all results stated in the main text, experiment details, and extra results are given in the supplement.
The code for reproducing the results of this paper can be found at \url{https://github.com/jaydu1/PII}.

\bibliographystyle{biometrika}
\bibliography{ref}

\clearpage
\appendix

\counterwithin{theorem}{section}
\counterwithin{lemma}{section}
\counterwithin{remark}{section}

\renewcommand{\thesection}{\Alph{section}}
\renewcommand{\thetheorem}{\Alph{section}\arabic{theorem}}
\renewcommand{\thelemma}{\Alph{section}\arabic{lemma}}
\renewcommand{\theremark}{\Alph{section}\arabic{remark}}

\setcounter{table}{0}
\renewcommand{\thetable}{\thesection\arabic{table}}
\setcounter{figure}{0} 
\renewcommand\thefigure{\thesection\arabic{figure}}
\renewcommand{\thealgorithm}{\thesection\arabic{algorithm}}
\makeatletter
\renewcommand{\theequation}{S\@arabic\c@equation}
\makeatother

\begin{center}
\Large
{\bf Supplementary material for ``Assumption-Lean Post-Integrated\\ Inference with Surrogate Control Outcomes''}
\end{center}

The appendix includes related work, comparisons of deconfounding strategies, proofs of all theoretical results, and additional experimental results.
The outline and the summary of the notation are given below.

\paragraph{Outline} The structure of the appendix is listed below:
\begin{table}[!ht]\vspace{-2mm}
\centering \small
\begin{tabularx}{0.9\textwidth}{l l l }
    \toprule
    \multicolumn{2}{c}{\textbf{Appendix}} & \textbf{Content} \\
    \midrule \addlinespace[0.5ex] 
    \Cref{app:related-work} & \cellcolor{lightgray!25} & Related work. \\ \addlinespace[0.5ex] \cmidrule(l){1-3}\addlinespace[0.5ex] 
    \multirow{2}{*}{\Cref{app:comp}} 
    & \ref{app:comp-design-based} & Comparisons with design-based deconfounding strategies. \qquad\qquad\qquad\qquad\quad\\ 
     & \ref{app:comp-sparsity} & Comparisons with sparsity-based deconfounding strategies. \\ 
    \addlinespace[0.5ex] \cmidrule(l){1-3}\addlinespace[0.5ex]  
    \Cref{app:nonparam} & \cellcolor{lightgray!25} & Proof of \Cref{thm:iden-neg-outcomes}. \\ \addlinespace[0.5ex] \cmidrule(l){1-3}\addlinespace[0.5ex] 
    \multirow{3}{*}{\Cref{app:est}} & \ref{app:est:bias} & Proof of \Cref{thm:err-bound-beta-U}. \\
     & \ref{app:est:U_hat} & Proof of \Cref{lem:est-err-U_hat}. \\
     & \ref{app:est:aux} & Auxiliary lemmas: \Cref{lem:err-perturb}, \ref{lem:Lipschitz-Lq-norm}, and \ref{lem:err-bound-cond-exp}.\\\addlinespace[0.5ex] \cmidrule(l){1-3}\addlinespace[0.5ex] 
    \multirow{4}{*}{\Cref{app:semi}} &  \ref{app:semi:eff} & Proof of \Cref{thm:DR-linear} and \Cref{cor:inference}. \\    
    &  \ref{app:semi:simul-inference} & Proof of \Cref{prop:simul-inference}. \\
    &  \ref{app:semi:DR} & DR inference with nonlinear link function: \Cref{thm:DR}. \\
    & \ref{app:semi:aux} & Auxiliary lemmas: \Cref{lem:eff-inf}, \ref{lem:dr-est}, \ref{lem:lindeberg} and \ref{lem:chebyshev}.  \qquad \qquad \qquad \qquad \qquad  \\\addlinespace[0.5ex] \cmidrule(l){1-3}\addlinespace[0.5ex] 
    \multirow{2}{*}{\Cref{app:ex}} &  \ref{app:ex-simu} & Experiment details and extra results on simulation.\\
     &  \ref{app:ex-real-data} & Experiment details and extra results on real data.\\
    \addlinespace[0.5ex] \arrayrulecolor{black}
    \bottomrule
\end{tabularx}
\vspace*{-1.7\baselineskip}
\end{table}
\addcontentsline{toc}{part}{\appendixname}

\vfill
\paragraph{Notation}
Throughout our exposition, we will use the following notational conventions.
We use uppercase letters for random variables/vectors (e.g., $Y,X,U$) and lowercase letters for sample vectors, respectively (e.g., $y,x,u$).
For a matrix $\beta\in\RR^{d\times p}$, its $j$th column is denoted by $\beta_{\cdot j}$.
Sets are denoted by calligraphic uppercase letters ($\cA,\cC$).
Bold font is only used to denote design matrices and response matrices (e.g., $\bY,\bX,\bU$) whose first dimension equals the sample size.
For a set $\cA$, let $|\cA|$ be its cardinality.

For a random vector $X\in\RR^p$, $\cP_{X}$ denotes the projection in $L_2$.
For any matrix $A\in\RR^{n\times p}$ with full column rank, let $P_{A} = A(A^{\top}A)^{-1}A^{\top}$ and $P_{A}^{\perp} = I_p - P_{A}$ be the orthogonal projection matrices on the $A$'s column space and its orthogonal space, respectively.
For any square matrix $A\in\RR^{n\times n}$, $\lambda_i(A)$ denotes its $i$th eigenvalue.
The Gram matrix of $A^{\top}$ is denoted by $A^{\otimes2} = AA^{\top}$.
Matrix Hadamard product is denoted by $\odot$.
For two symmetric matrices $A,B\in\RR^{n\times n}$, we write that $A \preceq B$ ($A \succeq B$) if $B - A$ ($A- B$) is positive semi-definite.
    For $a\in\RR^m$, $\|a\|_q$ denotes the $\ell_q$-norm for $q=1,\ldots,\infty$.
    For $a\in\RR^m$, $A\in\RR^{m\times n}$, $\|a\|$ and $\|A\|$ denote the $\ell_2$-norm and operator norm, respectively.
    The condition number of $A$ is defined as $\kappa(A) = \|A\|\|A^{-1}\|$.
    For any random vector $X$, its $L_q$ norm is defined as $\|X \|_{\Lp{q}}=\EE[\|X\|_{q}^q]^{1/q}$ for $q=1,\ldots,\infty$.

    Let $\PP$, $\PP_n$, and $\hat{\PP}$ denote the joint probability measure of the data,  the empirical measure for a sample size $n$, and the ``plug-in'' estimated measure, where the unknown nuisance functions are replaced by their machine learning estimates. 
    For (potentially random) measurable functions $f$, we denote expectations with respect to $Z$ alone by $\PP f(Z) = \int f \rd\PP$ and with respect to both $Z$ and the observations where $f$ is fitted on by $\EE [f(Z)]$, and the empirical expectation is denoted by $\PP_n f(Z) = \frac{1}{n}\sum_{i=1}^nf(Z_i)$ for $Z_1,\ldots,Z_n\overset{i.i.d.}{\sim} \PP$.
    The notation $\sqrt{n} (\PP_n-\PP) f$ denotes the $\sqrt{n}$-scaled empirical process $n^{-1/2}\sum_{i=1}^n(f(O_i) - \EE[f(O_i)])$.
    Similarly, the population and empirical variances (or covariances) are denoted by $\VV$ and $\VV_n$, respectively.
    The identity map is denoted by $\II$.
    We write the (conditional) $L_p$ norm of $f$ as $\|f\|_{\Lp{p}} = \left[\int f(z)^p \rd \PP(z) \right]^{1/p}$ for $p\geq1$.
    
    We use ``$o$'' and ``$\cO$'' to denote the little-o and big-O notations and let ``$\op$'' and ``$\Op$'' be their probabilistic counterparts over all randomnesses.
    For sequences $\{a_n\}$ and $\{b_n\}$, we write $a_n\ll b_n$ or $b_n\gg a_n$ if $a_n=o(b_n)$; $a_n\lesssim b_n$ or $b_n\gtrsim a_n$ if $a_n=\cO(b_n)$; and $a_n\asymp b_n$ if $a_n=\cO(b_n)$ and $b_n=\cO(a_n)$.
    For $a,b\in\RR$, we write $a\vee b = \max\{a,b\}$ and $a\wedge b = \min\{a,b\}$.

\clearpage

\section{Related work}\label{app:related-work}

        \paragraph{Batch correction and data integration.}
        Large-scale single-cell transcriptomic datasets often include samples that span locations, laboratories, and conditions, leading to complex, nested batch effects in data \citep{tran2020benchmark,luecken2022benchmarking}.
        Batch correction specifically targets the removal of unwanted variation due to differences in batches within a single study, ensuring that the remaining data is comparable and reflects true biological differences.
        On the other hand, data integration focuses on combining and harmonizing multiple datasets to enhance statistical power and provide a more comprehensive analysis, dealing with both batch effects and between-dataset heterogeneity.
        Despite these differences, batch correction and data integration share the common goal of removing unwanted variation and preserving biological variation \citep{zhang2023signal}.
        The integrated data is then used for downstream analysis, such as dimension reduction, clustering, and differentially expressed gene testing.
        Integrated cellular profiles are typically represented as an integrated graph, a joint embedding, or a corrected response matrix.
        The main focus of the current paper is on the last category.

        Despite the efforts from the computational biology and machine learning community to achieve better predictive power and data alignment, most existing batch correction methods are shown to be poorly calibrated \citep{antonsson2024batch,ma2024principled}.
        For statistical inference, many heuristic methods have been proposed to remove the batch effects and unwanted variations in the past decade.
        Leading examples include Remove Unwanted Variation (RUV) \citep{gagnon2012using} and Surrogate Variable Analysis (SVA) \citep{leek2012sva}.
         RUV/SVA uses estimated latent factors from unadjusted data, which works even if the batch design is unknown.
        When the batch design is known, two-step procedures have also been proposed under parametric or mixture models \citep{li2023overcoming,luo2018batch}.

        \paragraph{Unmeasured confounders adjustment and negative control outcomes.}
        Over the past decades, researchers have been exploring methods to address the issue of unmeasured confounders in statistical analysis. In the presence of multiple outcomes, deconfounding techniques primarily employ two strategies: incorporating known negative control outcomes or leveraging sparsity assumptions \citep{wang2017confounder}; while there is also another line of research on proximal causal inference, which uses both negative control outcomes and/or exposures for deconfounding \citep{miao2018identifying}.
         For a comprehensive review of the literature on sparsity-based methods, readers are directed to \citet{du2023simultaneous} and \citet{zhou2024promises}.
        This paper focuses on the negative control approach in the context of multiple outcomes.

        Most existing works on confounder adjustment presume the knowledge of causal structure when the unobserved variable $U$ is a mediator \citep{wang2017confounder} and when $U$ is a confounder \citet{miao2023identifying}, corresponding to \Cref{fig:batch-correction}(a) and \Cref{fig:batch-correction}(b), respectively. 
        Recently developed sparsity-based methods by \citet{bing2023inference,du2023simultaneous} have tried to relax this assumption to allow for a more flexible relationship between $X$ and $U$.
        In particular, each entry of $U$ can belong to different cases in \Cref{fig:batch-correction}.

        Negative control outcomes are used in observational studies under the key assumption that exposure has no causal effect on these outcomes. \citet{rosenbaum1989role} demonstrated that negative control outcomes can be employed to test for the presence of hidden confounding in observational studies.
        By introducing an additional variable known as a negative control exposure, \citet{miao2018identifying} further showed that the average causal effect can be identified nonparametrically. 
        Building upon this work, \citet{shi2020multiply} developed a semiparametric inference procedure specifically for scenarios involving a categorical latent confounder and a binary exposure.
        Under linear latent models, \citet{galbraith2020simple} use principal components of a set of potential controls to adjust for unmeasured confounding effects.
        Under nonparametric models for a single outcome and multiple treatments \citet{miao2023identifying} derive nonparametric identification conditions.

        \paragraph{Assumption-lean semiparametric inference.}
        There is increasing interest in deriving assumption-lean inference with projection-based estimators \citep{berk2021assumption} or semiparametric estimators \citep{vansteelandt2022assumption}.
        The inferential problems we considered are also related to two-stage inference problems, such as post-sufficient dimension reduction inference \citep{kim2020post}, post-imputation inference \citep{moon2024augmented}, and inference with substituted covariate \citep{adams2024substitute} or nonparametrically generated covariates \citep{mammen2012nonparametric}.
        While these related methods offer valuable insights into two-stage procedures, they do not directly extend to address post-integrated inference problems.

\clearpage
\section{Comparisons with related deconfounding approaches}\label{app:comp}

\subsection{Design-based approaches}\label{app:comp-design-based}

    As mentioned in the introduction, our paper mainly focuses on design-free data integration approaches.
    However, it is possible to relate the design-based approaches to design-free approaches so that the proposed method can be applied, as we discuss below. 
    Design-based data integration approaches, such as Combat \citep{johnson2007adjusting} and BUS \citep{luo2018batch}, are usually based on a linear model:
    \begin{align*}
        Y_{j} &= \alpha_j+ X \beta_{j} + \gamma_{Bj}+ \epsilon_{Bj} \quad (j=1,\ldots,p),
    \end{align*}
    where $\alpha_j,\beta_j\in\RR$ are coefficients for common variations while $\gamma_{Bj}\in\RR$ is the location and $\epsilon_{Bj}\in\RR$ is a mean-zero noise with scale differences across batches, respectively, for batch $B\in[n_B-1]$ (with group $B=0$ being the baseline and $\gamma_{0j}=0$) and $n_B$ is the total number of batches.
    This implies that
    \begin{align*}
        \EE[Y_j \mid X, B] &= \alpha_j+ X \beta_j + \gamma_{Bj}\quad (j=1,\ldots,p).
    \end{align*}
    Let $U_B\in\{0,1\}^{n_B}$ be the one-hot vector with only the $B$-th entry being one and zero elsewhere.
    Then, we can rewrite the above as
    \begin{align*}
        \EE[Y \mid X, U_B] &= \alpha + \beta^{\top} X  + f(U_B)
    \end{align*}
    where $f(U_B) = \gamma^{\top} U_B $ and $\gamma = [\gamma_{bj}]_{b\in[n_B],j\in[p]}$.
    In other words, the location-and-scale model considered by \citet{johnson2007adjusting} and \citet{luo2018batch} is a special case of partial linear models with heterogeneous noises, though they have utilized empirical Bayes shrinkage to improve the estimates.
    For this reason, a generalized least square approach could be used to improve Combat, as suggested by \citet{li2023overcoming}.

    In fact, when the additive noises are normal, we can decompose the noise as $\epsilon_{Bj} = U_{\epsilon} + Z_{j} $ for $B>0$ such that $U_{\epsilon} \indep \epsilon_{j}'$ and $\epsilon_{j}'\overset{\mathrm{d}}{=} \epsilon_{0j}$.
    To see this, define $\tau^2=\VV(\epsilon_{Bj})$ and $\sigma^2 = \VV(\epsilon_{0j})$.
    Without loss of generality, we assume $\epsilon_{0j}\leq\min_{b\in[n_B]}\epsilon_{bj}$ so that $\tau^2\geq \sigma^2$.
    If we define $U_{\epsilon} = \frac{1}{\tau(\tau^2-\sigma^2)} \epsilon_{Bj} + Z_{j}$ and $\epsilon_{j} '=\epsilon_{Bj} - U_{\epsilon}$, where $Z_j \sim \cN(0,\frac{\sigma^2}{(\tau^2-\sigma^2)^2})$ is independent of $\epsilon_{Bj}$, then $U_{\epsilon}$ and $\epsilon_{Bj}-U_{\epsilon}$ are independent because $\Cov(U_{\epsilon},\epsilon_{Bj}-U_{\epsilon}) = ({\tau^2 (\tau^2 - \sigma^2)^2})^{-1} \VV(\epsilon_{Bj}) - \VV(Z_j) = 0$.
    Here, we use the fact that two jointly normal random variables are independent if they are uncorrelated.
    In other words, we can rewrite the above model as
    \begin{align*}
        \EE[Y \mid X, U] &= \alpha + \beta^{\top} X  + f(U),
    \end{align*}
    where $f(U) = [\gamma,\ind\{B>0\}]^{\top} U $ and $U = [U_B,U_{\epsilon}]$.
    By absorbing part of the randomness of the additive noises into $U$, we convert the problem with heterogeneous noises into one with homogeneous noises studied in the current paper.

    \subsection{Unknown negative control outcomes}\label{app:comp-sparsity}

    In this paper, we have focused on negative control outcomes to remove unwanted variations.
    When the negative control outcomes are unknown in advance, there are still possibilities to estimate the latent embedding and provide valid inferences.
    However, this typically requires extra sparsity assumptions on the effects of the covariate on multiple outcomes rather than utilizing the negative control outcomes.
    To illustrate the idea, we consider the following partial linear model: 
    \[\EE[Y\mid X,U] = \beta^{\top}X + h(U).\]
    Many methods start from the projected model
    \begin{align}
        \EE[\cP_{X}^{\perp}Y\mid X,U] &= \cP_{X}^{\perp} h(U) . \label{eq:nonlinear-eq-1} 
    \end{align}
    If the function $\cP_X^{\perp}h$ has a good structure, then one may be able to recover $U$ from $\cP_{X}^{\perp}Y$.
    Alternatively, we can linearize the problem and seek partial recovery of the effect, as demonstrated below.

    \begin{example}[Linear models]\label{ex:linear}
        If $h$ is a linear function such that $h:U\mapsto \eta^{\top} U$ for $\eta\in\RR^{r\times p}$, then
        \begin{align*}
            Y &= [\beta \,\, \eta]^{\top} \begin{bmatrix} X \\ U \end{bmatrix} + Z.
        \end{align*}  
        With $n$ i.i.d. samples, we obtain the following equation in matrix form:
        \begin{align}
            \bY & = \bX \beta + \bU \eta + \bZ. \label{eq:sample-linear-gau}
        \end{align}
        Early methods in the literature rely on the assumption of the functional relationship between $X$ and $U$.
        For example, \citet{wang2017confounder} assume $U$ to be a linear function of $X$ with an additive Gaussian noise:
        \begin{align}
            U = X \alpha + W,   \label{eq:U-X}
        \end{align}
        in which case the sample counterpart of \eqref{eq:nonlinear-eq-1} reduces to 
        \begin{align}
            P_{\bX}^\perp \bY =  P_{\bX}^\perp (\bW\eta + \bZ).\label{eq:linear-gau-eq-1}
        \end{align}
        Because the orthogonal projection is rank-deficient, one can further eliminate $d$ rows of the above system of equations by elementary matrix transformation.
        For this purpose, \citet{wang2017confounder} use QR decomposition by Householder rotation to derive a linear system of $n-d$ equations; e.g., Equation (2.5) and Equation (4.5) in \citet{wang2017confounder} for $d=1$ and $d>1$, respectively.
        From this, $\hat{\eta}$ is recovered from quasi-log-likelihood estimation.
        In the second step, the unknown coefficient $(\alpha,\beta)$ is estimated from \eqref{eq:linear-gau-eq-1} by plugging in the estimate $\hat{\eta}$.

        Under more general confounding machinism when \eqref{eq:U-X} does not necessarily hold, \citet{bing2022adaptive,bing2023inference} rotate the original system to consistently estimate the marginal effect, under sparsity assumption on $\beta$ and proper moment assumptions.
        They then use the residual from the lava fit to uncover the column space of $\eta$.
        Finally, the partial coefficient $\beta P_{\eta}^{\perp} $ is recovered from the rotated system:
        \[\bY P_{\eta}^{\perp} = \bX \beta P_{\eta}^{\perp} + \bZ P_{\eta}^{\perp}. \]
        These results have been extended to generalized linear models by \citet{du2023simultaneous} using joint maximum likelihood estimation.
        When $\beta$ is sparse, then it can be recovered by some estimator of $\beta P_{\eta}^{\perp} $ asymptotically.
        Note that the above approaches do not have too many restrictions on the observed covariate $X$ and the latent embedding $U$, except for certain bounded moment assumptions.
    \end{example}

    Inspired by the success of methodologies development under linear models \Cref{ex:linear}, one strategy for an extension to a nonlinear model is by linearizing the estimation problem.
    Specifically, suppose $h(U) = U\eta + R(U)$ for some remainder term $R$ that depends on $U$, similarly we have a projection-based decomposition:
    \begin{align*}
        P_{\bX}^\perp \bY &= P_{\bX}^{\perp}\bU\eta + P_{\bX}^\perp (R(\bU) + \bZ) \\
        \bY P_\eta^{\perp} &= \bX \beta P_\eta^{\perp} + (R(\bU) + \bZ) P_\eta^{\perp} ,
    \end{align*}
    from which one may seamlessly use the methods by \citet{bing2022adaptive,bing2023inference} and \citet{du2023simultaneous} when the remainder term can be well controlled.

    Another possible strategy aligned with the angle of the current paper is to detect ``weak'' negative control outcomes and perform post-integrated inference based on such negative control outcomes, as in \Cref{sec:real-data}.
    This approach is very similar to weak instrument detection and invalid instrumental variables selection; see, for example, \citet{andrews2019weak} and \citet{windmeijer2021confidence}.
    We expect the rich literature on these related problems to lead to new methodological advances in post-integrated inference problems.

\clearpage

\section{Nonparametric identification}\label{app:nonparam}

\begin{proof}[of \Cref{thm:iden-neg-outcomes}]
        Under the equivalence assumption (\Cref{asm:neg-outcomes}~\hyperlink{asm:equiv}{(ii)}), for any admissible distribution $\tf (y_{\cC}, u)$ we must have some invertible function $v$ such that $\tf(y_{\cC},u) = f \{ Y_{\cC}=y_{\cC},v(U) = u\}$.
        Note that \eqref{eq:fyx_proxy} has at least one solution $\tf(x\mid u)=f(x\mid v^{-1}(u))$;
        when this is the solution, define $\tf(y_{\cC^c} \mid x,u) = f(y_{\cC^c} \mid x,v^{-1}(u))$.
        Then, $\tf(y_{\cC^c} \mid x,u)$ is also one solution to \eqref{eq:fy_x_proxy}.

        Because $v (U)$ is invertible, the ignorability assumption (\Cref{asm:causal}~\hyperlink{asm:nuc}{(iii)}) $Y (x)\indep X \mid U$ implies that $Y (x) \indep X \mid v (U)$;
        the completeness assumption \Cref{asm:neg-outcomes}~\hyperlink{asm:comp}{(iii)} implies that $\tf(u)>0$ on $u\in v(\cU)$ and $\tf (u \mid y_{\cC} ,x; \alpha)$ is also complete in $y_{\cC}$.
        Further, from \Cref{asm:causal}~\hyperlink{asm:positivity}{(ii)}, the positivity condition $f_{X\mid v(U)}(x\mid v(u))\in(0,1)$ also holds for all $(x,u)\in\cX\times \cU$.
        Then, we have
        \begin{align}
            f_{Y(x)}(y) &= \int f_{Y(x)\mid U}(y \mid u) f(u) \rd u \notag\\
            & = \int f_{Y(x)\mid U,X}(y \mid u, x) f(u) \rd u \tag{\Cref{asm:causal}~\hyperlink{asm:positivity}{(ii)}-\hyperlink{asm:nuc}{(iii)}}\\
            & = \int f(y \mid u, x) f(u) \rd u \tag{\Cref{asm:causal}~\hyperlink{asm:consistency}{(i)}}\\
            & = \int f(y_{\cC^c} \mid u, x) f(y_{\cC} \mid u) f(u) \rd u \tag{\Cref{asm:neg-outcomes}~\hyperlink{asm:indep}{(i)}}\\
            & = \int f(y_{\cC^c} \mid u, x) f(y_{\cC},u) \rd u \notag \\
            &=\int \tf(y_{\cC^c} \mid x,u)\tf(y_{\cC}, u ) \rd u , \notag
        \end{align}
        where the last equality follows from the same derivation of the g-formula applied to random variables $(Y,X,v(U))$.
        This completes the proof for the second conclusion.

        We next show the uniqueness of the solutions to \eqref{eq:fyx_proxy} and \eqref{eq:fy_x_proxy}.
        For any candidate solutions $\tf_1 (x\mid u)$ and $\tf_2 (x\mid u)$ to \eqref{eq:fyx_proxy}, we must have that 
        \[\int (\tf_1 (x\mid u) - \tf_2 (x\mid u)) \tf(u) \rd u = 0,\]
        which implies that $\tf_1 (x\mid U) - \tf_2 (x\mid U) = 0$ almost surely because of the completeness of $\tf(u)$.
        Note that for any candidate solutions $\tf_1 (y_{\cC^c} \mid x, u)$ and $\tf_2 (y_{\cC^c} \mid x, u)$ to \eqref{eq:fy_x_proxy}, we must have that 
        \[
            \int(\tf_1 (y_{\cC^c} \mid x, u)-\tf_2 (y_{\cC^c} \mid x, u)) \tf (u \mid y_{\cC},x)\rd u \cdot f(y_{\cC},x) = 0.
        \]
        By the completeness property, this implies that $\tf_1 (y_{\cC^c} \mid x, U)-\tf_2 (y_{\cC^c} \mid x, U) =0$ almost surely. 
        Therefore, $\tf (y_{\cC^c} \mid x, u)$ is uniquely determined from \eqref{eq:fy_x_proxy}.
        This completes the proof.
    \end{proof}

\clearpage
\section{Nonlinear main effects with estimated embeddings}\label{app:est}

\subsection{Proof of \Cref{thm:err-bound-beta-U}}\label{app:est:bias}
\begin{proof}[of \Cref{thm:err-bound-beta-U}]
        Denote $A = \EE[\Cov(X\mid U)]$, $ \hat{A} = \EE[\Cov(X\mid \hat{U})]$, $B = \EE[\Cov(X,\EE[Y\mid X,U]\mid U)]$, and $\hat{B} = \EE[\Cov(X,\EE[Y\mid X,U]\mid \hat{U})]$.
        From \Cref{lem:err-perturb}, we know that the error of two linear regression coefficients $\|\tbeta_{\cdot j} - \beta_{\cdot j}\|$ is governed by $\|A-\hat{A}\|$ and $\|B_{\cdot j}-\hat{B}_{\cdot j}\|$, where the subscript $j$ indicates the $j$th column of the corresponding matrices.

        \paragraph{Part (1) Covariance estimation errors.}
        To apply \Cref{lem:err-perturb}, we first derive the error bounds for the two quantities.
        Note that $\Cov(X\mid U) = \EE[X^{\otimes 2} \mid U] - \EE[X \mid U]^{\otimes 2}$. We have
        \begin{align*}
            &\|A-\hat{A}\|\\
            =&\|\EE[\Cov(X \mid U) - \Cov(X \mid \hat{U})] \| \\
            =& \|\EE[\EE[X \mid U]^{\otimes 2} - \EE[X \mid \hat{U}]^{\otimes 2}]\| \\
            \leq& \EE[ \|\EE[X \mid U]^{\otimes 2} - \EE[X \mid \hat{U}]^{\otimes 2}\|] \tag{Jensen's inequality} \\
            \leq& \EE[ \|\EE[X \mid U](\EE[X \mid U]-\EE[X \mid \hat{U}])^{\top}\| + \|(\EE[X \mid U] - \EE[X \mid \hat{U}])\EE[X \mid \hat{U}]^{\top}\|] \tag{triangle inequality}\\
            =& \EE[ \| \EE[X \mid U] \| \|\EE[X \mid U]-\EE[X \mid \hat{U}]\| + \|\EE[X \mid U] - \EE[X \mid \hat{U}]\| \|\EE[X \mid \hat{U}] \|] \\
            \leq & (\| \EE[X \mid U] \|_{\Lp{2}} + \| \EE[X \mid \hat{U}] \|_{\Lp{2}} )\| \EE[X \mid \hat{U}] - \EE[X \mid U]  \|_{\Lp{2}} \tag{Cauchy–Schwarz inequality}\\
            \leq & 2\|X\|_{\Lp{2}}\| \EE[X \mid \hat{U}] - \EE[X \mid U]  \|_{\Lp{2}}. \tag{Jensen's inequality}
        \end{align*}
        Similarly, the second covariance estimation error can be upper bounded as
        \begin{align*}
            &\|B_{\cdot j}-\hat{B}_{\cdot j}\|\\
            =
            &\|\EE[\Cov(X,\EE[Y_j\mid X,U] \mid U) - \Cov(X, \EE[Y_j\mid X,\hat{U}] \mid \hat{U})]\| \\
            =&\|\EE[\EE[X \mid U]\EE[Y_j \mid X,U] - \EE[X \mid \hat{U}]\EE[Y_j \mid X,\hat{U}]]\| \\
            \leq& \EE[ \|\EE[X \mid U]\EE[Y_j \mid X,U] - \EE[X \mid \hat{U}]\EE[Y_j \mid X,\hat{U}]\|] \tag{Jensen's inequality} \\
            \leq& \EE[ \|\EE[X \mid U](\EE[Y_j \mid X,U]-\EE[Y_j \mid X,\hat{U}])\| + \|(\EE[X \mid U] - \EE[X \mid \hat{U}])\EE[Y_j \mid X,\hat{U}]\|] \tag{triangle inequality}\\
            =& \EE[ \| \EE[X \mid U] \| |\EE[Y_j \mid X,U]-\EE[Y_j \mid X,\hat{U}] | + \|\EE[X \mid U] - \EE[X \mid \hat{U}]\| |\EE[Y_j \mid X,\hat{U}] |] \\
            \leq & \| \EE[X \mid U] \|_{\Lp{2}} \| \EE[Y_j \mid X,\hat{U}] - \EE[Y_j \mid X,U]  \|_{\Lp{2}} + \| \EE[Y_j \mid X,\hat{U}] \|_{\Lp{2}} \| \EE[X \mid \hat{U}] - \EE[X \mid U]  \|_{\Lp{2}}  \tag{Cauchy–Schwarz inequality}\\
            \leq & \| X \|_{\Lp{2}} \| \EE[Y_j \mid X,\hat{U}] - \EE[Y_j \mid X,U]  \|_{\Lp{2}} + \| Y_j \|_{\Lp{2}} \| \EE[X \mid \hat{U}] - \EE[X \mid U]  \|_{\Lp{2}}. \tag{Jensen's inequality}
        \end{align*}

        \paragraph{Part (2) Coefficient estimation error in terms of covariance estimation errors.}
        When $\|\EE[X\mid \hat{U}] - \EE[X\mid U]\|_{\Lp{2}} < \sigma/(2M)$, from Part (1) we have $\kappa(A)\|A-\hat{A}\|/\|A\|=\|A^{-1}\|\|A-\hat{A}\|< 1$.
        From \Cref{lem:err-perturb}, we further have that
        \begin{align*}
            \max_{j\in\cC^{c}} \|\tbeta_{\cdot j} - \beta_{\cdot j}\| &\leq \frac
            {\max_{j\in\cC^{c}}\kappa(A)\left(  \dfrac{\|\beta_{\cdot j}\| \| A-\hat{A}\|}{\| A\|}+\dfrac
            {\| B_j-\hat{B}_j\|}{\| A \|}\right)  }
            {1-\kappa(A)\dfrac{\| A-\hat{A}\|}{\| A\|}} \\
            &\lesssim \| X \|_{\Lp{2}} \max_{j\in\cC^{c}} \|\beta_{\cdot j}\| \| \EE[Y_j \mid X,\hat{U}] - \EE[Y_j \mid X,U]  \|_{\Lp{2}} \\
            &\qquad + 
            (\| X \|_{\Lp{2}}  + \max_{j\in\cC^{c}} \| Y_j \|_{\Lp{2}}) \| \EE[X \mid \hat{U}] - \EE[X \mid U]  \|_{\Lp{2}} ,
        \end{align*}
        where in the last inequality, we use the boundedness of $A$'s spectrum from \Cref{asm:moment}.

        \paragraph{Part (3) Coefficient estimation error in terms of covariate estimation errors.}
        From \Cref{lem:err-bound-cond-exp}, it follows that
        \begin{align*}
            \max_{j\in\cC^{c}}\|\tbeta_{\cdot j} - \beta_{\cdot j}\|&\lesssim 
            \left(\| X \|_{\Lp{2}}(L_X^{\frac{1}{2}}+L_{Y}^{\frac{1}{2}})  + \max_{j\in\cC^c}\| Y_j \|_{\Lp{2}} \right) \| v^{-1}(\hat{U})-U)\|_{\Lp{2}} ,
        \end{align*}
        with $L_Y = \max_{j\in[p]}L_{Y_j}$.
    \end{proof}

\subsection{Proof of \Cref{lem:est-err-U_hat} (linear models)}\label{app:est:U_hat}
\begin{proof}[of \Cref{lem:est-err-U_hat}] 
    For observations $(\bX,\bY,\bU)\in\RR^{n\times d}\times \RR^{n\times p} \times \RR^{n\times r}$ and an estimate $\hat{\bU}\in\RR^{n\times \hat{r}}$ of $\bU$, we have $S = \bX^{\top}P_{\bU}^{\perp}\bX$, $\tilde{S} = \bX^{\top}P_{\hat{\bU}}^{\perp}\bX$, $\bar{\bY}=P_{\bU}^{\perp}\bY$, and $\tilde{\bY}=P_{\hat{\bU}}^{\perp}\bY$.
    Furthermore, the regression coefficient on $(P_{\bU}^{\perp}\bX, P_{\bU}^{\perp}\bY)$ can be expressed as
    \begin{align*}
        b &= (\bX^{\top} P_{\bU}^{\perp}\bX )^{-1}\bX^{\top}P_{\bU}^{\perp}\bY = S^{-1}\bX^{\top} \bar{\bY}/n,
    \end{align*}
    and the regression coefficient on $(P_{\hat{\bU}}^{\perp}\bX, P_{\hat{\bU}}^{\perp}\bY)$ can be expressed as
    \begin{align*}
        \tilde{b} &=(\bX^{\top} P_{\hat{\bU}}^{\perp}\bX )^{-1}\bX^{\top}P_{\hat{\bU}}^{\perp}\bY = \tilde{S}^{-1}\bX^{\top} \tilde{\bY} /n.
    \end{align*}

    From \Cref{lem:err-perturb}, we have
    \begin{align}
        \max_{j\in\cC^{c}}\|\tilde{b}_{\cdot j} - b_{\cdot j}\| &\leq 
         \frac{\kappa(S)}{\|S\|_{\oper}}\frac{\max_{j\in[p]} \|b_{\cdot j}\|\| \tilde{S}-S \| +  \| \bX^{\top}(\tilde{\bY}_{\cdot j} - \bar{\bY}_{\cdot j} ) /n\|}{ 1-  \kappa(S) \dfrac{\| \tilde{S}-S \|}{\|S\|_{\oper}}}. \label{eq:err-beta}
    \end{align}
    This requires verifying the assumptions therein.
    Specifically, we verify \eqref{eq:perturb1} below.
    Because
    \begin{align*}
        \| \tilde{S} - S \|&= \|\bX^{\top}(P_{\hat{\bU}}^{\perp}-P_{\bU}^{\perp})\bX/n\| \\
        &\leq \|\bX^{\top}\bX/n\| \|P_{\hat{\bU}}^{\perp}-P_{\bU}^{\perp}\|\\
        & = \|S\|_{\oper}\|P_{\hat{\bU}}^{\perp}-P_{\bU}^{\perp}\|,
    \end{align*}
    and $\kappa(S)\|P_{\hat{\bU}}^{\perp}-P_{\bU}^{\perp}\|<1$ as assumed, we have
    \[ \frac{\| \tilde{S} - S \|}{\|S\|_{\oper}} < \frac{1}{\kappa(S)},\]
    which verifies \eqref{eq:perturb1} of \Cref{lem:err-perturb}.
    On the other hand, we also have
    \[\| \bX^{\top}(\tilde{\bY} - \bar{\bY}) \|_{2,\infty} = \| \bX^{\top}(P_{\hat{\bU}}^{\perp}-P_{\bU}^{\perp}) \bY \|_{2,\infty} \leq \|P_{\hat{\bU}}^{\perp}-P_{\bU}^{\perp}\| \|\bX\|_{\oper}\|\bY\|_{2,\infty}.\] 
    Therefore, \eqref{eq:err-beta} implies that
    \begin{align*}
        \max_{j\in\cC^{c}}\|\tilde{b}_{\cdot j} - b_{\cdot j}\| &\leq \frac{\|S\|_{\oper}\|b\|_{2,\infty} + \|\bX\|_{\oper}\|\bY\|_{2,\infty}/n }{\|S\|_{\oper}}\frac{\kappa(S) \|P_{\hat{\bU}}^{\perp}-P_{\bU}^{\perp}\|}{1 - \kappa(S)  \|P_{\hat{\bU}}^{\perp}-P_{\bU}^{\perp}\|}   \\
        &\leq (\|b\|_{2,\infty} + \|S\|_{\oper}^{-\frac{1}{2}}\|\bY\|_{2,\infty}n^{-\frac{1}{2}})\frac{\kappa(S) \|P_{\hat{\bU}}^{\perp}-P_{\bU}^{\perp}\|}{1 - \kappa(S)  \|P_{\hat{\bU}}^{\perp}-P_{\bU}^{\perp}\|} 
    \end{align*}
    whenever $\|S\|_{\oper} \neq0$ and $\kappa(S)\|P_{\hat{\bU}}^{\perp}-P_{\bU}^{\perp}\|<1$.
\end{proof}

\subsection{Auxillary lemmas}\label{app:est:aux}

\begin{lemma}[Backward error of perturbed linear systems]\label{lem:err-perturb}
    Let $A\in\RR^{n\times n}$ be nonsingular, $b\in\RR^{n}$, and $x=A^{-1}b
    \in\RR^{n}$. In the following, $\Delta A\in\RR^{n\times n}$ and
    $\Delta b\in\RR^{n}$ are some arbitrary matrix and vector. We
    assume that the norm on $A$ satisfies $\| Ax\|\leq\| A\|\|x\|$ for all $A\in\RR^{n\times n}$ and all
    $x\in\RR^{n}$.
    Suppose $(A+\Delta A)\hat{x}=\hat{b}$ such that
    \begin{align}
        \hat{b} &=b+\Delta b  \neq\zero\\
        \hat{x}&=x+\Delta x  \neq\zero\\
        \frac{\|\Delta A\|}{\| A\|} & <\frac{1}{\kappa(A)}, \label{eq:perturb1}
    \end{align}
    where $\kappa(A) = \|A\|\|A^{-1}\|$ is the condition number of $A$.
    Then, it holds that
    \[
        \|\Delta x\|  \leq  \frac{\|x\| \kappa(A) \frac{\|\Delta A\|}{\|A\|} +  \kappa(A)\frac{\|\Delta b\|}{\|A\| }}{ 1-  \kappa(A) \frac{\|\Delta A\|}{\|A\|}}.
    \]
    If further, $b\neq\zero$ (or equiavlently $x\neq \zero$), then
    \[
        \frac{\|\Delta x\|}{\|x\|}\leq\frac
        {\kappa(A)\left(  \dfrac{\|\Delta A\|}{\| A\|}+\dfrac
        {\|\Delta b\|}{\|b\|}\right)  }
        {1-\kappa(A)\dfrac{\|\Delta A\|}{\| A\|}}.
    \]
\end{lemma}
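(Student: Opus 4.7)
The plan is to follow the classical backward error analysis for perturbed linear systems, starting from the identity $(A + \Delta A)(x + \Delta x) = b + \Delta b$ and expanding, then exploiting $Ax = b$ to isolate $\Delta x$. The main step is algebraic rearrangement, with the condition $\|\Delta A\|/\|A\| < 1/\kappa(A)$ used to ensure invertibility of the relevant factor when we solve for $\|\Delta x\|$.

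Concretely, first I would expand the perturbed system to obtain $A\Delta x = \Delta b - \Delta A\,(x + \Delta x)$. Applying $A^{-1}$ and taking norms (using submultiplicativity and the triangle inequality) yields
\[
\|\Delta x\| \leq \|A^{-1}\|\,\|\Delta b\| + \|A^{-1}\|\,\|\Delta A\|\,\|x\| + \|A^{-1}\|\,\|\Delta A\|\,\|\Delta x\|.
\]
The hypothesis $\|\Delta A\|/\|A\| < 1/\kappa(A)$ is equivalent to $\|A^{-1}\|\,\|\Delta A\| < 1$, so I can move the $\|\Delta x\|$ term to the left side and divide. Multiplying numerator and denominator of the resulting quotient by $\|A\|$ and using $\kappa(A) = \|A\|\,\|A^{-1}\|$ gives the first bound in the statement.

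For the relative bound, I would divide both sides by $\|x\|$ and then replace $\|\Delta b\|/(\|A\|\,\|x\|)$ by the larger quantity $\|\Delta b\|/\|b\|$, which is valid because $\|b\| = \|Ax\| \leq \|A\|\,\|x\|$ when $b \neq 0$ (equivalently $x \neq 0$ since $A$ is nonsingular). This substitution delivers the stated relative perturbation bound.

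There is no serious obstacle here; the proof is a short deterministic calculation relying only on submultiplicativity of the norm and the invertibility condition. The only minor point worth care is the direction of the inequality when absorbing $\|A\|\,\|x\|$ into $\|b\|$, and the bookkeeping to present both the absolute and relative forms cleanly.
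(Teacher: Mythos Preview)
Your proposal is correct and matches the paper's argument essentially line for line: the paper also expands to $A\Delta x + \Delta A\,\hat{x} = \Delta b$, applies $A^{-1}$, takes norms, uses $\|\hat{x}\|\le \|x\|+\|\Delta x\|$, and rearranges using $\|A^{-1}\|\,\|\Delta A\|<1$, then obtains the relative bound via $\|b\|\le\|A\|\,\|x\|$. The only addition in the paper is a short preliminary Part~(1) verifying that the condition $\|\Delta A\|/\|A\|<1/\kappa(A)$ forces $A+\Delta A$ to be nonsingular, which is not actually needed for the stated bounds but is recorded for completeness.
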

\begin{proof}[of \Cref{lem:err-perturb}]
    We split the proof into two parts.

    \paragraph{Part (1)}
    We first show that when \eqref{eq:perturb1} is satisfied, $A+\Delta A$ must be nonsingular.
    If $A+ \Delta A$ is singular, then exists nonzero $v$ such that $(A+ \Delta A)v=\zero$. Since $A$ is nonsingular, we have $A^{-1}\Delta A v = - v$. So
    \begin{align*}
        \|v\|&=\|A^{-1}\Delta A v\|
        \leq \|A^{-1}\|\|\Delta A\| \|v\|,
    \end{align*}
    which implies that
    \[\|\Delta A\|\geq \frac{1}{\|A^{-1}\|}.\]
    On the other hand, since $\kappa(A)=\|A\|\|A^{-1}\|$, from \eqref{eq:perturb1} we have
    \[\frac{\|\Delta A\|}{\| A\|}<\frac{1}{\|A\|\|A^{-1}\|},\]
    or equivalently, 
    \[\|\Delta A\|<\frac{1}{\|A^{-1}\|}.\]
    This leads to contradictions. Therefore, $A+\Delta A$ must be nonsingular.
    
    \paragraph{Part (2)} Since $(A+\Delta A)\hat{x}=b+\Delta b$ and $A x= b$, we have $A\Delta x+\Delta A\hat{x}=\Delta b$. So $\Delta x =A^{-1}(\Delta b- \Delta A\hat{x})$. Then we have
    \begin{align*}
        \frac{\|\Delta x\|}{\|\hat{x}\|}&=\frac{\| A^{-1}(\Delta b- \Delta A\hat{x})\|}{\|\hat{x}\|}\\
        &\leq \frac{\| A^{-1}\| (\|\Delta A\|\|\hat{x}\|+\|\Delta b\|)}{\|\hat{x}\|}\\
        &= \|A^{-1}\|\|A\| \left(\frac{\|\Delta A\|}{\|A\|} + \frac{\|\Delta b\|}{\|A\|\|\hat{x}\|}  \right) \\
        &=\kappa(A)\left(  \frac{\|\Delta A\|}{\| A\|}+\frac
        {\|\Delta b\|}{\| A\|\|\hat{x}\|
        }\right),
    \end{align*}
    and
    \begin{align*}
        \|\Delta x\|&\leq  \kappa(A) \left( \frac{\|\Delta A\|}{\|A\|}+ \frac{\|\Delta b\|}{\|A\|\|\hat{x}\|} \right) \|\hat{x}\|\\
        &=\kappa(A) \left( \frac{\|\Delta A\|}{\|A\|}\|\hat{x}\|+ \frac{\|\Delta b\|}{\|A\|} \right) \\
        &\leq\kappa(A) \frac{\|\Delta A\|}{\|A\|}(\|x\|+\|\Delta x\|)+ \kappa(A)\frac{\|\Delta b\|}{\|A\|}.
    \end{align*}
    Rearrange the above inequality, and we have
    \begin{align*}
        \left( 1-  \kappa(A) \frac{\|\Delta A\|}{\|A\|}\right) \|\Delta x\| &\leq \kappa(A) \frac{\|\Delta A\|}{\|A\|} \|x\| +  \kappa(A)\frac{\|\Delta b\|}{\|A\|}\\
        \|\Delta x\|  &\leq  \frac{\|x\| \kappa(A) \frac{\|\Delta A\|}{\|A\|} +  \kappa(A)\frac{\|\Delta b\|}{\|A\| }}{ 1-  \kappa(A) \frac{\|\Delta A\|}{\|A\|}}.
    \end{align*}
    When $x\neq\zero$, we further have
    \begin{align*}
        \frac{\|\Delta x\|}{\|x\| }  &\leq \frac{\kappa(A) \frac{\|\Delta A\|}{\|A\|} +  \kappa(A)\frac{\|\Delta b\|}{\|A\|\|x\| }}{ 1-  \kappa(A) \frac{\|\Delta A\|}{\|A\|}}
        \leq \frac
        {\kappa(A)\left(  \dfrac{\|\Delta A\|}{\| A\|}+\dfrac
        {\|\Delta b\|}{\|b\|}\right)  }
        {1-\kappa(A)\dfrac{\|\Delta A\|}{\| A\|}},
    \end{align*}
    where the last inequality holds since $\|b\|=\|Ax\|\leq \|A\|\|x\|$.
\end{proof}

\begin{lemma}\label{lem:Lipschitz-Lq-norm}
    Suppose $X,Y$ are two random vectors in $\RR^d$ defined on probability space $(\Omega,\sF,\PP)$, and $f:\RR^d\rightarrow\RR$ is a $\sF$-measurable and satisfies the $L$-Lipschitz condition (in $\ell_q$-norm) almost surely.
    Then it holds that
    \[ \|f(X)-f(Y)\|_{L_q} \leq L^{1/q}  \|X-Y\|_{L_q}.\]
\end{lemma}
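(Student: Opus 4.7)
The argument is a direct pointwise-to-integrated lift of the Lipschitz bound, with no probabilistic subtlety beyond monotonicity of the Lebesgue integral. First I would invoke the almost-sure Lipschitz hypothesis on the event of full measure to obtain a deterministic inequality between nonnegative random variables of the form $|f(X) - f(Y)|^q \leq L\,\|X - Y\|_q^q$ almost surely. Measurability of $f$ together with $\sF$-measurability of $X$ and $Y$ ensures that both sides are honest nonnegative random variables, so there is no issue with applying the integral.

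Second, I would integrate both sides against $\PP$. Monotonicity yields $\EE[|f(X) - f(Y)|^q] \leq L\,\EE[\|X-Y\|_q^q]$, and then unwinding the paper's convention $\|Z\|_{\Lp{q}} = \EE[\|Z\|_q^q]^{1/q}$ and extracting the $q$-th root delivers the claimed inequality $\|f(X) - f(Y)\|_{\Lp{q}} \leq L^{1/q}\,\|X-Y\|_{\Lp{q}}$. The $L^{1/q}$ factor corresponds to reading ``$L$-Lipschitz in $\ell_q$'' as a $q$-th power bound, which is the same convention that produces the $L_X^{1/2}$ and $L_Y^{1/2}$ factors appearing in the proof of \Cref{thm:err-bound-beta-U}.

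There is no substantive obstacle. The only technical point worth noting is integrability of $\|X-Y\|_q^q$, which is automatic whenever the right-hand side of the conclusion is finite (the bound is vacuous otherwise), and degeneracy on a null set is absorbed into the ``almost surely'' clause of the Lipschitz assumption. The lemma then functions as the drop-in bridge used inside \Cref{lem:err-bound-cond-exp} to convert the pointwise smoothness hypothesis \Cref{asm:Lipshitz-cond-mean} on the nuisance regressions into the $L_2$-level control on the embedding perturbation required by \Cref{thm:err-bound-beta-U}.
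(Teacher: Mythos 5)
Your proposal is correct and follows essentially the same route as the paper's proof: expand $\|f(X)-f(Y)\|_{\Lp{q}}^q$ as an integral, apply the almost-sure Lipschitz bound in the $q$-th-power form $|f(X)-f(Y)|^q \leq L\|X-Y\|_q^q$, integrate, and take the $q^{-1}$ power. Your explicit remark that the $L^{1/q}$ factor comes from reading the Lipschitz hypothesis as a $q$-th-power bound is the same (implicit) convention the paper uses in its own one-line application of the "Lipschitz condition."
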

\begin{proof}[of \Cref{lem:Lipschitz-Lq-norm}]
    Note that
    \begin{align*}
        \|f(X)-f(Y)\|_{L_q}^q &= \int |f(X) - f(Y)|^q \rd\PP \\
        &\leq \int L \|X-Y\|_q^q \rd \PP \tag{Lipschiz condition}\\
        &=  L\sum_{j=1}^d\int |X_j-Y_j|^q \rd \PP \\
        &= L  \|X-Y\|_{L_q}^q 
    \end{align*}
    Then the conclusion follows by taking the $q^{-1}$-power on both sizes.
\end{proof}

\begin{lemma}[Error bound of regression function with estimated covariates]\label{lem:err-bound-cond-exp}
        On a common probability space $(\Omega,\sF,\PP)$, consider a random vector $W$ and a sequence of random vectors $\{V_m\}_{m\in\NN}$ adapted to a filtration $\{\sF_m\}_{m\in\NN}$ such that $\sF_m\subseteq\sF_{m+1}$.
        Suppose that 
        (i) $\|W\|_{\Lp{2}}<\infty$,
        (ii) $V_m\to V$ almost surely, and
        (iii) the function $h(v)=\EE[W \mid V=v]$ satisfies the $L$-Lipschitz condition in $\ell_2$-norm almost surely.
        Then, under (i)-(ii), it holds 
        \[\EE[W\mid V_m] \to \EE[W\mid V] \text{ in }\Lp{2},\]
        and under (i)-(iii), it holds that
        \begin{align*}
            \|\EE[W\mid V_m] - \EE[W\mid V]\|_{\Lp{2}} &\leq 2L^{\frac{1}{2}}\|V_m - V\|_{\Lp{2}}.
        \end{align*}
    \end{lemma}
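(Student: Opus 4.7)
The plan is to handle the two conclusions separately, using different tools. For the unqualified $L^{2}$-convergence under only (i)--(ii), I would invoke L\'evy's upward martingale convergence theorem applied to the filtration $\{\sF_m\}$. Since $V_m$ is $\sF_m$-measurable and $\|W\|_{\Lp{2}}<\infty$, the conditional expectations $\EE[W\mid\sF_m]$ form an $L^{2}$-bounded martingale, so $\EE[W\mid\sF_m]\to \EE[W\mid\sF_\infty]$ in $L^{2}$ with $\sF_\infty=\sigma(\cup_m\sF_m)$. In the paper's setting $\sF_m=\sigma(\hat{U}_m)=\sigma(V_m)$, so $\EE[W\mid V_m]=\EE[W\mid \sF_m]$; and because $V_m\to V$ almost surely with each $V_m$ adapted to $\sF_m$, the limit $V$ is $\sF_\infty$-measurable, which yields the identification $\EE[W\mid\sF_\infty]=\EE[W\mid V]$ and closes the first claim.

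For the Lipschitz-regime bound under (i)--(iii), the plan is to telescope through $h(V_m)$, writing
\begin{equation*}
  \EE[W\mid V_m]-h(V) \;=\; \bigl(\EE[W\mid V_m]-h(V_m)\bigr) \;+\; \bigl(h(V_m)-h(V)\bigr).
\end{equation*}
The second summand is bounded directly by $L^{1/2}\|V_m-V\|_{\Lp{2}}$ through \Cref{lem:Lipschitz-Lq-norm} with $q=2$. For the first summand, I would exploit the conditional independence $W\indep V_m\mid V$ inherited from the surrogate-control-outcome hypothesis of \Cref{subsec:semipara}: under it, the tower property gives $\EE[W\mid V_m]=\EE[\EE[W\mid V,V_m]\mid V_m]=\EE[h(V)\mid V_m]$. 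Since $h(V_m)$ is already $\sigma(V_m)$-measurable, the first summand equals $\EE[h(V)-h(V_m)\mid V_m]$, and conditional Jensen followed by another application of \Cref{lem:Lipschitz-Lq-norm} bounds it by $L^{1/2}\|V_m-V\|_{\Lp{2}}$. Assembling the two pieces via the triangle inequality produces the advertised factor of $2L^{1/2}$.

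The main obstacle is hidden inside the first summand of the quantitative bound: without the conditional (mean-)independence $\EE[W\mid V,V_m]=\EE[W\mid V]$, the identity $\EE[W\mid V_m]=\EE[h(V)\mid V_m]$ fails, and $\EE[W\mid V_m]$ need not be anywhere near $h(V_m)$ regardless of how regular $h$ is, so Lipschitzness alone cannot supply the bound. Making this structural hypothesis explicit is the essential drafting point. Its validity in the downstream use inside \Cref{thm:err-bound-beta-U} is automatic: $V_m$ is the data-driven embedding $\hat{U}$, $V$ is an invertible transform of the latent $U$, and $W$ is either $X$ or a primary outcome $Y_j$ with $j\in\cC^c$; because $\hat{U}_m=\hf_e(Y_{\cC})$ is a measurable function of the surrogate controls, the surrogate-control property $(X,Y_{\cC^c})\indep Y_{\cC}\mid U$ delivers $W\indep V_m\mid V$ exactly as needed. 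A secondary, milder issue in Part 1 is the identification $\sigma(V)=\sF_\infty$, which is benign here since $V_m\to V$ a.s. with $V_m$ generating $\sF_m$.
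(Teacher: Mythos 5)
Your proof takes essentially the same route as the paper's: Part 1 is the same Doob/L\'evy upward martingale convergence argument (plus uniform integrability in the paper's version to upgrade a.s.\ convergence to $\Lp{2}$), and Part 2 uses the identical two-term decomposition, bounding $h(V_m)-h(V)$ via \Cref{lem:Lipschitz-Lq-norm} and the term $\EE[W\mid V_m]-h(V_m)$ via the identity $\EE[W\mid V_m]=\EE[h(V)\mid \sF_m]$ followed by conditional Jensen, yielding the same factor $2L^{1/2}$.

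The one substantive difference is that you make explicit the hypothesis on which that identity rests. The paper obtains $\EE[W\mid V_m]=\EE[h(V)\mid\sF_m]$ by first arguing that the martingale limit $\EE[W\mid\sF_\infty]$ equals $\EE[W\mid V]$, justified only by the observation that $V$ is $\sF_\infty$-measurable; but $\sigma(V)\subseteq\sF_\infty$ alone does not force $\EE[W\mid\sF_\infty]$ to be $\sigma(V)$-measurable, so some form of conditional mean independence $\EE[W\mid\sF_\infty]=\EE[W\mid V]$ is being assumed implicitly. You name this condition ($W\indep V_m\mid V$, or its mean-independence weakening), derive the key identity from it directly via the tower property, and verify that it holds in the downstream application because $\hat{U}_m$ is a function of the surrogate controls $Y_{\cC}$ and $(X,Y_{\cC^c})\indep Y_{\cC}\mid U$. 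This is a more careful rendering of the same argument and correctly locates the step where Lipschitzness alone would not suffice; it would be a worthwhile clarification to fold into the lemma's statement or proof.
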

    \begin{proof}[of \Cref{lem:err-bound-cond-exp}]
    Define $\sF_{\infty} = \sigma(\cup_m\sF_{m})$.
    There exists some $\sF_{\infty}$-measurable function $h$ and $\sF_{m}$-measurable function $h_m$ such that $h(V)=\EE[W \mid V]$ and $h_m(V_m)=\EE[W \mid V_m]$ almost surely.
    Notice that $(\EE[W\mid V_m])_{m\in\NN}$ is a Doob martingale (because $\|W\|_{\Lp{1}}<\infty$).
    From martingale convergence theorem, there exists $V_{\infty}=\EE[W\mid \sF_{\infty}]$ that is measurable with respect to $\sF_{\infty}$ such that $\|V_{\infty}\|_{\Lp{1}}<\infty$ and $\EE[W\mid V_m] \to V_{\infty}$ almost surely.
    On the other hand, because $V_m \to V$ almost surely from Assumption (ii), we know that $V\overset{\as}{=} V_{\infty}$ is $\sF_{\infty}$-measurable. 
    This implies that $\EE[W\mid V_{\infty}] = h(V_{\infty}) = h(V) = \EE[W\mid V]$ almost surely.
    Thus, we conclude that $\EE[W\mid V_m] \to \EE[W\mid V]$ almost surely.
    By Jensen's inequality and Assumption (ii), we have $\|\EE[W\mid V_m]\|_{\Lp{2}}\leq \|W\|_{\Lp{2}}<\infty$, which implies that the set of functions $\{\EE[W\mid V_m]:m\in\NN\}$ is uniformly integrable.
    Thus, we further have $\EE[W\mid V_m] \to \EE[W\mid V]$ in $\Lp{2}$ from dominated convergence theorem.

    Next, we need to derive the convergence rate. 
    We have that
    \begin{align}
        \|h_m(V_m) - h(V)\|_{\Lp{2}} & 
        \leq \|h_m(V_m) - h(V_m)\|_{\Lp{2}} + \|h(V_m) - h(V)\|_{\Lp{2}} .\label{eq:L2-cond-exp-1}
    \end{align}
    
    For the first term in \eqref{eq:L2-cond-exp-1}, from the martingale property, the function representation  $h_m(V_m) = \EE[h(V) \mid \sF_m]$ gives that
    \begin{align}
        \|h_m(V_m) - h(V_m)\|_{\Lp{2}} &= \|\EE[h(V) \mid \sF_m] - h(V_m)\|_{\Lp{2}} \leq \| h(V) -h(V_m)\|_{\Lp{2}}\label{eq:L2-cond-exp-2}
    \end{align}
    where the last inequality is from Jensen's inequality.
    
    Combining \eqref{eq:L2-cond-exp-1} and \eqref{eq:L2-cond-exp-2} yields that
    \begin{align*}
        \|h_m(V_m) - h(V)\|_{\Lp{2}} &\leq 2\|h(V_m) - h(V)\|_{\Lp{2}} \leq 2L^{\frac{1}{2}}\|V_m - V\|_{\Lp{2}},
    \end{align*}
    where the last inequality is from \Cref{lem:Lipschitz-Lq-norm} by noting that $h$ satisfies the $L$-Lipschitz condition from Assumption (iii).
    \end{proof}

\clearpage
\section{Doubly robust semiparametric inference}\label{app:semi}
\subsection{Proof of \Cref{thm:DR-linear} and \Cref{cor:inference}}\label{app:semi:eff}

\begin{proof}[of \Cref{thm:DR-linear} and \Cref{cor:inference}]
    \Cref{thm:DR-linear} is a special case of \Cref{thm:DR} with nonlinear link functions.
    The proof follows by applying \Cref{thm:DR} with $g$ being the identity.
    Meanwhile, the assumption in \Cref{thm:DR} can be relaxed under this special case by noting that $\EE[Y\mid X, U]$ can be replaced by $Y$ because the residual $Y-\EE[Y\mid X, U]$ is orthogonal to mean-zero functions of $(X, U)$ in the $L_2$ space so that $\eta(O) = Y - \EE[Y\mid U]$ under identity link.
\end{proof}

\subsection{Proof of \Cref{prop:simul-inference}}\label{app:semi:simul-inference}
\begin{proof}[of \Cref{prop:simul-inference}]
    From \Cref{thm:DR-linear}, we have
    \begin{align*}
        \sqrt{n}(\tilde{b} - \tbeta) &= \sqrt{n}\tilde{\Sigma}^{-1}(\PP_n-\PP)\{\tilde{\varphi}(O;{\PP})\} + \xi,
    \end{align*}
    where the remainder term $\xi$ satisfies that $\|\xi\|_{2,\infty} = \op(1)$ under the rate conditions in \Cref{thm:DR-linear}.
    Recall that $t_j = \sqrt{n}\VV_n\{\tilde{\varphi}_{\cdot j}(O;\hat{\PP})\}^{\frac{1}{2}}\hSigma^{-1}(\tilde{b}_{\cdot j} - \tbeta_{\cdot j})  $.
    We have that
    \begin{align*}
        v^{\top}t_j &= \sqrt{n}v^{\top}\VV_n\{\tilde{\varphi}_{\cdot j}(O;{\PP})\}^{\frac{1}{2}}(\PP_n-\PP)\{\tilde{\varphi}_{\cdot j}(O;{\PP})\} \\
        &\quad + (\sqrt{n}v^{\top}(\VV_n\{\tilde{\varphi}_{\cdot j}(O;\hat{\PP})\}^{\frac{1}{2}}-\VV_n\{\tilde{\varphi}_{\cdot j}(O;{\PP})\}^{\frac{1}{2}})(\PP_n-\PP)\{\tilde{\varphi}_{\cdot j}(O;{\PP})\} + v^{\top}\VV_n\{\tilde{\varphi}_{\cdot j}(O;\hat{\PP})\}^{\frac{1}{2}} \hSigma^{-1}\xi_{\cdot j})\\
        &= \vartheta_j + \varsigma_j.
    \end{align*}
    For the first component, $\vartheta_j$ for $j=1,\ldots,p$ are independent conditional on $(X,U)$'s.
    Furthermore, the self-normalizing term $\vartheta_j\to \cN(0,1)$ in distribution for $j\in\cN_p$.
    By the strong law of large number, $\VV_n\{\tilde{\varphi}_{\cdot j}(O;{\PP})\}\to \VV\{\tilde{\varphi}_{\cdot j}(O;{\PP})\}$, $\VV_n\{\tilde{\varphi}_{\cdot j}(O;\hat{\PP})\}\to \VV\{\tilde{\varphi}_{\cdot j}(O;{\PP})\}$ and $\hSigma \to \tSigma$ almost surely when both $m,n,p\rightarrow\infty$.
    This implies that $\max_j \|\VV_n\{\tilde{\varphi}_{\cdot j}(O;\hat{\PP})\} -\VV_n\{\tilde{\varphi}_{\cdot j}(O;{\PP})\}\|_{\oper}\to0$, $\max_j\|\VV_n\{\tilde{\varphi}_{\cdot j}(O;\hat{\PP})\}- \VV\{\tilde{\varphi}_{\cdot j}(O;{\PP})\}\|_{\oper}\to 0$, $\|\tSigma-\hSigma\|_{\oper}\to 0$ almost surely, which follows from \citet[Lemma S.8.6 (1)]{patil2023bagging} by noting that the variables $O$'s are iid in the triangluar array.
    For the second component, we have that 
    \begin{align*}
        \max_{1\leq j\leq p}|\varsigma_j| &\leq  \max_{1\leq j\leq p}\|\VV_n\{\tilde{\varphi}_{\cdot j}(O;\hat{\PP})\}^{\frac{1}{2}}-\VV_n\{\tilde{\varphi}_{\cdot j}(O;{\PP})\}^{\frac{1}{2}}\|_{\oper} \cdot \|\sqrt{n}(\PP_n-\PP)\{\tilde{\varphi}_{\cdot j}(O;{\PP})\}\|_2\\
        &\quad + \max_{1\leq j\leq p} \|\VV_n\{\tilde{\varphi}_{\cdot j}(O;\hat{\PP})\}^{\frac{1}{2}}   \hSigma^{-1}\|_{\oper} \|\xi_{\cdot j}\|_2 \\
        &= \op(1) \Op(1)  + \Op(1)\op(1) \\
        &= \op(1).
    \end{align*}

    Let $\varrho = |\cN_p|^{-1} \sum_{j \in \cN_p} \ind\{\left|v^{\top}t_j\right|>z_{\frac{\alpha}{2}}\}$.
    To prove the overall Type-I error control, we will show the expectation that $\varrho$ tends to $\alpha$, and its variance tends to zero.
    For the expectation, for any $\epsilon>0$, we have
    \begin{align*}
        \EE[\varrho] &=\frac{1}{\left|\cN_p\right|} \sum_{j \in \cN_p} \PP\left(\left|v^{\top}t_j\right|>z_{\frac{\alpha}{2}}\right)\\
        & \leq \frac{1}{\left|\cN_p\right|} \sum_{j \in \cN_p} \left[\PP\left(\left|\vartheta_j\right|>z_{\frac{\alpha}{2}}-\epsilon\right)+\PP\left(\left|\varsigma_j\right|>\epsilon\right)\right] \\
        & = \frac{1}{\left|\cN_p\right|} \sum_{j \in \cN_p} \PP\left(\left|\vartheta_j\right|>z_{\frac{\alpha}{2}} -\epsilon\right) +\frac{1}{\left|\cN_p\right|} \sum_{j \in \cN_p} \PP\left(\left|\varsigma_j\right|>\epsilon\right) \\
        & \leq \frac{1}{\left|\cN_p\right|} \sum_{j \in \cN_p} \PP\left(\left|\vartheta_j\right|>z_{\frac{\alpha}{2}} -\epsilon\right) +\PP\left(\max _{1 \leq j \leq p}\left|\varsigma_j\right|>\epsilon\right)  \rightarrow 2\left(1-\Phi\left(z_{\frac{\alpha}{2}}-\epsilon\right)\right) ,
    \end{align*}
    where the last convergence holds because the Cesaro mean converges to the same limit as 
    \[\lim_{n,p} \PP\left(\left|\vartheta_j\right|>
    z_{\frac{\alpha}{2}} -\epsilon\right) = 2\left(1-\Phi\left(z_{\frac{\alpha}{2}}-\epsilon\right)\right) , \]
    while the term $\PP(\max _{1 \leq j \leq p}\left|\varsigma_j\right|>\epsilon)$ varnishes.
    Similarly, we can show that $\liminf_{n,p\rightarrow\infty}\EE[\varrho]\geq 2\left(1-\Phi\left(z_{\frac{\alpha}{2}}-\epsilon\right)\right) $ for all $\epsilon>0$.
    Let $\epsilon\rightarrow0^+$, it follows that $\EE[\varrho]\to \alpha$ as $n,p\rightarrow\infty$.

    For any $\epsilon>0$, the second moment can be upper-bounded as follows:
    \begin{align*}
        \EE[\varrho^2] &= \frac{1}{\left|\cN_p\right|^2} \sum_{j, k \in \cN_p} \PP\left(\left|v^{\top}t_j\right|>z_{\frac{\alpha}{2}},\left|v^{\top}t_k\right|>z_{\frac{\alpha}{2}}\right) \\
         &= \frac{1}{\left|\cN_p\right|^2} \sum_{j \in \cN_p} \PP\left(\left|v^{\top}t_j\right|>z_{\frac{\alpha}{2}}\right)+\frac{1}{\left|\cN_p\right|^2} \sum_{j, k \in \cN_p, j \neq k} \PP\left(\left|v^{\top}t_j\right|>z_{\frac{\alpha}{2}},\left|v^{\top}t_k\right|>z_{\frac{\alpha}{2}}\right) \\
        &\leq  \frac{1}{\left|\cN_p\right|^2} \sum_{j \in \cN_p} \PP\left(\left|v^{\top}t_j\right|>z_{\frac{\alpha}{2}}\right) \\
        &\qquad +\frac{1}{\left|\cN_p\right|^2} \sum_{j, k \in \cN_p, j \neq k} \PP\left(\left|\vartheta_j\right|>z_{\frac{\alpha}{2}}-\epsilon,\left|\vartheta_k\right|>z_{\frac{\alpha}{2}}-\epsilon\right)  +\PP\left(\left|\varsigma_j\right|>\epsilon\right)+\PP\left(\left|\varsigma_k\right|>\epsilon\right) \\
        &= \frac{1}{\left|\cN_p\right|^2} \sum_{j, k \in \cN_p, j \neq k} \PP\left(\left|\vartheta_j\right|>z_{\frac{\alpha}{2}}-\epsilon,\left|\vartheta_k\right|>z_{\frac{\alpha}{2}}-\epsilon\right)+o(1) \\
        &= \frac{1}{\left|\cN_p\right|^2} \sum_{j, k \in \cN_p, j \neq k} \PP(\left|\vartheta_j\right|>z_{\frac{\alpha}{2}}-\epsilon\mid X,\hat{U})\PP(\left|\vartheta_k\right|>z_{\frac{\alpha}{2}}-\epsilon\mid X,\hat{U})+o(1) \\
        &\rightarrow4\left(1-\Phi\left(z_{\frac{\alpha}{2}}-\epsilon\right)\right)^2,
    \end{align*}
    where the last equality is from the independence of $\vartheta_j$ and $\vartheta_k$.
    We can similarly obtain the lower bound.
    Let $\epsilon\rightarrow0^+$, it follows that $\EE[\varrho^2]\rightarrow\alpha^2$ and $\VV(\varrho)\rightarrow0$ as $n,p\rightarrow\infty$.
    Combining the previous results yields that $\varrho\to \alpha$ in probability.
\end{proof}

\subsection{Nonlinear modeling}\label{app:semi:DR}\label{subsec:dr-semi-nonlinear}

    The natural extension of partial linear models to the nonlinear cases is the generalized partially linear models \citep{severini1994quasi,hardle1998testing}:
    \begin{align}
        g(\EE[Y \mid X, U]) = \beta^{\top} X + h(U) , \label{eq:gpls}
    \end{align}
    by introducing a proper link function $g$, applied element-wise on the conditional mean of the outcomes.
    Similar to the results in the previous sections, a nonlinear counterpart of the main effect estimand \eqref{eq:beta} is
    \begin{align}
       \beta(\PP) = \EE[\Cov(X \mid U)]^{-1} \EE[\Cov[X, g(\EE(Y | X, U)) \mid U] ].  \label{eq:beta-g}
    \end{align}
    Such an estimand has been considered in \citet{robins2008higher,newey2018cross} with the identity link and in \citet{vansteelandt2022assumption} with a single treatment.
    When the model \eqref{eq:gpls} is correctly specified, \eqref{eq:beta-g} is equivalent to the regression coefficient under model \eqref{eq:gpls}.
    On the other hand, when the model \eqref{eq:gpls} is misspecified, estimand \eqref{eq:beta-g} still represents a meaningful statistical quantity.

    With a differentiable link function $g$, the influence function (for $\tSigma\tbeta$) analoguous to \eqref{eq:tvarphi-linear} is:
    \[\tilde{\varphi}(O;{\PP}) = (X - {\EE}[X\mid \hat{U}]) ( {\eta}(O) - {\tbeta}^{\top}(X - {\EE}[X\mid \hat{U}]) )^{\top} ,\]
    where the main effect estimand with estimated embedding is defined as:
    \begin{align}
        \tbeta &= \EE[\Cov(X \mid \hat{U})]^{-1} \EE[\Cov[X, g(\EE(Y | X, \hat{U})) \mid \hat{U}] ],\label{eq:tbeta}
    \end{align}
    and the function $\eta$ is defined as:
    \begin{align*}
        \eta(O) &= g'(\EE[Y\mid X, \hat{U}]) \odot (Y - \EE[Y\mid X, \hat{U}]) + g(\EE[Y\mid X, \hat{U}]) - \EE[g(\EE[Y\mid X, \hat{U}]) \mid \hat{U}].
    \end{align*}
    The doubly robust semiparametric inference results in \Cref{thm:DR-linear} and \Cref{cor:inference} can be extended to accommodate nonlinear link functions, as shown in the next theorem.

    \begin{theorem}[Doubly robust inference with nonlinear link functions]\label{thm:DR}
        Under a nonparameteric model and a differentiable link function $g$, define the estimator of $\beta$ in \eqref{eq:beta-g} as:
        \begin{align}
            \hat{\beta}= \PP_n\{(X-\hat{\EE}(X\mid \hat{U}))^{\otimes 2}\}^{-1} \PP_n\{(X-\hat{\EE}(X\mid \hat{U})) 
        \cdot (\II-\PP_n)\{g(\hat{\EE}[Y \mid X,\hat{U}]) \} ^{\top} \}, \label{eq:b-hat}
        \end{align}
        which depends on empirical measure $\PP_n$ and two nuisance functions $\hat{\EE}[X\mid \hat{U}]$ and $\hat{\EE}[Y\mid X,\hat{U}]$ estimated from independent samples of $\PP_n$.
        Under \Cref{asm:moment,asm:nuisance} and assume that 
        
        (i) (Local Lipschitzness) There exists $L>0$ such that $\|g(\EE[Y\mid X, \hat{U}]) - g(\hat{\EE}[Y\mid X, \hat{U}]) - g'(\hat{\EE}[Y\mid X, \hat{U}]) \odot ({\EE}[Y\mid X, \hat{U}] - \hat{\EE}[Y\mid X, \hat{U}]) \|_{\infty} \leq  L \|{\EE}[Y\mid X, \hat{U}] - \hat{\EE}[Y\mid X, \hat{U}]\|_{\infty}^2 . $
        
        (ii) (Boundeness and consistency) \Cref{asm:moment,asm:nuisance} hold with additionally, $\|\eta(O)\|_{\Lp{2(1+\delta^{-1})}} < M$ and $\| \|\hat{\eta}(O)- {\eta}(O)\|_{\infty}\|_{\Lp{2(1+\delta)}}=\op(1)$.
        
        (iii) (Rate condition) $\|\EE[X\mid \hat{U}] - \hat{\EE}[X\mid \hat{U}]\|_{\Lp{2}}^2$, $\|{\EE}[Y\mid X, \hat{U}] - \hat{\EE}[Y\mid X, \hat{U}]\|_{\Lp{2},\infty}^2$, and $\| \EE[g(\EE[Y\mid X, \hat{U}]) \mid \hat{U}] -  \hat{\EE}[g(\hat{\EE}[Y\mid X, \hat{U}]) \mid \hat{U}] \|_{\Lp{2},\infty}  \|\EE[X\mid \hat{U}] - \hat{\EE}[X\mid \hat{U}]\|_{\Lp{2}}$ are of order $\op(n^{-\frac{1}{2}})$.
        
        Then, the estimator $\tilde{b}$ is asymptotically normal:
        \[\sqrt{n}(\tilde{b}_{\cdot j} - \tbeta_{\cdot j}) \to \cN_d (0, \tSigma^{-1}\VV\{\tilde{\varphi}_{\cdot j}(O;{\PP})\}\tSigma^{-1}) \text{ in distribution}\quad (j=1,\ldots,p).\]        
        Furthermore, if the conditions of \Cref{thm:err-bound-beta-U} hold with $\ell_m=o(n^{-\frac{1}{2}})$, then we have
        \begin{align*}
            \sqrt{n} (\tilde{b}_{\cdot j} - {\beta}_{\cdot j}) \to \cN_d (0, \tSigma^{-1}\VV\{\tilde{\varphi}_{\cdot j}(O;{\PP})\}\tSigma^{-1})\text{ in distribution}\quad (j=1,\ldots,p).
        \end{align*}
    \end{theorem}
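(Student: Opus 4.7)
The plan is to mirror the proof of \Cref{thm:DR-linear}, extending it to handle the nonlinear link $g$ through a first-order Taylor expansion whose quadratic remainder is controlled by the local Lipschitzness in Assumption~(i). Concretely, I would establish a linear expansion
\[
    \sqrt{n}(\hat{\beta} - \tbeta) = \sqrt{n}\,\tSigma^{-1}(\PP_n - \PP)\{\tilde{\varphi}(O;\PP)\} + \xi,
\]
show $\|\xi\|_{2,\infty} = \op(1)$ under conditions (ii)--(iii), and then conclude the first stated asymptotic normality via a CLT for the triangular array $\{\tilde{\varphi}(O_i;\PP)\}_{i=1}^n$, which is triangular because $\tilde{\varphi}$ depends on $m$ through the generated covariate $\hat{U}_m$.

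The decomposition proceeds by writing $\hSigma^{-1} = \tSigma^{-1} + \tSigma^{-1}(\tSigma - \hSigma)\hSigma^{-1}$, so the analysis reduces to controlling $\PP_n\{\hat{\varphi}(O;\hat{\PP})\} - \PP\{\tilde{\varphi}(O;\PP)\}$, where $\hat{\varphi}$ is the plug-in analogue of $\tilde{\varphi}$ built from $\hat{\EE}[X\mid\hat{U}]$, $\hat{\EE}[Y\mid X,\hat{U}]$ and the induced estimate of $\EE[g(\EE[Y\mid X,\hat{U}])\mid\hat{U}]$. Splitting this difference into $(\PP_n-\PP)\{\tilde{\varphi}(O;\PP)\}$, an empirical-process remainder $(\PP_n-\PP)\{\hat{\varphi}(O;\hat{\PP}) - \tilde{\varphi}(O;\PP)\}$, and a bias term $\PP\{\hat{\varphi}(O;\hat{\PP}) - \tilde{\varphi}(O;\PP)\}$, the empirical-process piece is negligible after conditioning on the auxiliary fold used to fit the nuisances: applying a Chebyshev-type bound as in \Cref{lem:chebyshev} and invoking the $L_{2(1+\delta)}$ consistency from Assumption~(ii) (together with the envelope bound on $\|\hat\eta(O) - \eta(O)\|_\infty$) forces it to $\op(n^{-1/2})$.

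The bias term is the crux of the argument and the step I expect to be the main obstacle. Here I would expand
\[
    g(\hat{\EE}[Y\mid X,\hat{U}]) - g(\EE[Y\mid X,\hat{U}]) = g'(\hat{\EE}[Y\mid X,\hat{U}]) \odot (\hat{\EE}[Y\mid X,\hat{U}] - \EE[Y\mid X,\hat{U}]) + R,
\]
with $\|R\|_{\infty} \leq L\|\hat{\EE}[Y\mid X,\hat{U}] - \EE[Y\mid X,\hat{U}]\|_\infty^2$ by Assumption~(i). The construction of $\eta(O)$ makes $\tilde\varphi$ Neyman-orthogonal: the first-order contribution of the outcome-regression error is cancelled by the derivative correction $g'\odot(Y - \EE[Y\mid X,\hat{U}])$, leaving only the doubly-robust cross product $\|\hat{\EE}[X\mid\hat{U}] - \EE[X\mid\hat{U}]\|_{\Lp{2}} \cdot \|\hat{\EE}[g(\hat{\EE}[Y\mid X,\hat{U}])\mid\hat{U}] - \EE[g(\EE[Y\mid X,\hat{U}])\mid\hat{U}]\|_{\Lp{2},\infty}$, the squared outcome-regression term $\|{\EE}[Y\mid X,\hat{U}] - \hat{\EE}[Y\mid X,\hat{U}]\|_{\Lp{2},\infty}^2$ inherited from $R$, and $\|\hat{\EE}[X\mid\hat{U}] - \EE[X\mid\hat{U}]\|_{\Lp{2}}^2$. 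All three are $\op(n^{-1/2})$ by rate condition~(iii). The delicate bookkeeping is to track these bounds uniformly over $j \in \{1,\dots,p\}$, which explains why the outcome-regression error is measured in the stronger $\|\cdot\|_{\Lp{2},\infty}$ norm and why the envelope condition $\|\|\hat\eta(O)-\eta(O)\|_\infty\|_{\Lp{2(1+\delta)}}=\op(1)$ is imposed in Assumption~(ii).

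Asymptotic normality then follows by verifying the Lindeberg condition for the triangular array via \Cref{lem:lindeberg}: the required uniform integrability is immediate from the uniform $L_{2(1+\delta^{-1})}$ moment bounds on $X - \EE[X\mid\hat{U}]$ and $\eta(O)$ in Assumption~(ii), while the lower bound $\tSigma \succeq \sigma I_d$ from \Cref{asm:moment} ensures the asymptotic covariance is well-defined. For the second conclusion, \Cref{thm:err-bound-beta-U} (applied with the link $g$) yields $\max_{j\in\cC^c}\|\tbeta_{\cdot j} - \beta_{\cdot j}\| \lesssim \ell_m$, and the hypothesis $\ell_m = o(n^{-1/2})$ absorbs this bias into the normal limit via Slutsky's theorem.
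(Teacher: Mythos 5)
Your proposal is correct and follows essentially the same route as the paper: the paper proves \Cref{thm:DR} by reducing it to \Cref{lem:dr-est}, whose proof uses exactly your three-way decomposition (main CLT term, empirical-process remainder controlled via \Cref{lem:chebyshev} and the $L_{2(1+\delta)}$ consistency, and a bias term handled by the Taylor expansion of $g$ with the local-Lipschitz remainder plus the iterated-expectation cancellation yielding the three product/squared rate terms), followed by the triangular-array Lindeberg CLT of \Cref{lem:lindeberg} and the invocation of \Cref{thm:err-bound-beta-U} with $\ell_m = o(n^{-1/2})$ for the second conclusion. No substantive difference in approach.
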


    Compared to \Cref{thm:DR-linear}, \Cref{thm:DR} requires additional assumptions regarding the Lipschitzness of the link function around the true regression function, as noted by \citet{vansteelandt2022assumption}. It also requires boundedness and consistency assumptions on the first-order expansion term $\eta$. 
    Nevertheless, the overall conclusion is similar when both the estimators and the influence functions have a different link function for a different target estimand.
    The double robustness still allows efficient semiparametric inference with data-adaptive estimation procedures.

    \begin{algorithm}[t]\small
        \caption{Semiparametric inference for main effects with nonlinear link functions}
        \label{alg:semi}
        \begin{algorithmic}[1]
        \REQUIRE Reponses $Y$, covariate $X$, estimated latent embedding $\hat{U}$, and link function $g$.
        
        \STATE Use machine learning methods to obtain nuisance estimates $\hat{\EE}[Y\mid X,\hat{U}]$ and~$\hat{\EE}[X\mid\hat{U}]$.

        \STATE
        Use a data-adaptive fit $g(\hat{\EE}[Y\mid X,\hat{U}]) \sim \hat{U}$ to obtain the estimated regression function $\hat{\EE}[g(\hat{\EE}[Y\mid X,\hat{U}])\mid \hat{U}]$.
        If $X$ is categorical with finite support $|\cX|<\infty$, this simply reduces to $\hat{\EE}[g(\hat{\EE}[Y\mid X,\hat{U}])\mid \hat{U}] = \sum_{x\in\cX}g(\hat{\EE}[Y\mid X=x,\hat{U}])\hat{\EE}[X=x\mid \hat{U}]$.
        
        \STATE Fit a linear regression of $\hat{\eta}(O)\sim X - \hat{\EE}[X\mid \hat{U}]$ without an intercept to obtain an estimate $\tilde{b}$ as defined in \eqref{eq:b-hat} of $\tbeta$ as defined in \eqref{eq:tbeta}.

        \STATE Estimate the variance of $\tilde{b}_{\cdot j}$ by $\hat{S}_j/n$ based on \Cref{thm:DR}, where $\hat{S}_j = \hSigma^{-1}\VV_n\{\tilde{\varphi}_{\cdot j}(O;\hat{\PP})\}\hSigma^{-1}$. 

        \ENSURE Confidence intervals and p-values based on asymptotic null distribution $\tilde{b}_{\cdot j} \overset{\cdot}{\sim}\cN_d (\tbeta_{\cdot j}, \frac{\hat{S}_j}{n})$. 
        \end{algorithmic}
    \end{algorithm}

\begin{proof}[of \Cref{thm:DR}]

    From \Cref{lem:dr-est}, we have
    \begin{align*}
        \sqrt{n}(\tilde{b} - \tbeta) &= \sqrt{n}\PP \{(X-\EE(X\mid \hat{U}))^{\otimes 2}\}^{-1}(\PP_n-\PP)\{\tilde{\varphi}(O;{\PP})\} + \tilde{\xi}
    \end{align*}
    where $\tilde{\varphi}$ is defined as
    \begin{align}
        \tilde{\varphi}(O;\PP) &= (X - {\EE}[X\mid \hat{U}]) ( \eta(O) - \tbeta^{\top}(X - {\EE}[X\mid \hat{U}]) )^{\top} \label{eq:tvarphi}
    \end{align}
    and the remainder term $\tilde{\xi}$ satisfies that $\|\tilde{\xi}\|_{2,\infty} = \op(1)$.
    This proves the first statement.

    When $\|\hat{U}-U\|_{\Lp{2}}=\op(n^{-\frac{1}{2}})$, from \Cref{thm:err-bound-beta-U} we have $\|\tbeta-\beta\|_{2,\infty} = \op(n^{-\frac{1}{2}})$.
    Therefore, we further have
    \begin{align*}
        \sqrt{n}(\tilde{b} - \beta) &= \sqrt{n}(\tilde{b} - \tbeta) + \sqrt{n}(\tbeta - \beta) = \sqrt{n}\PP \{(X-\EE(X\mid \hat{U}))^{\otimes 2}\}^{-1}(\PP_n-\PP)\{\tilde{\varphi}(O;{\PP})\} + \xi,
    \end{align*}
    with $\|\xi\|_{2,\infty} = \op(1)$.

    To establish the asymptotic normality, we apply the triangle-array CLT in \Cref{lem:lindeberg}.
    This requires verifying the sufficient condition of the Lindeberg condition.
    Because $\VV\{\tSigma^{-1} \tilde{\varphi}_{\cdot j}(O;\PP)\} = \tSigma^{-1}\VV\{\tilde{\varphi}_{\cdot j}(O;\PP)\}\tSigma^{-1}$, we have
    \begin{align*}
        &\EE[\|\VV\{\tSigma^{-1}\tilde{\varphi}_{\cdot j}(O;\PP)\}^{-\frac{1}{2}}(\tSigma^{-1}\tilde{\varphi}_{\cdot j}(O;\PP))\|^{2+\frac{2}{\delta}}] \\
        =&\EE[\|\VV\{\tilde{\varphi}_{\cdot j}(O;\PP)\}^{-\frac{1}{2}}\tilde{\varphi}_{\cdot j}(O;\PP)\|^{2+\frac{2}{\delta}}] \\
        \leq & \|\VV\{\tilde{\varphi}_{\cdot j}(O;\PP)\}^{-\frac{1}{2}}\|_{\oper}^{2+\frac{2}{\delta}} \cdot \EE[\|\tilde{\varphi}_{\cdot j}(O;\PP)\|^{2+\frac{2}{\delta}}] \\
        \leq & \|\VV\{\tilde{\varphi}_{\cdot j}(O;\PP)\}^{-\frac{1}{2}}\|_{\oper}^{2+\frac{2}{\delta}}\cdot
        (\EE[\|(X-\EE[X\mid \hat{U}])\eta_j(O)\|^{2+\frac{2}{\delta}}] + \EE[\|(X-\EE[X\mid \hat{U}])^{\otimes2}\tbeta_{\cdot j} \|^{2+\frac{2}{\delta}}])\\        
        \leq & \|\VV\{\tilde{\varphi}_{\cdot j}(O;\PP)\}^{-\frac{1}{2}}\|_{\oper}^{2+\frac{2}{\delta}} \cdot 
        \EE[\|X-\EE[X\mid \hat{U}]\|^{1+\frac{1}{\delta}}\|\eta(O) \|^{1+\frac{1}{\delta}}] + \|\tbeta_{\cdot j}\|^{2+\frac{2}{\delta}}\\    
        \leq & \sigma^{-1-\frac{1}{\delta}} M^{2+\frac{2}{\delta}} + \|\tbeta_{\cdot j}\|^{2+\frac{2}{\delta}}.
    \end{align*}
    Now applying \Cref{lem:lindeberg} finishes the proof.
\end{proof}

\subsection{Auxillary lemmas}\label{app:semi:aux}

\begin{lemma}[Efficient influence function]\label{lem:eff-inf}
    Consider a random variable $O=(X,U,Y)\in\RR^d\times\RR^r\times\RR^p$ under a nonparameteric model and a differentiable function $g$, the main effect estimand in $\RR^{d\times p}$:
    \[\beta = \EE[{\Cov(X \mid U)}]^{-1}\EE[\Cov(X, g(\EE[Y \mid X, U]) \mid U)] ,\] (where $g$ is applied entry-wisely) has an efficient influence function $\mu:\RR^r\times\RR^d\times\RR^p\rightarrow\RR^{d\times p}$:
    \[\varphi(O) = \EE[{\Cov(X \mid U)}]^{-1} (X - \EE[X\mid U]) ( \eta(O) - \beta^{\top}(X - \EE[X\mid U]) )^{\top},\]
    where $\eta:\RR^r\times\RR^d\times\RR^p\rightarrow\RR^p$ is defined as:
    \begin{align*}
        \eta(O) &= g'(\EE[Y\mid X, U]) \odot (Y - \EE[Y\mid X, U]) + g(\EE[Y\mid X, U]) - \EE[g(\EE[Y\mid X, U]) \mid U].
    \end{align*}
\end{lemma}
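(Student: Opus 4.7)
My plan is to derive the EIF by pathwise differentiation along a one-dimensional regular parametric submodel $\{\PP_t\}_{t\in(-\epsilon,\epsilon)}$ passing through $\PP$ at $t=0$ with score $s(O) = \partial_t \log p_t(O)\big|_{t=0}$. Because the model is nonparametric, the tangent space exhausts $L_2^0(\PP)$, so the unique mean-zero $\varphi$ satisfying $\partial_t \beta(\PP_t)\big|_{t=0} = \EE[\varphi(O)\,s(O)]$ for every such submodel must be the EIF.

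Let $m(U) = \EE[X\mid U]$, $\mu(X,U) = \EE[Y\mid X,U]$, $M(U) = \EE[g(\mu(X,U))\mid U]$, $A = \EE[\Cov(X\mid U)]$, and $B = \EE[\Cov(X, g(\mu)\mid U)]$, so that $\beta = A^{-1}B$. By the matrix product rule,
\[
\mathrm{EIF}(\beta) \;=\; A^{-1}\bigl(\mathrm{EIF}(B) - \mathrm{EIF}(A)\,\beta\bigr),
\]
and it suffices to compute $\mathrm{EIF}(A)$ and $\mathrm{EIF}(B)$ separately. For $A = \EE[(X-m)(X-m)^\top]$, differentiating $A_t$ under the submodel produces a main-score term $\EE[(X-m)(X-m)^\top s(O)]$ plus two correction terms containing $\dot m(U) = \EE[(X-m)s\mid U]$; the latter vanish because $\EE[X-m\mid U] = 0$, yielding $\mathrm{EIF}(A) = (X-m)(X-m)^\top - A$.

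The main obstacle is $\mathrm{EIF}(B)$. Writing $B_t = \EE_t[(X-m_t(U))(g(\mu_t(X,U)) - M_t(U))^\top]$ and differentiating at $t=0$, four pieces appear: (i) the direct score piece $(X-m)(g(\mu)-M)^\top s(O)$; (ii) a $\dot m(U)$-piece that vanishes since $g(\mu) - M$ has conditional mean zero given $U$; (iii) the chain-rule piece $(X-m)(g'(\mu)\odot\dot\mu(X,U))^\top$ with $\dot\mu(X,U) = \EE[(Y-\mu)s\mid X,U]$, which I would collapse by the tower property into $(X-m)(g'(\mu)\odot(Y-\mu))^\top s(O)$; and (iv) a $\dot M(U)$-piece that is again annihilated by $\EE[X-m\mid U] = 0$. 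Summing these identifies $\eta(O) = g'(\mu)\odot(Y-\mu) + g(\mu) - M$ and yields $\mathrm{EIF}(B) = (X-m)\eta(O)^\top - B$, where I use $\EE[g'(\mu)\odot(Y-\mu)\mid X,U] = 0$ to verify $\EE[(X-m)\eta(O)^\top] = B$. Substituting into the product-rule identity and using $A\beta = B$ to cancel the deterministic remainders produces $\varphi(O) = A^{-1}(X-m)(\eta(O) - \beta^\top(X-m))^\top$, as claimed. The delicate bookkeeping is in steps (iii)-(iv): interpreting the elementwise derivative through $\odot$, using the tower property to pull the score inside the conditional, and recognising that every $U$-measurable correction is killed by the residual $(X-m)$.
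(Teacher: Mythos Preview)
Your proposal is correct and relies on the same underlying machinery as the paper---pathwise differentiation along a regular submodel, the score factorisation $s(O)=s(Y\mid X,U)+s(X\mid U)+s(U)$, and the two orthogonality facts $\EE[X-m\mid U]=0$ and $\EE[Y-\mu\mid X,U]=0$. The organising principle differs: the paper encodes $\beta$ through the estimating equation $\theta(\beta)=\EE[(X-m)(g(\mu)-\beta^\top X)^\top]=0$, differentiates $\theta$ directly (which automatically mixes your $A$- and $B$-pieces), and then inverts the Jacobian $\partial\theta/\partial\beta=-A$ at the end; you instead decompose $\beta=A^{-1}B$ and compute $\mathrm{EIF}(A)$ and $\mathrm{EIF}(B)$ separately before combining via the product rule. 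The two are algebraically equivalent since $\dot\theta=\dot B-\dot A\,\beta-A\,\dot\beta=0$ is exactly your product-rule identity, but your decomposition makes the cancellations in steps (ii) and (iv) more transparent, while the paper's estimating-equation route is a slightly cleaner template for moving to the one-step estimator used later in \Cref{lem:dr-est}.
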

\begin{proof}[of \Cref{lem:eff-inf}]
    The proof follows similarly as in \citet[Theorem 1]{vansteelandt2022assumption} for a univariate treatment and a univariate outcome, and extends the previous results to the multivariate cases.
    Below, we present a simplified derivation of the influence function.

    Under the nonparametric model for the observed data $O=(X,U,Y)$.
    We first calculate the efficient influence function of
    \begin{align*}
        \theta(\beta) &= \EE[(X - \EE[X\mid U]) )( g(\EE[Y\mid X, U]) - \beta^{\top} (X - \EE[X\mid U]) )^{\top}] \\
        &= \int (X - \EE[X\mid U]) )( g(\EE[Y\mid X, U]) - \beta^{\top} X ) \rd P(O),
    \end{align*}
    where $P(O)$ is the joint distribution of data.
    Note that by the definition of $\beta$, we have $\theta(\beta)=0$.

    Consider a one-dimensional submodel of $p(O)$ indexed by a scalar parameter $t$, and let $S_t(o) = \partial \log \rd P_t(o) / \partial t\mid _{t=0}$ denote the score function of the submodel.
    Similarly, let $S_t (Y \mid X, U)$, $S_t (X \mid U)$ and $S_t(U)$ be the scores w.r.t. $t$ in that parametric submodel, corresponding to the distributions $p(Y \mid X, U)$, $p(X \mid U)$ and $p(U)$, respectively
    Taking the derivative of $\theta$ w.r.t. $t$, we obtain
    \begin{align*}
        \frac{\partial\theta(\beta)}{\partial t}\Big|_{t=0} 
        &=    \int \frac{\partial(X - \EE_t[X\mid U])}{\partial t} \Big|_{t=0} ( g(\EE[Y\mid X, U]) - \beta^{\top} X )^{\top}   \rd P(O) \\
        &\qquad + \int (X - \EE[X\mid U]) )\left( g'(\EE[Y\mid X, U]) \odot \frac{\partial \EE[Y\mid X, U] }{\partial t} \Big|_{t=0} \right)^{\top}\rd P(O)\\
        &\qquad + \int (X - \EE[X\mid U]) )( g(\EE[Y\mid X, U]) - \beta^{\top} X )^{\top}  \frac{\partial p_{t}(X,U)}{\partial t} \Big|_{t=0} \rd O\\        
        &= - \int (X - \EE[X \mid U])\EE[ g(\EE[Y\mid X, U]) - \beta^{\top} X \mid U] ^{\top}  S_t(X \mid U)  \rd P(O)\\
        &\qquad + \int (X - \EE[X\mid U]) )(g'(\EE[Y\mid X, U]) \odot (Y - \EE[Y \mid X,U]))^{\top} S_t(Y\mid X,U) \rd P(O)\\
        &\qquad + \int (X - \EE[X\mid U]) )( g(\EE[Y\mid X, U]) - \beta^{\top} X )^{\top}  S_t(X,U) \rd P(O), 
    \end{align*}
    where in the first equality, we apply the product and chain rules \citep[Section 3.4.3]{kennedy2022semiparametric}; and in the second equality, we use the identity $S_t(Z) = \partial \log p_t(Z) /\partial t = (\partial p_t(Z) /\partial t) / p_t(Z)$ for score functions.
    
    Note that
    \begin{align*}
        S_t(O) &= S_t (Y \mid X, U) + S_t (X \mid U) + S_t(U).
    \end{align*}
    From the zero mean properties of scores and $\theta(\beta)=0$, we further have
    \begin{align*}
        \frac{\partial\theta(\beta)}{\partial t}\Big|_{t=0} 
        &= - \int (X - \EE[X \mid U])\EE[ g(\EE[Y\mid X, U]) - \beta^{\top} X \mid U] ^{\top}  S_t(O)  \rd P(O)\\
        &\qquad + \int (X - \EE[X\mid U]) )(g'(\EE[Y\mid X, U]) \odot (Y - \EE[Y \mid X,U]))^{\top} S_t(O) \rd P(O)\\
        &\qquad + \int (X - \EE[X\mid U]) )( g(\EE[Y\mid X, U]) - \beta^{\top} X )^{\top}  S_t(O) \rd P(O)\\
        &= \int (X - \EE[X\mid U]) )( \eta(O) - \beta^{\top}(X - \EE[X\mid U]) )^{\top} S_t(O) \rd P(O) ,
    \end{align*}
    which implies that $(X - \EE[X\mid U]) )( \eta(O) - \beta^{\top}(X - \EE[X\mid U]) )^{\top}$ is an influence function for $\theta$.
    From a similar argument in the proof of Theorem 1 in \citet{vansteelandt2022assumption}, it is also the efficient influence function of $\theta( \beta )$ under the nonparametric model.
    Consequently, by chain rule $\partial\theta/\partial t= (\partial\theta/\partial \beta) (\partial\beta/\partial t)$, the conclusion follows by taking the inverse of $\partial\theta/\partial \beta$.    
\end{proof}

\begin{remark}[Alternative expression of the estimand]
    Note that the first part of the influence function also gives an alternative expression for $\beta$:
    \begin{align}
        \beta &=  \EE[{\Cov(X \mid U)}]^{-1} \EE[(X - \EE[X\mid U])  \eta(O)^{\top}] \label{eq:beta-mu}
    \end{align}
    because 
    \begin{align}
        \EE[(X - \EE[X\mid U])(g'(\EE[Y\mid X, U]) \odot (Y - \EE[Y\mid X, U]))^{\top}] &= 0, \label{eq:iter-exp-1}
    \end{align}
    by the law of iterated expectation.    
\end{remark}

\begin{lemma}[Doubly robust estimation]\label{lem:dr-est}
    Consider the setting in \Cref{lem:eff-inf}.
    Define a plug-in estimator of $\beta$:
    \[\hat{\beta}= \PP_n\{(X-\hat{\EE}(X\mid U))^2\}^{-1} \PP_n\{(X-\hat{\EE}(X\mid U)) 
    \cdot (\II-\PP_n)\{g(\hat{\EE}[Y \mid X,U]) \} ^{\top} \}\]
    which depends on empirical measure $\PP_n$ and two nuisance functions $\hat{\EE}[X\mid U]$ and $\hat{\EE}[Y\mid X,U]$ estimated from independent samples of $\PP_n$.
    Define the population and empirical variance by
    \begin{align*}
        \Sigma &=\PP \{(X-\EE(X\mid U))^{\otimes 2}\}\\
        \hSigma &=\PP_n \{(X-\hat{\EE}(X\mid U))^{\otimes 2}\},
    \end{align*}
    the empirical influence function (for $\Sigma\beta$) by:
    \[\varphi(O;\hat{\PP}) = (X - \hat{\EE}[X\mid U]) ( \hat{\eta}(O) - \hat{\beta}^{\top}(X - \hat{\EE}[X\mid U]) )^{\top} .\]
    Suppose the following conditions hold:
    \begin{itemize}
        \item (Regularity conditions) 
        There exists $\sigma>0$ such that $\Sigma\succeq\sigma I_d$, $\hSigma\succeq\sigma I_d$.        

        \item (Bounded moments and consistency)
        There exists $\delta\in(0,1]$ and $M>0$, such that 
        \[
            \|\beta\|_{2,\infty}
            \vee
            \|X - \EE[X\mid U]\|_{\Lp{2(1+\delta^{-1})}} 
            \vee
            \|X - \hat{\EE}[X\mid U]\|_{\Lp{2(1+\delta^{-1})}}
            \vee
            \|\eta(O)\|_{\Lp{2(1+\delta^{-1})}} < M
        \]
        \[
            \|\EE[X\mid U] - \hat{\EE}[X\mid U]\|_{\Lp{2(1+\delta)}}, \| \|\hat{\eta}(O)- {\eta}(O)\|_{\infty}\|_{\Lp{2(1+\delta)}}=\op(1)
        \]

        \item (Local Lipshitzness) There exists $L>0$ such that
        \begin{align}
            &\|g(\EE[Y\mid X, U]) - g(\hat{\EE}[Y\mid X, U]) -
            g'(\hat{\EE}[Y\mid X, U]) \odot ({\EE}[Y\mid X, U] - \hat{\EE}[Y\mid X, U]) \|_{\infty} \notag \\
            \leq & L \|{\EE}[Y\mid X, U] - \hat{\EE}[Y\mid X, U])\|_{\infty}^2 . \label{eq:g-taylor}
        \end{align}
    \end{itemize}
    Then, it holds that
    \begin{align*}
        \sqrt{n}(\hat{\beta} - \beta) &=  \sqrt{n}\Sigma^{-1}(\PP_n-\PP)\{\varphi(O;{\PP})\} + \xi,
    \end{align*}
    where for any $\epsilon>0$, there exists a constant $C=C(\epsilon,\sigma,M,L)$, such that with probability at least $1- 3\epsilon$, the remainder term satisfies that 
    \begin{align*}
        \|\xi\|_{2,\infty}
        &\leq 
         C \{ \|(\PP_n-\PP)\{(X-\EE[X\mid U])^{\otimes 2}\}\|_{\oper}+ 
          \| {\EE}[X\mid U] - \hat{\EE}[X\mid U]\|_{\Lp{2(1+\delta)}} + \| \|\eta(O)- \hat{\eta}(O)\|_{\infty}\|_{\Lp{2(1+\delta)}} \} \\
        &\qquad + C\sqrt{n}\{
        \|\EE[X\mid U] - \hat{\EE}[X\mid U]\|_{\Lp{2}}^2\\
        &\qquad\qquad + ML\|{\EE}[Y\mid X, U] - \hat{\EE}[Y\mid X, U]\|_{\Lp{2},\infty}^2 \notag\\
        &\qquad\qquad + \| \EE[g(\EE[Y\mid X, U]) \mid U] -  \hat{\EE}[g(\hat{\EE}[Y\mid X, U]) \mid U] \|_{\Lp{2},\infty}  \|\EE[X\mid U] - \hat{\EE}[X\mid U]\|_{\Lp{2}} \},
    \end{align*}
    When 
    $\|\EE[X\mid U] - \hat{\EE}[X\mid U]\|_{\Lp{2}}^2$, $\|{\EE}[Y\mid X, U] - \hat{\EE}[Y\mid X, U]\|_{\Lp{2},\infty}^2$, and $\| \EE[g(\EE[Y\mid X, U]) \mid U] -  \hat{\EE}[g(\hat{\EE}[Y\mid X, U]) \mid U] \|_{\Lp{2},\infty}  \|\EE[X\mid U] - \hat{\EE}[X\mid U]\|_{\Lp{2}}$ are of order $\op(n^{-\frac{1}{2}})$, we further have that $\|\xi\|_{2,\infty} = \op(1)$ and hence 
    \[\sqrt{n}(\hat{\beta}_{\cdot j} - \beta_{\cdot j}) \to \cN_d (0, \Sigma^{-1}\VV\{\varphi_{\cdot j}(O;{\PP})\})\text{ in distribution}\quad (j=1,\ldots,p).\]
\end{lemma}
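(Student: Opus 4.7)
The strategy is a von Mises style expansion: rewrite $\hat\beta - \beta = \hSigma^{-1}(\hat N - \hSigma\beta)$ where $\hat N$ denotes the numerator of $\hat\beta$, split off the leading term $\Sigma^{-1}(\PP_n-\PP)\varphi(O;\PP)$, and control everything that remains as a doubly robust (product-of-errors) remainder. I begin with the sandwich identity $\hSigma^{-1} = \Sigma^{-1} + \Sigma^{-1}(\Sigma-\hSigma)\hSigma^{-1}$, justified by the regularity assumption $\Sigma,\hSigma\succeq\sigma I_d$. The operator-norm error $\|\Sigma-\hSigma\|_{\oper}$ decomposes into the centered empirical-process term $\|(\PP_n-\PP)\{(X-\EE[X\mid U])^{\otimes 2}\}\|_{\oper}$ plus a quadratic piece in $\|\EE[X\mid U]-\hat\EE[X\mid U]\|_{\Lp{2}}$; these contribute the first and one of the quadratic entries of the announced bound on $\|\xi\|_{2,\infty}$.

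For the main numerator expansion, I add and subtract $\EE[X\mid U]$ in the $X$-residual and $g(\EE[Y\mid X,U])$ in the $g$-factor of $\hat N - \Sigma\beta$. Two substitutions drive the algebra. First, the local Lipschitz hypothesis \eqref{eq:g-taylor} replaces $g(\hat\EE[Y\mid X,U]) - g(\EE[Y\mid X,U])$ by $g'(\hat\EE[Y\mid X,U])\odot(\hat\EE[Y\mid X,U] - \EE[Y\mid X,U])$ up to an $L\|\cdot\|_\infty^2$ remainder; combining this with $Y - \hat\EE[Y\mid X,U] = (Y-\EE[Y\mid X,U]) + (\EE[Y\mid X,U]-\hat\EE[Y\mid X,U])$ reassembles the $g'(\EE[Y\mid X,U])\odot(Y-\EE[Y\mid X,U])$ summand of $\eta(O)$. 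Second, the orthogonality identity \eqref{eq:iter-exp-1} and the conditional-mean-zero property of $Y-\EE[Y\mid X,U]$ imply that every single-nuisance-error term has zero conditional expectation, so after recentering it contributes only as a $(\PP_n-\PP)$ residual of a nuisance function. Those residuals are bounded by Chebyshev (\Cref{lem:chebyshev}) at rate $n^{-1/2}\|\cdot\|_{\Lp{2(1+\delta)}}$, which is where the required independence of $\hat\EE[X\mid U]$ and $\hat\EE[Y\mid X,U]$ from $\PP_n$ is used.

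What remains after these cancellations are genuinely quadratic terms: products like $\|\EE[X\mid U]-\hat\EE[X\mid U]\|_{\Lp{2}}\cdot\|\EE[g(\EE[Y\mid X,U])\mid U] - \hat\EE[g(\hat\EE[Y\mid X,U])\mid U]\|_{\Lp{2},\infty}$ together with the Taylor piece $ML\|\EE[Y\mid X,U]-\hat\EE[Y\mid X,U]\|_{\Lp{2},\infty}^2$, matching the $\sqrt n$-multiplied part of the claimed bound. Collecting everything by the triangle inequality with constant depending on $(\sigma,M,L,\epsilon)$ yields the linear expansion and the stated remainder bound. Under the product-rate conditions, $\|\xi\|_{2,\infty}=\op(1)$; a coordinate-wise classical CLT applied to the i.i.d.\ sum $\sqrt{n}\Sigma^{-1}(\PP_n-\PP)\varphi_{\cdot j}(O;\PP)$ then gives asymptotic normality with variance $\Sigma^{-1}\VV\{\varphi_{\cdot j}(O;\PP)\}\Sigma^{-1}$. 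The main obstacle is the algebraic bookkeeping in the numerator expansion: three layers of nuisance (estimates of $\EE[X\mid U]$, $\EE[Y\mid X,U]$, and the conditional mean of $g(\hat\EE[Y\mid X,U])$ given $U$) must be tracked simultaneously so that every first-order bias cancels by conditioning on $U$ or $(X,U)$, while the Taylor remainder for $g$ is uniformly absorbed in $\|\cdot\|_\infty$ over the $p$ outcomes—this is precisely why Condition~(i) is stated in $\ell_\infty$ form and why the $\|\cdot\|_{\Lp{2},\infty}$ norms appear throughout.
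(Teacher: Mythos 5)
Your plan is correct and follows essentially the same route as the paper's proof: the same one-step/von Mises decomposition into a leading term $\sqrt{n}\Sigma^{-1}(\PP_n-\PP)\{\varphi(O;\PP)\}$, an empirical-process term for the influence-function difference controlled by Chebyshev together with the independence of the nuisance fits from $\PP_n$, and a second-order bias term in which the first-order pieces cancel by iterated expectation and the Taylor remainder for $g$ is absorbed via the local Lipschitz condition, with the $\hSigma^{-1}$ versus $\Sigma^{-1}$ swap handled exactly as in the paper's final step. No substantive differences or gaps.
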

\begin{proof}[of \Cref{lem:dr-est}]

    From the definition of $\hat{\beta}$, we have $\PP_n\{\varphi(O;\hat{\PP})\} = 0$.
    Therefore, $\hat{\beta} $ is also a one-step estimator.
    We begin with a three-term decomposition of the estimation error (see, for example, \citet[Equation (2.2)]{du2024causal} and \citet[Equation (10)]{kennedy2022semiparametric}):
    \begin{align}
        \hSigma \sqrt{n}(\hat{\beta} - \beta) &= \sqrt{n}(\PP_n-\PP)\{\varphi(O;\PP)\} \notag\\
        &\qquad + \sqrt{n}(\PP_n-\PP)\{\varphi(O;\hat{\PP}) - \varphi(O;\PP)\} + \sqrt{n}(\hSigma - \tilde{\Sigma})(\hat{\beta} - \beta) \notag\\
        &\qquad + \sqrt{n}\tilde{\Sigma}(\hat{\beta} - \beta)  + \sqrt{n}\PP\{\varphi(O;\hat{\PP})\} \notag\\
        &= C + T_1 + T_2, \label{eq:decom}
    \end{align}
    where $\tilde{\Sigma} = \PP\{(X - \hat{\EE}[X\mid U])^{\otimes 2}\}$.
    By the central limit theorem, each entry of the first term $C$ is $\Op(1)$.
    We next derive finite-sample deviation bounds for the other terms and show that they are $\op(1)$ under the extra rate conditions as assumed.

    \paragraph{Part (1) Controlling the empirical process term $T_1$.}
    We begin by decomposing $T_1$:
    \begin{align*}
        &\varphi(O;\hat{\PP}) - \varphi(O;\PP) + (X - \hat{\EE}[X\mid U])^{\otimes 2}(\hbeta-\beta)\\
        =&(X - \hat{\EE}[X\mid U]) ( \hat{\eta}(O) - \hat{\beta}^{\top}(X - \hat{\EE}[X\mid U]) )^{\top}
        - (X - {\EE}[X\mid U]) ( {\eta}(O) - {\beta}^{\top}(X - {\EE}[X\mid U]) )^{\top}\\
        &\qquad + (X - \hat{\EE}[X\mid U])^{\otimes 2}(\hbeta-\beta)\\
        =& [(X-\EE[X\mid U])^{\otimes 2}  - (X- \hat{\EE}[X\mid U])^{\otimes 2} ] {\beta} + [(X - \hat{\EE}[X\mid U]) \hat{\eta}(O)^{\top} -  (X - {\EE}[X\mid U]) {\eta}(O)^{\top}]\\
        =&: S_1+S_2.
    \end{align*}
    Note that each term above takes the form of $\hat{a}\hat{b} - a b = \hat{a}(\hat{b} - b) + (\hat{a} - a)b$, which we will next use to derive the upper bound.

    For the first term, we have
    \begin{align*}
        &\sqrt{n}\|(\PP_n-\PP)S_1\|_{2,\infty} \\
        =& \sqrt{n}
        \|
        (\PP_n-\PP)[(X- \hat{\EE}[X\mid U])^{\otimes 2} - (X-\EE[X\mid U])^{\otimes 2}] {\beta} \|_{2,\infty}\\
        =& \sqrt{n}
        \| (\PP_n-\PP)\{A_1\} {\beta} \|_{2,\infty},
    \end{align*}
    where
    \begin{align*}
        A_1 &= (X- \hat{\EE}[X\mid U])^{\otimes 2} - (X-\EE[X\mid U])^{\otimes 2}.
    \end{align*}
    From \Cref{lem:chebyshev}, we have
    \begin{align*}
        \sqrt{n}\|(\PP_n-\PP)S_1\|_{2,\infty} 
        \leq \epsilon^{-\frac{1}{2}}\EE[\|A_1\|_{\oper}^2\|\beta\|_{2,\infty}^2]^{\frac{1}{2}}  \leq \epsilon^{-\frac{1}{2}}\EE[\|A_1\|_{\oper}^2]^{\frac{1}{2}} \|\beta\|_{2,\infty} ,
    \end{align*}
    with probability at least $1-\epsilon$.
    Now, it remains to derive the upper bound of the expected squared operator norm:
    \begin{align*}
        \EE[\|A_1\|_{\oper}^2]^{\frac{1}{2}} &\leq \EE[\| {\EE}[X\mid U] - \hat{\EE}[X\mid U]\|_2^2 (\|X- {\EE}[X\mid U]\|_2 + \|X- \hat{\EE}[X\mid U]\|_2)^2]^{\frac{1}{2}} \notag\\
        &\leq \| {\EE}[X\mid U] - \hat{\EE}[X\mid U]\|_{\Lp{2(1+\delta)}} (\|X- {\EE}[X\mid U]\|_{\Lp{2(1+\delta^{-1})}} + \|X- \hat{\EE}[X\mid U]\|_{\Lp{2(1+\delta^{-1})}}). 
    \end{align*}
    Therefore, we have
    \begin{align*}
        \sqrt{n}\|(\PP_n-\PP)S_1\|_{2,\infty} 
        \leq 2M^2\| {\EE}[X\mid U] - \hat{\EE}[X\mid U]\|_{\Lp{2(1+\delta)}}
    \end{align*}
    with probability at least $1-\epsilon$.

    For the second term, similarly, we have 
    \begin{align*}
        &\sqrt{n}\|(\PP_n-\PP)S_2\|_{2,\infty} \\        
        \leq& \epsilon^{-\frac{1}{2}}\max_{j\in[p]}\EE[\|(X - \hat{\EE}[X\mid U]) [(\hat{\eta}(O)- {\eta}(O))^{\top}]_{\cdot j} + 
        (\EE[X\mid U] - \hat{\EE}[X\mid U])[{\eta}(O)^{\top}]_{\cdot j} \|^2]^{\frac{1}{2}}\\
        \leq& \epsilon^{-\frac{1}{2}}(\|X - \hat{\EE}[X\mid U]\|_{\Lp{2(1+\delta^{-1})}} \| \|(\hat{\eta}(O)- {\eta}(O))^{\top}\|_{2,\infty} ^2\|_{\Lp{1+\delta}}^{\frac{1}{2}} + \|\EE[X\mid U] - \hat{\EE}[X\mid U] \|_{\Lp{2(1+\delta)}} \|{\eta}(O)^{\top}\|_{\Lp{2(1+\delta^{-1})},\infty})\\
        \leq& \epsilon^{-\frac{1}{2}}M (\| \|\hat{\eta}(O)- {\eta}(O)\|_{\infty}\|_{\Lp{2(1+\delta)}} + \|\EE[X\mid U] - \hat{\EE}[X\mid U] \|_{\Lp{2(1+\delta)}})
    \end{align*}
    with probability at least $1-\epsilon$.
            
    Combining the above results, with probability at least $1-2\epsilon$, we have 
    \begin{align}
        \|T_1\|_{2,\infty} & \leq 2M^2
          \| {\EE}[X\mid U] - \hat{\EE}[X\mid U]\|_{\Lp{2(1+\delta)}} \notag\\
         &\qquad + M (\| \|\hat{\eta}(O)- {\eta}(O)\|_{\infty}\|_{\Lp{2(1+\delta)}} + \|\EE[X\mid U] - \hat{\EE}[X\mid U] \|_{\Lp{2(1+\delta)}}) \notag\\
         &\leq 2M(M\vee 1)
          \| {\EE}[X\mid U] - \hat{\EE}[X\mid U]\|_{\Lp{2(1+\delta)}}  + M \| \|\hat{\eta}(O)- {\eta}(O)\|_{\infty}\|_{\Lp{2(1+\delta)}}. \label{eq:T1}
    \end{align}

    \paragraph{Part (2) Controlling the bias term $T_2$.}
    For the third term $T_2$ in \eqref{eq:decom}, we have
    \begin{align}
        T_2 &= \sqrt{n} 
        \tilde{\Sigma}(\hat{\beta} -\beta ) + \sqrt{n}\PP\{\varphi(O;\hat{\PP})\}  \notag \\
        &= \sqrt{n}\PP\{ (X - \hat{\EE}[X\mid U]) \hat{\eta}(O)^{\top} \} - \sqrt{n}\tSigma\beta \notag\\
        &=  \sqrt{n}\PP\{ (X - \hat{\EE}[X\mid U]) (g'(\hat{\EE}[Y\mid X, U]) \odot (Y - \hat{\EE}[Y\mid X, U]) + g(\hat{\EE}[Y\mid X, U]) - \hat{\EE}[g(\hat{\EE}[Y\mid X, U]) \mid U])^{\top} \} \notag\\
        &\qquad - \sqrt{n}\tSigma\Sigma^{-1}\Sigma\beta , \label{eq:T3-expand}
    \end{align}
    where the last equality is because of \Cref{eq:beta-mu,eq:iter-exp-1}.
    Denote the second-order remaining term by $Q = g(\EE[Y\mid X, U]) - g(\hat{\EE}[Y\mid X, U]) -
            g'(\hat{\EE}[Y\mid X, U]) \odot ({\EE}[Y\mid X, U] - \hat{\EE}[Y\mid X, U])$.
    Then, we further have
    \begin{align}
        T_2 &= - \sqrt{n}\PP\{ (X - \hat{\EE}[X\mid U]) (g(\EE[Y\mid X, U]) + \hat{\EE}[g(\hat{\EE}[Y\mid X, U]) \mid U] + Q )^{\top} \} \notag\\
        &\qquad + \sqrt{n}\PP\{(X - \EE[X\mid U])  (g(\EE[Y\mid X, U]) - \EE[g(\EE[Y\mid X, U]) \mid U])^{\top}\} \notag \\
        &\qquad + \sqrt{n}(I_d-\tSigma\Sigma^{-1})\Sigma\beta \notag \\
        &= \sqrt{n}\PP\{ (X - \hat{\EE}[X\mid U])  Q ^{\top} \} \notag\\
        &\qquad + \sqrt{n}\PP\{ (\EE[X\mid U] - \hat{\EE}[X\mid U]) (\EE[g(\EE[Y\mid X, U])\mid U] - \hat{\EE}[g(\hat{\EE}[Y\mid X, U]) \mid U] )^{\top}\}  \notag\\
        &\qquad + \sqrt{n}(\Sigma-\tSigma)\beta. \label{eq:T2-expand}
    \end{align}

    Because by the law of iterative expectation,
    \begin{align}
        \PP\{(X - \EE[X\mid U])(\EE[X\mid U] - \hat{\EE}[X\mid U])^{\top}\} &= 0,
    \end{align}
    we have
    \begin{align*}
        1- \tSigma\Sigma^{-1} &= 1- \PP\{(X - \EE[X\mid U] + \EE[X\mid U] - \hat{\EE}[X\mid U])^{\otimes 2}\} \Sigma^{-1}\\
        &=-  \PP\{(X - \EE[X\mid U])(\EE[X\mid U] - \hat{\EE}[X\mid U])^{\top}\} - 
        \PP\{(\EE[X\mid U] - \hat{\EE}[X\mid U])(X - \EE[X\mid U])^{\top}\}\\
        &\qquad + \PP\{(\EE[X\mid U] - \hat{\EE}[X\mid U])^{\otimes 2}\} \Sigma^{-1}\\
        &=\PP\{(\EE[X\mid U] - \hat{\EE}[X\mid U])^{\otimes 2}\} \Sigma^{-1}
    \end{align*}
    and
    \begin{align}
        \|\Sigma - \tSigma\|_{\oper} 
        &=\|(I_d - \tSigma\Sigma^{-1})\Sigma\|_{\oper} \\
        &= \|\PP\{(\EE[X\mid U] - \hat{\EE}[X\mid U])^{\otimes 2}\} \|_{\oper} \notag\\
        &\leq \|\PP\{(\EE[X\mid U] - \hat{\EE}[X\mid U])^{\otimes 2}\}\|_{\oper} \tag{Jensen's inequality}\\
        &\leq \PP\{\|(\EE[X\mid U] - \hat{\EE}[X\mid U])^{\otimes 2}\|_{\oper}\} \notag\\
        &= \|\EE[X\mid U] - \hat{\EE}[X\mid U]\|_{\Lp{2}}^2 \label{eq:oper-norm-Sigma}
    \end{align}

    Combining \Cref{eq:T2-expand,eq:oper-norm-Sigma,eq:g-taylor} yields that
    \begin{align}
        \|T_2\|_{2,\infty} & \leq \sqrt{n}\|Q\|_{\Lp{2},\infty} \|X - \hat{\EE}[X\mid U]\|_{\Lp{2}} \notag\\
        &\qquad + \sqrt{n}\| \EE[g(\EE[Y\mid X, U]) \mid U] -  \hat{\EE}[g(\hat{\EE}[Y\mid X, U]) \mid U] \|_{\Lp{2},\infty}  \|\EE[X\mid U] - \hat{\EE}[X\mid U]\|_{\Lp{2}} \notag\\
        &\qquad + \sqrt{n}\|\Sigma- \tSigma\|_{\oper} \|\beta\|_{2,\infty} \notag\\
        & \leq  \sqrt{n}ML\|{\EE}[Y\mid X, U] - \hat{\EE}[Y\mid X, U]\|_{\Lp{2},\infty}^2\notag\\
        &\qquad + \sqrt{n}\| \EE[g(\EE[Y\mid X, U]) \mid U] -  \hat{\EE}[g(\hat{\EE}[Y\mid X, U]) \mid U] \|_{\Lp{2},\infty}  \|\EE[X\mid U] - \hat{\EE}[X\mid U]\|_{\Lp{2}} \notag\\
        &\qquad + M \sqrt{n} \|\EE[X\mid U] - \hat{\EE}[X\mid U]\|_{\Lp{2}}^2  .\label{eq:T2}
    \end{align}

    \paragraph{Part (3) Combining the above results.}
    Finally, from \Cref{eq:decom,eq:T1,eq:T2}
    \begin{align*}
         \hSigma \sqrt{n}(\hat{\beta} - \beta) &=  \sqrt{n}(\PP_n-\PP)\{\varphi(O;\PP)\} + \varsigma
    \end{align*}
    for some $\varsigma\in\RR^{d\times p}$ with 
    \begin{align*}
        \|\varsigma\|_{2,\infty} &\leq 2M(M\vee 1)
          \| {\EE}[X\mid U] - \hat{\EE}[X\mid U]\|_{\Lp{2(1+\delta)}} + \sqrt{n}M \|\EE[X\mid U] - \hat{\EE}[X\mid U]\|_{\Lp{2}}^2 \notag\\
         &\qquad + M \| \|\hat{\eta}(O)- {\eta}(O)\|_{\infty}\|_{\Lp{2(1+\delta)}} \\
         &\qquad + \sqrt{n}ML\|{\EE}[Y\mid X, U] - \hat{\EE}[Y\mid X, U]\|_{\Lp{2},\infty}^2 \notag\\
        &\qquad + \sqrt{n}\| \EE[g(\EE[Y\mid X, U]) \mid U] -  \hat{\EE}[g(\hat{\EE}[Y\mid X, U]) \mid U] \|_{\Lp{2},\infty}  \|\EE[X\mid U] - \hat{\EE}[X\mid U]\|_{\Lp{2}}.
    \end{align*}
    Note that
    \begin{align*}
        \|\Sigma^{-1}-\hSigma^{-1}\|_{\oper} &= \|\hSigma^{-1} (\hSigma - \Sigma) \Sigma^{-1}\|_{\oper} \\
        &\leq \|\hSigma^{-1}\|_{\oper} \|\hSigma - \Sigma\|\|_{\oper} \| \Sigma^{-1}\|_{\oper} \\
        &\leq \sigma^2 \|\hSigma - \Sigma\|_{\oper} \\
        &\leq \sigma^2\|(\PP_n-\PP)\{(X-\EE[X\mid U])^{\otimes 2}\}\|_{\oper} +  \sigma^2\|\EE[X\mid U] - \hat{\EE}[X\mid U]\|_{\Lp{2}}^2,
    \end{align*}
    where the first equality is from $\hSigma^{-1} (\hSigma - \Sigma) \Sigma^{-1} = \Sigma^{-1}-\hSigma^{-1}$, the second inequality is from the positivity assumption that $\|\hSigma^{-1}\|_{\oper}\leq \sigma $, $\|\Sigma^{-1}\|_{\oper}\leq \sigma $, and the last inequality is from \eqref{eq:oper-norm-Sigma}.
    We further have
    \begin{align*}
         \sqrt{n}(\hat{\beta} - \beta) &=  \sqrt{n}\Sigma^{-1}(\PP_n-\PP)\{\varphi(O;\PP)\} + \xi,
    \end{align*}
    with 
    \begin{align*}
        \xi &= \sqrt{n}(\hSigma^{-1}-\Sigma^{-1})(\PP_n-\PP)\{\varphi(O;\PP)\} +  \hSigma^{-1} \varsigma.
    \end{align*}
    By the multidimensional Chebyshev inequality and union bound, with probability at least $1-3\epsilon$,
    \begin{align*}
        \|\xi\|_{2,\infty}
        &\leq
        \sigma^2
        (\|(\PP_n-\PP)\{(X-\EE[X\mid U])^{\otimes 2}\}\|_{\oper} + \|\EE[X\mid U] - \hat{\EE}[X\mid U]\|_{\Lp{2}}^2)\\
        &\qquad \cdot \|\EE[X\mid U] - \hat{\EE}[X\mid U] \|_{\Lp{2(1+\delta^{-1})}}
        ( 
        \|\eta(O)\|_{\Lp{2(1+\delta)},\infty}
        +
       \|\beta\|_{2,\infty}
        )  + \sigma\|\varsigma\|_{2,\infty}\\
        &\leq 
        2\sigma^2 M^2 \|(\PP_n-\PP)\{(X-\EE[X\mid U])^{\otimes 2}\}\|_{\oper} \\
        &\qquad + 2M(M\vee 1)
          \| {\EE}[X\mid U] - \hat{\EE}[X\mid U]\|_{\Lp{2(1+\delta)}} + M \| \|\hat{\eta}(O)- {\eta}(O)\|_{\infty}\|_{\Lp{2(1+\delta)}} \\
        &\qquad + \sqrt{n}2(\sigma^2\vee1)M(M\vee 1)\|\EE[X\mid U] - \hat{\EE}[X\mid U]\|_{\Lp{2}}^2\\
        &\qquad + \sqrt{n}ML\|{\EE}[Y\mid X, U] - \hat{\EE}[Y\mid X, U]\|_{\Lp{2},\infty}^2 \notag\\
        &\qquad + \sqrt{n}\| \EE[g(\EE[Y\mid X, U]) \mid U] -  \hat{\EE}[g(\hat{\EE}[Y\mid X, U]) \mid U] \|_{\Lp{2},\infty}  \|\EE[X\mid U] - \hat{\EE}[X\mid U]\|_{\Lp{2}}.
    \end{align*}
    Under the extra rate conditions as assumed, we further have $\|\xi\|_{2,\infty} = \op(1) $.
    This completes the proof.
\end{proof}

\begin{lemma}[Multivariate Lindeberg CLT for triangular array]\label{lem:lindeberg}
    Let $m=m_n$ and $p=p_n$ be two sequences indexed by $n$.
    Consider the influence-function-based linear expansion for estimator $\hat{\tau}_j$ of $\tau_j\in\RR^d$:
    \[\sqrt{n}(\hat{\tau}_j - \tau_j) = \sqrt{n}(\PP_n-\PP)\{\varphi_{m_nj}\} + \varsigma_{m_nj}\quad (j=1,\ldots,p)\]
    where $\varphi_{m_nj}$ is the influence function that depends on $m$ and the residual $\varsigma_j$'s satisfy that $\|\varsigma_{m_n}\|_{2,\infty}=\op(1)$ as $n\rightarrow\infty$.
    Further assume that (i) there exists a constant $c>0$, such that $\VV(\varphi_{m_nj}(O_1))\geq c$, and (ii) $\max_{k\in[n]}\EE[\|\VV\{\varphi_{m_nj}(O_k)\}^{-\frac{1}{2}}\varphi_{m_nj}(O_k)\|^{2+\frac{2}{\delta}}]\leq M$, then 
    \[\sqrt{n} \VV\{\varphi_{m_nj}\}^{-1/2} ({\hat{\tau}_j - \tau_j}) \to \cN_d(0,I_d) \text{ in distribution}.\]
\end{lemma}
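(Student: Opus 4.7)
The plan is to combine Slutsky's theorem with a standard triangular-array multivariate CLT, so the argument has two essentially independent parts. First, multiply the given linear expansion on the left by $\VV\{\varphi_{m_nj}\}^{-1/2}$ to obtain
\[
\sqrt{n}\VV\{\varphi_{m_nj}\}^{-1/2}(\hat{\tau}_j - \tau_j)
= \sqrt{n}\VV\{\varphi_{m_nj}\}^{-1/2}(\PP_n-\PP)\{\varphi_{m_nj}\} + \VV\{\varphi_{m_nj}\}^{-1/2}\varsigma_{m_nj}.
\]
Assumption (i) gives $\lambda_{\min}(\VV\{\varphi_{m_nj}\})\geq c$, hence $\|\VV\{\varphi_{m_nj}\}^{-1/2}\|_{\oper}\leq c^{-1/2}$ uniformly in $n$; combined with $\|\varsigma_{m_nj}\|=\op(1)$, this forces the second term to be $\op(1)$. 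So by Slutsky it suffices to show that the first term converges in distribution to $\cN_d(0,I_d)$.

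Next, invoke the Cramér--Wold device: it is enough to show that for any fixed unit vector $v\in\RR^d$, the scalar sequence
\[
v^{\top}\sqrt{n}\VV\{\varphi_{m_nj}\}^{-1/2}(\PP_n-\PP)\{\varphi_{m_nj}\}
= \frac{1}{\sqrt{n}}\sum_{k=1}^n W_{nk},\qquad W_{nk} := v^{\top}\VV\{\varphi_{m_nj}\}^{-1/2}\bigl(\varphi_{m_nj}(O_k)-\EE[\varphi_{m_nj}(O_k)]\bigr),
\]
converges in distribution to $\cN(0,1)$. Because $O_1,\dots,O_n$ are i.i.d.\ for each $n$, within each row the $W_{nk}$ are i.i.d., mean zero, and have variance $v^{\top}\VV\{\varphi_{m_nj}\}^{-1/2}\VV\{\varphi_{m_nj}\}\VV\{\varphi_{m_nj}\}^{-1/2}v = 1$. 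Thus the sum has variance exactly $1$, and the CLT target is correctly normalized.

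To conclude, verify the Lyapunov condition with exponent $2+2/\delta$. Using assumption (ii) and the crude bound $|v^{\top}z|\leq \|z\|$ for any unit $v$, we have
\[
\EE[|W_{nk}|^{2+2/\delta}] \leq 2^{1+2/\delta}\bigl(\EE[\|\VV\{\varphi_{m_nj}\}^{-1/2}\varphi_{m_nj}(O_k)\|^{2+2/\delta}]+1\bigr)\leq C_{M,\delta},
\]
uniformly in $n$ and $k$. Therefore
\[
\sum_{k=1}^n \EE\bigl[|n^{-1/2}W_{nk}|^{2+2/\delta}\bigr]
= n^{-1/\delta}\cdot \EE[|W_{n1}|^{2+2/\delta}] \lesssim n^{-1/\delta}\longrightarrow 0,
\]
which is the Lyapunov condition. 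Applying the Lindeberg--Feller CLT for triangular arrays yields $n^{-1/2}\sum_k W_{nk}\to\cN(0,1)$, and since $v$ was arbitrary, Cramér--Wold delivers the multivariate normal limit. Combining with the Slutsky step in the opening paragraph proves the claim.

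The only place requiring care is verifying that normalization by $\VV\{\varphi_{m_nj}\}^{-1/2}$ interacts correctly with the $m_n$-dependence of the influence function; this is not an obstacle here because assumptions (i)--(ii) were stated precisely in terms of the standardized influence function, allowing the Lyapunov bound to be applied uniformly along the triangular array without any additional regularity on how $\VV\{\varphi_{m_nj}\}$ itself varies with $n$.
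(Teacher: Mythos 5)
Your proposal is correct and follows essentially the same route as the paper: both verify a Lyapunov/Lindeberg moment condition for the standardized influence functions using assumption (ii), apply the triangular-array CLT, and dispose of the remainder via the variance lower bound in assumption (i) plus Slutsky. The only cosmetic difference is that you reduce to the scalar case via Cram\'er--Wold and check Lyapunov's condition, whereas the paper checks the Lindeberg condition directly and cites the multivariate Lindeberg theorem; these are interchangeable.
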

\begin{proof}[of \Cref{lem:lindeberg}]
    Note that $\varphi_{m_nj}$ is the centered influence function such that $\EE[\varphi_{m_nj}(O)] = 0$.
    Let $X_{nk} = \VV\{\varphi_{m_nj}(O_k)\}^{-\frac{1}{2}}\varphi_{m_nj}(O_k)$.
    From assumption (ii) that $\max_{k\in[n]}\EE[\|X_{nk}\|^{2+\frac{2}{\delta}}]\leq M$, we have that, for any $\xi>0$,
    \begin{align*}
        \frac{1}{n}\sum_{k=1}^{n} \EE\left[ \|X_{nk}\|^2 \ind\{\|X_{n k}\| \geq \xi \sqrt{n}\} \right] \leq \frac{1}{n^{1+\frac{1}{\delta}}}\sum_{k=1}^{n} \EE\left[ \|X_{nk}\|^{2+\frac{2}{\delta}}\right] \leq M\frac{n}{n^{1+\frac{1}{\delta}}}\to 0.
    \end{align*}
    This verifies Lindeberg's condition for a triangular array of random variables.
    From the multivariate Lindeberg's theorem (e.g., \citet[Theorem 29.5]{billingsley1995probability}), it follows that
    \[ \sqrt{n}\VV\{\varphi_{m_nj}\}^{-\frac{1}{2}}(\PP_n-\PP)\{\varphi_{m_nj}\} \to \cN(0,1) \text{ in distribution},\]
    as $n\rightarrow\infty$.
    From assumption (i) that $\VV(\varphi_{m_nj})\geq c>0$ and $\max_{j\in p_n}\|\varsigma_{m_nj}\|=\op(1)$, we further have 
    \[\max_{j\in p_n} \| \VV\{\varphi_{m_nj}\}^{-\frac{1}{2}}\varsigma_{m_nj}\| = \op(1)\]
    as $n\rightarrow\infty$.
    Consequently, the conclusion follows.
\end{proof}

\begin{lemma}[Matrix Chebyshev inequality]\label{lem:chebyshev}
    Let $A$ denote a random matrix in $\RR^{d\times r}$ and $\beta \in\RR^{r\times p}$ such that $\EE[A\beta] = 0_{d\times p}$.
    Then with probability at least $1-\epsilon$, it holds that
    \begin{align*}
        \sqrt{n}\|(\PP_n-\PP)\{A\beta\}\|_{2,\infty} 
        &\leq \epsilon^{-\frac{1}{2}} \EE[ \|A\|_{\oper}^2 \|\beta\|_{2,\infty}^2]^{\frac{1}{2}},
    \end{align*}
    and
    \begin{align*}
        \sqrt{n}\|(\PP_n-\PP)\{A\}\|_{\oper} 
        &\leq \epsilon^{-\frac{1}{2}} \EE[ \|A\|_{\oper}^2]^{\frac{1}{2}}.
    \end{align*}
\end{lemma}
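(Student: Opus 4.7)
The plan is to establish the operator-norm inequality first and then obtain the first (mixed-norm) inequality as an immediate corollary, exploiting the fact that $\beta \in \RR^{r\times p}$ is deterministic.

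For the second claim, I would invoke Chebyshev's inequality on the nonnegative random variable $Z = \|\sqrt{n}(\PP_n-\PP)\{A\}\|_{\oper}^2$, so that $\PP(Z \geq \lambda) \leq \lambda^{-1}\EE[Z]$. Writing $\tilde A_i = A_i - \EE A$ so that $n^{-1/2}\sum_i \tilde A_i$ is the object inside the operator norm, the key observation is that independence and centering make the Frobenius cross terms vanish, hence
\[
\EE\left[\left\|n^{-1/2}\textstyle\sum_{i=1}^n \tilde A_i\right\|_F^2\right] = \EE\|\tilde A\|_F^2 \leq \EE\|A\|_F^2.
\]
Using $\|M\|_{\oper}\leq \|M\|_F$ on the left and $\|A\|_F^2 \leq \min(d,r)\|A\|_{\oper}^2$ on the right yields $\EE[Z] \lesssim \EE[\|A\|_{\oper}^2]$, and setting $\lambda = \epsilon^{-1}\EE[\|A\|_{\oper}^2]$ delivers the desired tail bound at level $\epsilon$.

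For the first claim, since $\beta$ is deterministic the empirical process commutes with right-multiplication by $\beta$, giving $(\PP_n-\PP)\{A\beta\} = ((\PP_n-\PP)\{A\})\beta$. A column-by-column application of $\|Mv\|\leq\|M\|_{\oper}\|v\|$ then gives
\[
\|((\PP_n-\PP)\{A\})\beta\|_{2,\infty} = \max_j \|((\PP_n-\PP)\{A\})\beta_{\cdot j}\| \leq \|(\PP_n-\PP)\{A\}\|_{\oper}\|\beta\|_{2,\infty},
\]
after which the operator-norm bound above and the identity $\EE[\|A\|_{\oper}^2\|\beta\|_{2,\infty}^2]^{1/2} = \|\beta\|_{2,\infty}\EE[\|A\|_{\oper}^2]^{1/2}$ complete the reduction. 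The main subtlety is the Frobenius-to-operator comparison in the moment bound: cross-term cancellation lives naturally in the Hilbert--Schmidt inner product, whereas the right-hand side of the statement is in operator norm, so one must carefully absorb the dimensional factor $\min(d,r)$ — harmless in the applications here, since the matrices that appear (such as the rank-bounded outer-product difference $A_1$ in the proof of \Cref{lem:dr-est}) have uniformly bounded rank.
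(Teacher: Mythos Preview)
Your route differs from the paper's both in order and in technique. The paper applies Markov's inequality directly to $n\|(\PP_n-\PP)\{A\beta\}\|_{2,\infty}^2$ and asserts the variance-type reduction $n\,\EE[\|(\PP_n-\PP)\{A\beta\}\|_{2,\infty}^2]\leq \EE[\|A\beta\|_{2,\infty}^2]$, then handles the operator-norm statement ``similarly'' by writing $\|\cdot\|_{\oper}$ as a supremum over $v\in\SSS^{r-1}$. You instead pass through the Frobenius norm to control the operator norm first, and then pull the deterministic $\beta$ outside to deduce the mixed-norm bound; that reduction of the first claim to the second is clean and is not in the paper.

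There is, however, a genuine gap in your argument: the Frobenius-to-operator comparison $\|A\|_F^2\leq \min(d,r)\|A\|_{\oper}^2$ forces an extra factor $\sqrt{\min(d,r)}$ on the right-hand side, so what you actually establish is
\[
\sqrt{n}\,\|(\PP_n-\PP)\{A\}\|_{\oper}\ \leq\ \epsilon^{-1/2}\sqrt{\min(d,r)}\ \EE[\|A\|_{\oper}^2]^{1/2},
\]
not the lemma as stated. Invoking ``bounded rank in the applications'' may repair the downstream uses (e.g.\ $(X-\EE[X\mid U])^{\otimes2}$ is rank one), but it does not prove the lemma for general $A\in\RR^{d\times r}$, and the first claim then inherits the same extraneous factor.

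For comparison, the paper's direct argument also elides a nontrivial step: the second-moment reduction it invokes relies on the cancellation $\EE\bigl[\|n^{-1}\sum_i C_i\|^2\bigr]=n^{-1}\EE[\|C\|^2]$, which holds for a Hilbert-space norm but not for $\|\cdot\|_{2,\infty}$ or $\|\cdot\|_{\oper}$, since those involve a maximum/supremum and expectation does not commute with $\max$ in the needed direction (a diagonal $2\times2$ example with i.i.d.\ Rademacher entries already violates it for $n=2$). So neither argument, as written, delivers the constant-free bound without an additional ingredient.
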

\begin{proof}[of \Cref{lem:chebyshev}]
    By Chebyshev's inequality, we have
    \begin{align*}
        \PP(\sqrt{n}\|(\PP_n-\PP)\{A\beta\}\|_{2,\infty} > t) &\leq \frac{n\EE[\|(\PP_n-\PP)\{A\beta\}\|_{2,\infty}^2]}{t^2}\\
        &\leq \frac{n\EE[\|A\beta - \EE[A\beta]\|_{2,\infty}^2]}{nt^2} \\
       &= \frac{\EE[\|A\beta \|_{2,\infty}^2]}{t^2}.
    \end{align*}
    Choosing $t=\EE[\|A\beta \|_{2,\infty}^2]^{\frac{1}{2}}\epsilon^{-\frac{1}{2}} $ yields that, with probability at least $1-\epsilon$,
    \begin{align*}
        \sqrt{n}\|(\PP_n-\PP)\{A\beta\}\|_{2,\infty} \leq \epsilon^{-\frac{1}{2}}\EE[\|A\beta \|_{2,\infty}^2]^{\frac{1}{2}} \leq \epsilon^{-\frac{1}{2}}\EE[ \|A\|_{\oper}^2 \|\beta\|_{2,\infty}^2]^{\frac{1}{2}},
    \end{align*}
    which finishes the proof of the first statement.

    Similarly, considering all unit vectors in the unit sphere $\SSS^{r-1}$ (i.e., the set of vector $v\in\RR^r$ such that $\|v\|_2=1$), it holds that
    \begin{align*}
        \PP\left(\sqrt{n}\|(\PP_n-\PP)\{A\}\|_{\oper} > t\right) = \PP\left(\sup_{v\in\SSS^{r-1}}\sqrt{n}\|(\PP_n-\PP)\{Av\}\|_{2} > t\right) \leq \frac{\EE[\|A\|_{\oper}^2]}{t^2}.
    \end{align*}
    The second conclusion follows by choosing $t=(\EE[\|A\|_{\oper}^2]/\epsilon)^{\frac{1}{2}}$.
\end{proof}

\clearpage
\section{Experiment details and extra results}\label{app:ex}

\subsection{Simulation}\label{app:ex-simu}

\paragraph{Convergence rate}

    As in \Cref{thm:DR-linear,thm:DR}, the nuisance functions need to be estimated fast enough such that valid inference can be guaranteed.

    For the estimation of the latent embedding $U$, we acknowledge your valid point regarding sparse loadings in standard factor analysis, which is common in genomics. Our claim relies on using methods appropriate for the data type.
    For sparse count data, as often encountered in single-cell studies, one can use {generalized factor models} (e.g., for Poisson or Bernoulli data), which assume a structure like $g(\EE[Y_{\cC}\mid U]) = U \Gamma^{\top}$, where $g(\cdot)$ is a link function.
    A key advantage is that the loadings matrix $\Gamma$ on the natural parameter scale is not necessarily sparse, even if the observed data is. This allows classical results on (generalized) factor analysis \citep{bai2012statistical,chen2020structured} to apply, under which the desired rate of $\|\hat{U} - U\|_{\fro} = \Op(m^{-1/2})$ can be achieved with a sufficiently large number of surrogate control outcomes ($m$).

    For the nuisance functions, such as $f = \EE[Y \mid U]$, our theory requires a rate of $\|\hat f - f\| = \op(n^{-1/4})$. This rate is readily satisfied by many modern, flexible machine learning estimators under standard assumptions. Below are three concrete examples.
    \begin{enumerate}[label=(\arabic*)]
        \item {Smooth functions:} If the true function $f$ is $s$-smooth and lies in a H\"older class of dimension $d$ (the dimension of $U$), minimax optimal estimators (like local polynomials or kernels) achieve a rate of $\|\hat f - f\| = \Op(n^{-s/(2s+d)})$ \citep{kennedy2022semiparametric}. This rate is faster than $n^{-1/4}$ as long as the dimension is not excessively large relative to the smoothness (specifically, when $d < 2s$).
        
        \item {Sparse high-dimensional functions:} If $f$ is approximately sparse with $s$ active features out of a total of $p$, estimators like the Lasso can achieve a rate of $\|\hat f - f\| = \Op(\sqrt{s\log p /n})$ \citep{kennedy2022semiparametric}. This rate is faster than $n^{-1/4}$ provided the sparsity $s$ is not too large (i.e., when $s = o(\sqrt{n}/\log p)$).
        
        \item {Random forests:} The $L_2$ consistency of random forests has been examined in various studies; see, for example, \citep{biau2012analysis,scornet2015consistency}.
        The rate of convergence is closely related to the minimax rate of $\Op(n^{-2/(q+2)})$ for nonparametric estimation involving $q$ features.
        In a simplified setting, \citet{biau2012analysis} demonstrated that this rate can be improved to $\Op(n^{-0.75/(s+0.75)})$, where $s$ represents the intrinsic dimension, which can be substantially smaller than the total feature dimension $q$.
        This often allows random forests to adapt to lower-dimensional structure and achieve rates faster than $n^{-1/4}$ in practice.
    \end{enumerate}
    Similar guarantees can be established for other flexible methods, such as neural networks, under appropriate regularity conditions.

    Below, we provide a numerical examination of the convergence rate for nuisance estimation using random forests with \ECV. Our findings indicate a $L_2$ convergence rate of approximately $n^{-1/4}$ for both nuisance functions on the simulated data, as illustrated in \Cref{fig:nuisance-est}.
    This supports the appropriate use of doubly robust estimators.

\begin{figure}[!ht]
    \centering
    \includegraphics[width=0.9\linewidth]{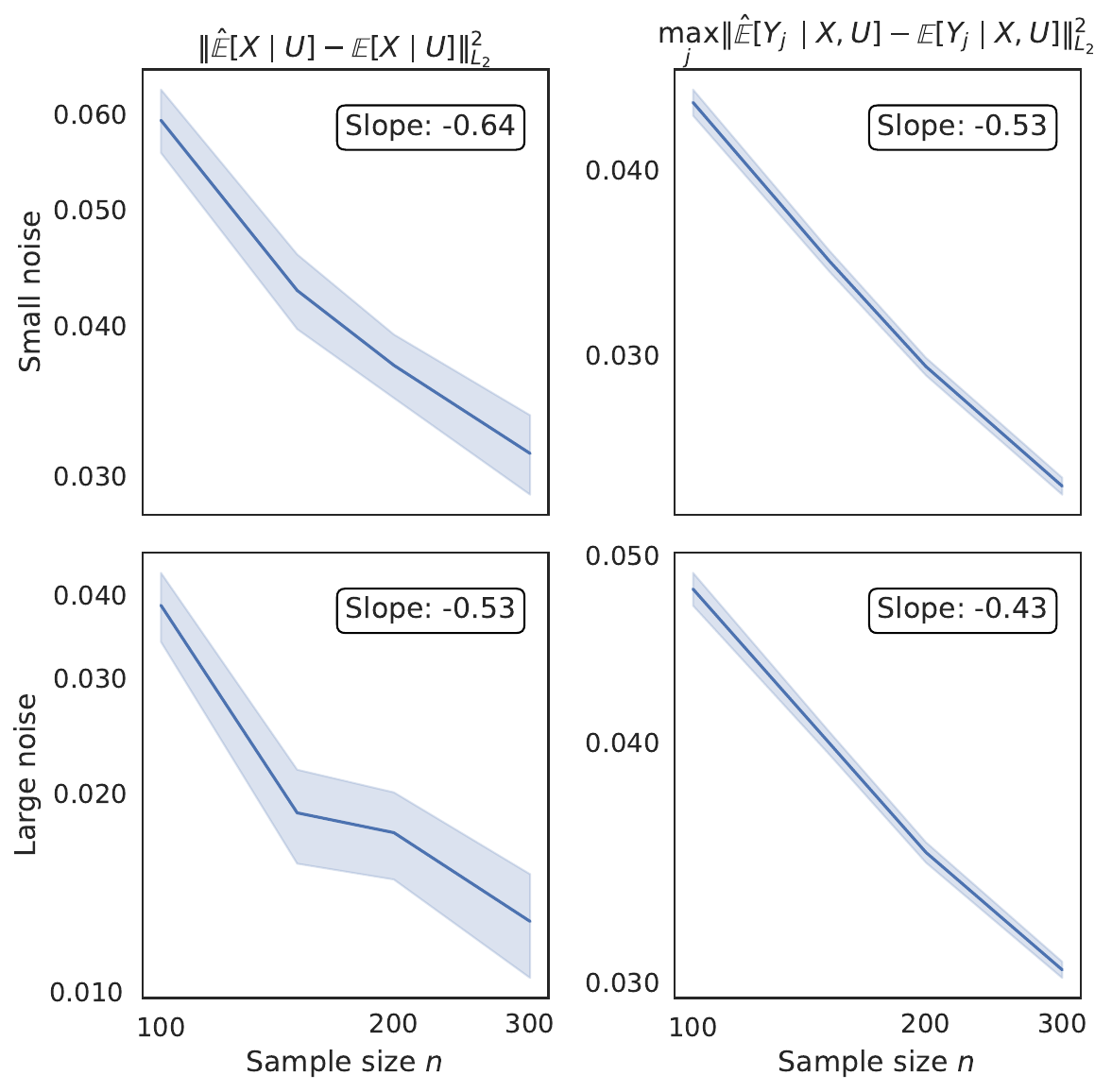}
    \caption{Estimation error of the nuisance regression function on simulated data using random forests.
    The axes are shown on a logarithmic scale, and the slope represents the estimated rate of convergence.
    The data-generating process is given in \Cref{sec:simu}, and we use the true latent embedding $U$ so that the ground truth regression function is computable.
    The errors are computed based on 1000 test observations without irreducible additive noise.
    }
    \label{fig:nuisance-est}
\end{figure}

\paragraph{Misspecification} 
To examine the effect of misspecification of the surrogate control outcomes, we contaminated the negative control set by including 5\%, 10\%, and 20\% non-null genes. Furthermore, we mimicked our real data analysis by using the least variable genes (based on $t$-statistics from GLMs that adjust for observed covariates) as negative controls. 
The results, presented in \Cref{fig:mis}, demonstrate our method's strong robustness. Even with 20\% contamination, the Type I error remains well-controlled with only a slight loss of power. Similarly, when using misspecified controls, the method's performance is nearly identical to the oracle case where the true controls are known.
Collectively, these simulations confirm that our method is highly robust to two realistic forms of negative control misspecification.

    \begin{figure}[!ht]
        \centering
        \includegraphics[width=0.9\linewidth]{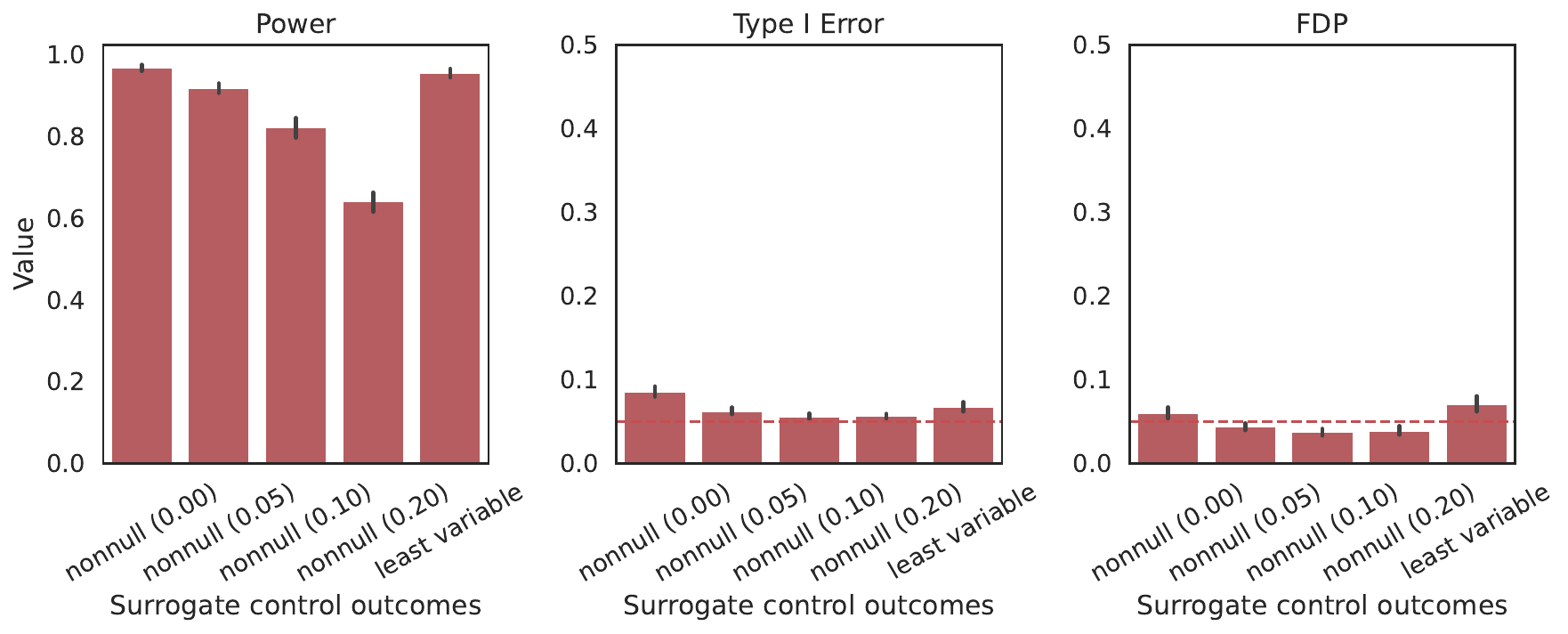}
        \caption{Performance of \PII under misspecification of surrogate controls.
        The setting `nonnull ($p$)' shows results when the control set is contaminated with a specified proportion ($p\in\{0\%, 5\%, 10\%, 20\%\}$) of non-null genes, and the setting `least variable' shows results when controls are selected as the 500 genes with the smallest-magnitude Wald statistics from a naive GLM. 
        The simulation setting and evaluation metrics are identical to those of \Cref{sec:simu}, with noise level $\sigma_{\epsilon}=1$. 
        The method demonstrates strong robustness in both misspecification scenarios.}
        \label{fig:mis}
    \end{figure}

    \paragraph{Comparison between \PII and \GLMoracle}
    In \Cref{fig:simu-1}, we observe that \PII outperforms \GLMoracle in terms of power.
    Our hypothesis is that \PII's advantage stems from its two-step, semiparametric nature, which provides a more stable estimation procedure than the one-step parametric \GLM in the presence of collinearity and model nonlinearity. We can reason through this as follows:

    In a simple linear model $Y=X\beta+U\eta+\epsilon$, the one-step regression ($Y\sim X+U$) and the two-step double-residual regression ($Y-\EE[Y \mid U] \sim X-\EE[X\mid U]$) are known to produce identical point estimates for $\beta$.
    However, the standard error estimates of the two methods satisfy the equation 
    \[(n-p-d)\hat{\sigma}^2_{one-step} = (n-p)\hat{\sigma}^2_{two-step}\] 
    when assuming homoskedasticity \citep[Theorem 2]{ding2021frisch}.
    Note that if allowing for heteroskedasticity, the HC0-type variance estimates of the two coincide \citep[Theorem 2]{ding2021frisch}.
    In other words, the variance estimate from the two-step procedure is smaller than the one from the one-step procedure, which may explain the power gain under linear models.

    In the nonlinear (Logistic) setting, the difference between one-step and two-step procedures can be even bigger because of numerical reasons.
    The \GLM-oracle is a one-step parametric procedure that simultaneously estimates the coefficients for $X$ and $U$. When $X$ and $U$ are highly correlated, the information matrix can become near-singular, leading to unstable estimates and high variance for $\hat{\beta}$, which in turn reduces statistical power.
    Our \PII method, in contrast, is a two-step semiparametric procedure. 
    It first uses a flexible machine learning model (random forests) to nonparametrically estimate and remove the complex, nonlinear influence of $U$ from both $X$ and the outcome $Y$. The final step estimates $\beta$ using only the remaining variation (the residuals). This orthogonalization procedure effectively mitigates the instability caused by collinearity, resulting in a more stable, lower-variance estimator for $\beta$ and consequently, greater statistical power for hypothesis testing.

    In summary, both the difference in variance estimation and numerical stability contribute to better power.
    On the other hand, we also acknowledge that \PII slightly inflates the type-I error, as we observe in the simulation for nonlinear models (\Cref{fig:simu-1}).

\subsection{Real data}\label{app:ex-real-data}
\paragraph{Extended background}
In a recent single-cell CRISPR perturbation study, \citet{lalli2020high} investigated the molecular mechanisms of genes associated with neurodevelopmental disorders, particularly Autism Spectrum Disorder (ASD). Using a modified CRISPR-Cas9 system, they performed gene suppression experiments on 13 ASD-linked genes in the Lund Human Mesencephalic (LUHMES) neural progenitor cells. The experiment comprised 14 groups: 13 treatment groups with individual gene knockdowns and one control group. Single-cell RNA sequencing was employed to assess gene expression changes resulting from each knockdown. The authors estimated a pseudotime trajectory, which approximates the progression of neuronal differentiation.  The analysis of \citet{lalli2020high} suggests that some perturbations cause changes in pseudotime (slow or fast development); see \Cref{fig:pseudotime}.
A scientific question of interest not answered by \citet{lalli2020high} is whether some perturbation explains anything beyond the changes in expression levels caused by cell development.
    
In single-cell CRISPR perturbation experiments, confounding factors can significantly impact the interpretation of results. 
    Unlike controlled experiments, these studies often resemble observational data, where confounding variables such as cell size, cell cycle stage, or microenvironment heterogeneity may influence gene expression patterns.
    These confounders can mask or mimic the effects of the intended genetic perturbations, potentially leading to erroneous conclusions about gene function or regulatory networks. Addressing these confounding issues is crucial for the accurate interpretation of CRISPR perturbation data and for distinguishing true biological effects from technical artifacts.

    To adjust for possible confounding effects, we may take advantage of the multiple surrogate control genes.
    Even though tens of thousands of genes are measured, one typically restricts the differential expression analysis to the top thousands of highly variable genes.
    For the remaining genes with low variations, it is believed that there will not be sufficient power to differentiate the response from the null distribution. 
    But even with low power, it is likely that, in total, one can detect the impact of confounding. 
    For this reason, we use such genes as surrogate control outcomes; even if this choice is incorrect, we still target meaningful statistical estimands, provided that the estimated embedding captures the common variability of all cells under control.
    Alternatively, we can also use housekeeping genes as surrogate control outcomes.
    The main goal here is to demonstrate a practical procedure for post-integrated inference and show that our asymptotic results are reasonably accurate in real data.

\paragraph{Data}
    After filtering out low-quality cells and genes that expressed in less than 10 cells, we retained 8320 cells and 13086 genes under 14 perturbation conditions (including control) from \citet{lalli2020high}.
    Following the routine selection procedure of highly variable genes in genomics \citep{seurat}, we select 4163 genes whose standardized variance is larger than 1, and the last 4000 genes with the lowest standardized variances are treated as surrogate control outcomes.
    The covariates we measured include the logarithm of library sizes, cell cycle scores (`S.Score' and `G2M.Score'), batches (3 categories), and pseudotime states (normalized to range from 0 to 1).
    After one-hot encoding of the categorical features, we have 19 covariates (including 13 perturbation indicators) and 4163 genes for model fitting.
    For each highly variable gene, we aim to test whether its gene expressions vary along the pseudotime state under perturbation conditions.

When restricted to a small subset of significant genes discovered by \PII, their expression levels are visualized as a function of pseudotime states and perturbation conditions in \Cref{fig:pten}.
We observe an increasing trend of the expression and the overexpression in the perturbed group at the very late stage of pseudotime.
The significance suggests that these genes may be affected not only by cell development but also by \emph{PTEN} repression.
\emph{NEFM} is involved in neurite outgrowth and axon caliber \citep{cheung2023neuropathological}, \emph{TUBB2B} and \emph{TUBA1A} encode critical structural subunits of microtubules that are enriched during brain development \citep{jaglin2009mutations}, \emph{HN1} is related to cancer and senescence \citep{jia2019hnrnpa1}.
Given the role of \emph{PTEN} on neural differentiation and related processes, these genes could be affected. Further research would be needed to establish any direct links between \emph{PTEN} repression and the expression or function of these specific genes during neural differentiation.

\paragraph{Sensitivity analyses}
 To assess the self-consistency of our method with respect to its parameters, we perform sensitivity analyses. 
 First, we varied the number of principal components used for the PCA-based embedding, running \PII with 20, 30, and 50 PCs of the surrogate control outcomes. 
 Second, we varied the number of negative controls, using the bottom 3000, 4000, and 5000 lowest-variance genes.

We compare the pairwise Jaccard similarity indices (defined as $J(A,B)= |A\cap B|/|A\cup B|$ for two sets $A$ and $B$) of discoveries under different hyperparameter settings.
From \Cref{fig:sensitivity}, we observe a high Jaccard similarity among different hyperparameters, which indicates high degrees of overlap in the significant genes discovered across these different parameter settings. 
These analyses demonstrate that the discoveries made by our method are highly stable across a reasonable range of user-specified parameters. This reinforces the reliability of the biological conclusions drawn from our data analysis.

\begin{remark}[Hidden confounder versus hidden mediators]\label{rmk:mediator}
    A key example of mediation ($X \rightarrow U \rightarrow Y$) arises when studying a gene perturbation's effect on neuronal differentiation, under the crucial assumption that cells start from a homogeneous developmental stage. If the initial cell states were heterogeneous, the starting state would act as a confounder, influencing both treatment efficacy and final outcome.

    In a synchronized population, the treatment ($X$) is the knockdown of a developmental gene, such as \textit{PTEN}. This perturbation causally alters the cell's subsequent developmental progress. This new differentiation status, captured by the latent embedding, serves as the mediator ($U$). In turn, the change in the cell's internal state ($U$) drives the expression of the outcome ($Y$), such as the late-stage neuronal marker \textit{MAP2}. The effect of the knockdown is therefore mediated through its impact on the cell's progression along the differentiation pathway.

    The causal pathway is as follows: the \textit{PTEN} knockdown ($X$) alters the cell's developmental program, pushing it along the differentiation trajectory ($U$). This change in differentiation status, in turn, drives the expression of the mature neuronal marker ($Y$). Therefore, the effect of the \textit{PTEN} knockdown on \textit{MAP2} expression is mediated through its impact on the cell's overall developmental state.

    Other realistic examples of mediation include a drug treatment ($X$) inducing a cellular stress state ($U$) that leads to apoptosis ($Y$), or a CRISPR knockdown of a cell cycle gene ($X$) altering the cell cycle distribution ($U$), which then affects the expression of phase-specific genes ($Y$).
\end{remark}

\begin{figure}[!ht]
    \centering
    \includegraphics[width=0.8\linewidth]{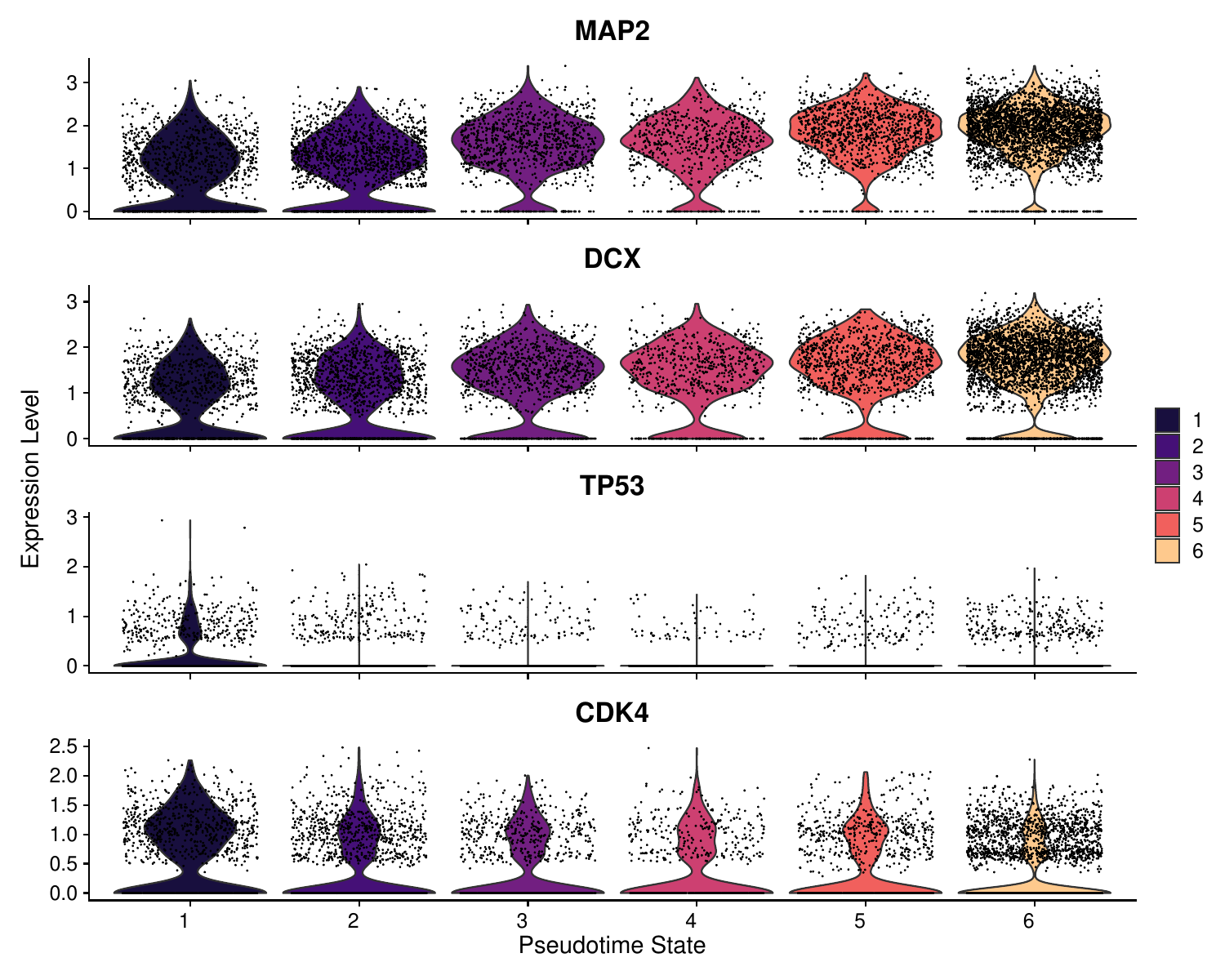}
    \caption{Expression levels of marker genes in different estimated pseudotime states.
    Genes \emph{MAP2} and \emph{DCX} are neuronal markers (expressed in more differentiated cells)
    while genes \emph{TP53} and \emph{CDK4} are progenitor markers (expressed in less differentiated cells).}
    \label{fig:pseudotime}
\end{figure}

\begin{figure}[!ht]
    \centering
    \includegraphics[width=0.8\linewidth]{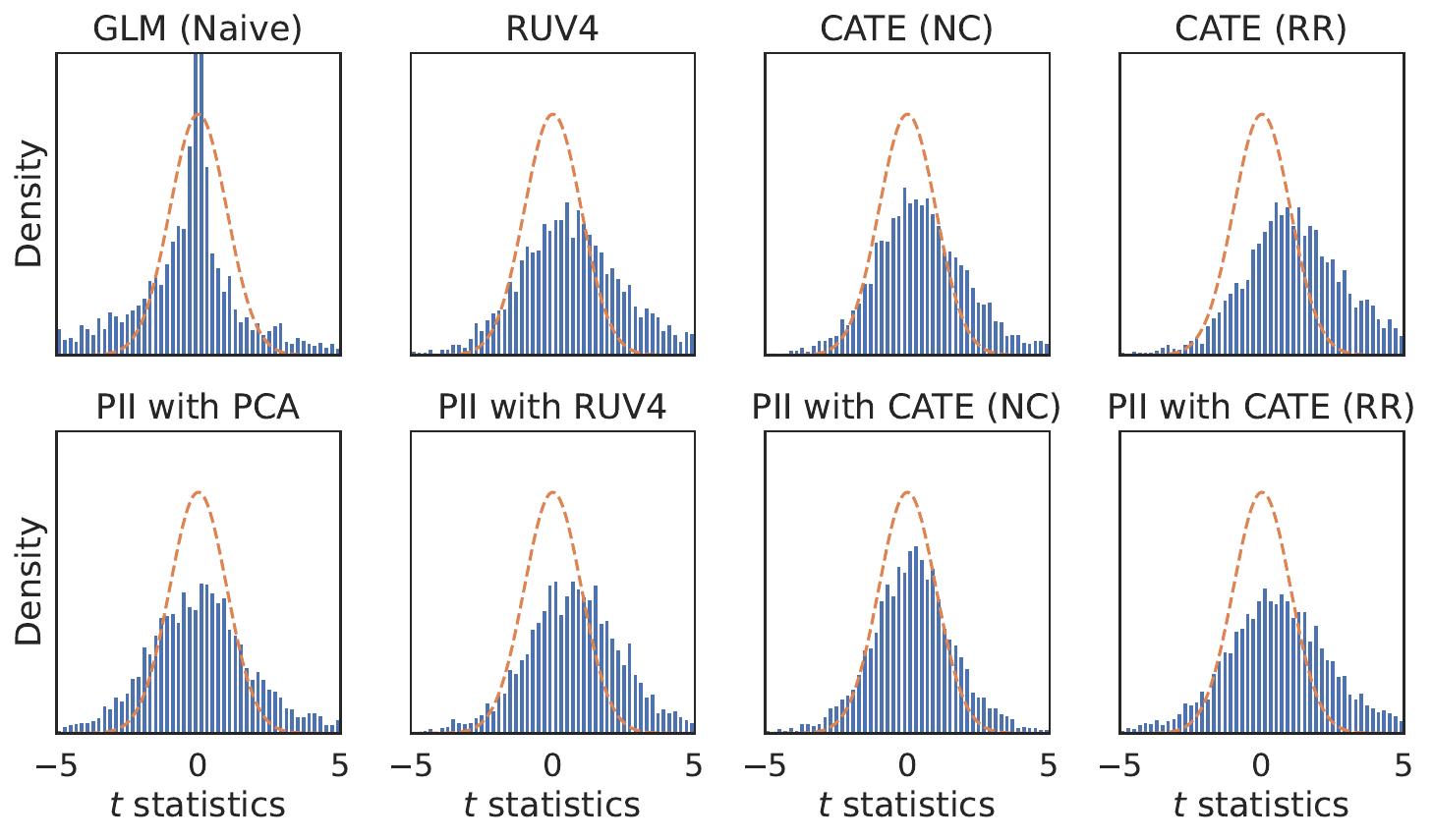}
    \caption{
    Histogram of test statistics for main effects of pseudotime states on the expressions of 4163 genes.
    Many genes are significant because the expression levels are expected to change during neural differentiation.
    }
    \label{fig:stat-pt}
\end{figure}

\begin{figure}[!ht]
    \centering
    \includegraphics[width=\linewidth]{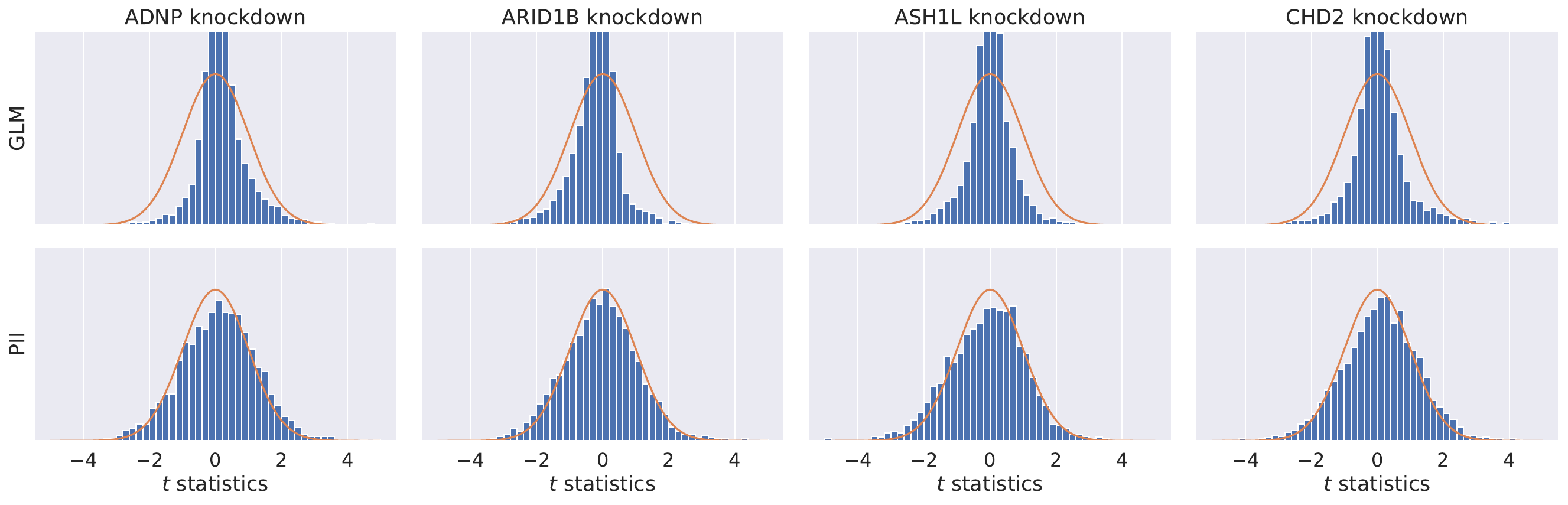}
    \vspace{5mm}
    \includegraphics[width=\linewidth]{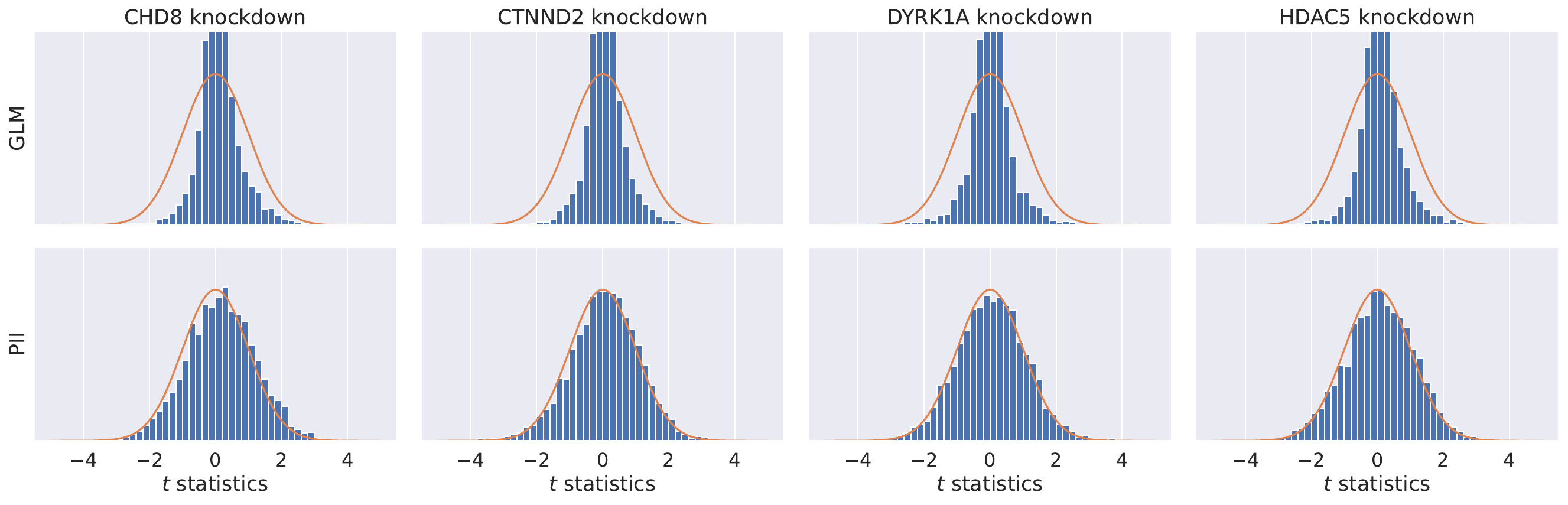}
    \vspace{5mm}
    \includegraphics[width=\linewidth]{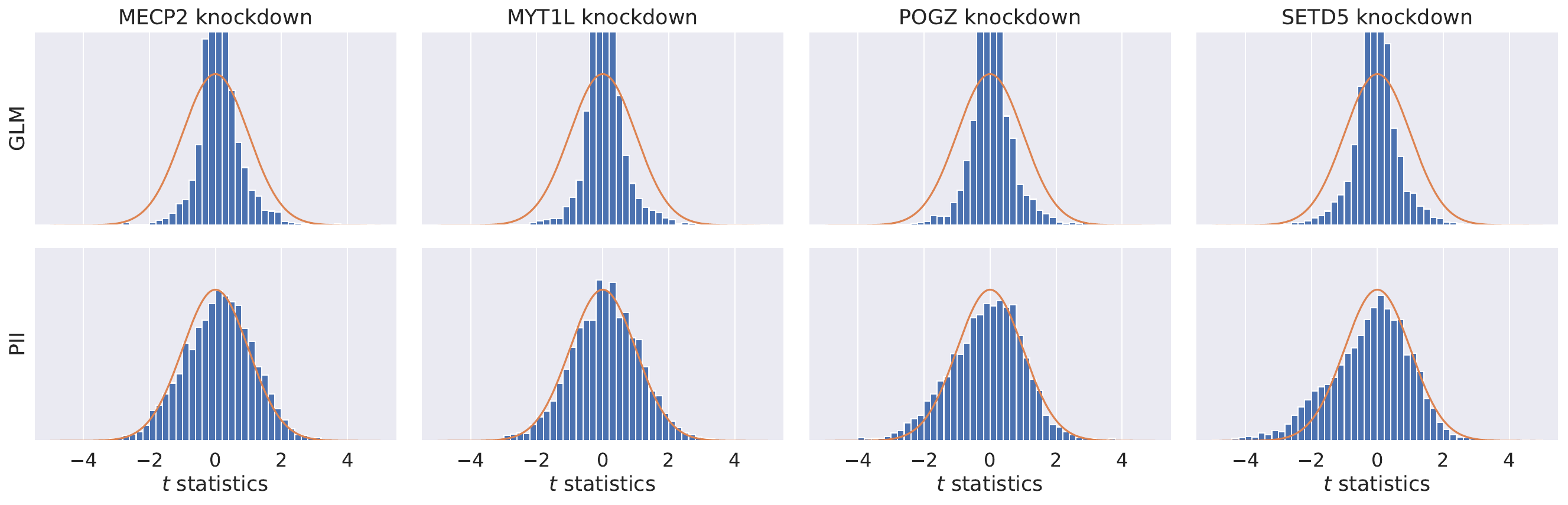}
    \caption{
    Histogram of test statistics on 4163 genes for 12 different perturbation conditions.
    Different rows represent the results of different methods:
    \GLM: Score tests by generalized linear models with Negative Binomial likelihood and log link function.
    The covariance matrix is estimated using the HC3-type robust estimator.
    \PII: The proposed post-integrated inference with 50 principal components as the estimated embeddings.
    }
    \label{fig:stat-all}
\end{figure}

\begin{figure}
        \centering
        \includegraphics[width=0.8\linewidth]{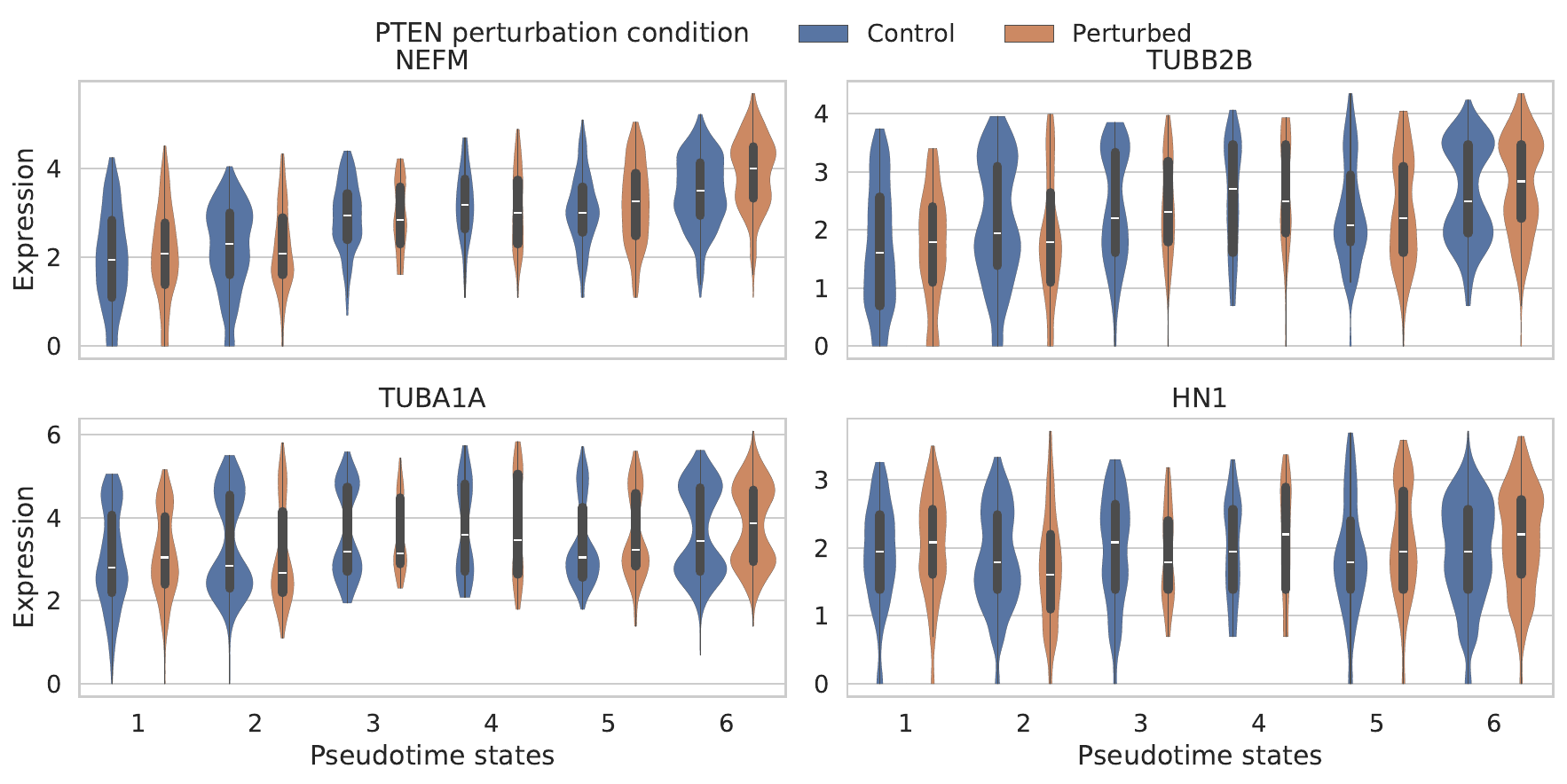}
        \caption{
        Gene expressions of significant genes in the control group and the \emph{PTEN} knockdown group.
        Four genes with positive estimated effect sizes are selected with a p-value threshold of 0.01 for both pseudotime states and \emph{PTEN} knockdown for three \PII methods in \Cref{fig:upset}(b) and a median expression level larger than zero.
        }
        \label{fig:pten}
\end{figure}

\begin{figure}
    \centering
    \includegraphics[width=0.9\linewidth]{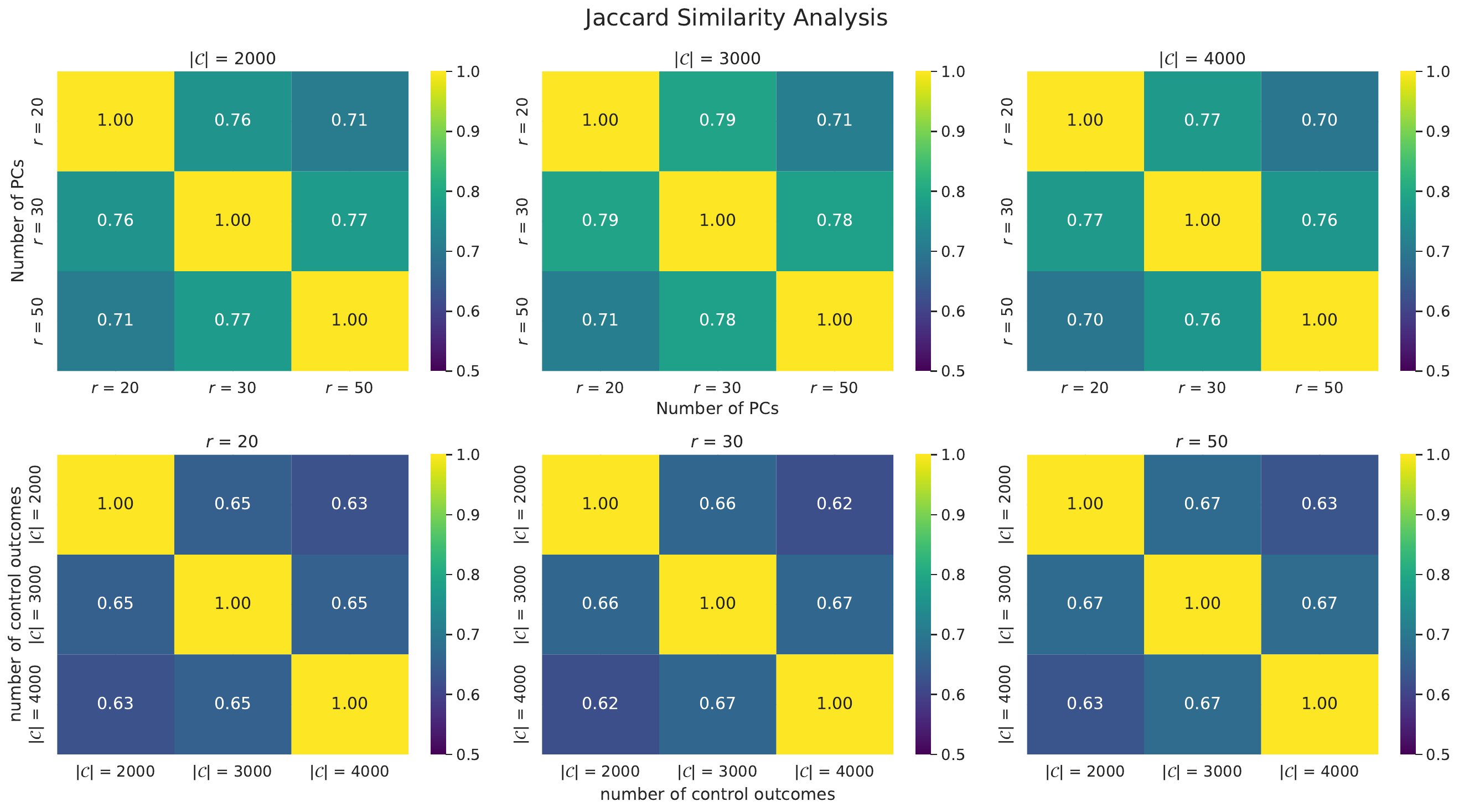}
    \caption{Sensitivity analysis of \PII's discoveries using principal components of surrogate control outcomes. 
    The heatmaps show the Jaccard similarity between sets of significant genes (p-values$ < 0.05$) when varying the number of principal components and the number of negative controls.}
    \label{fig:sensitivity}
\end{figure}

\end{document}